\providecommand{\tabularnewline}{\\}
\theoremstyle{plain}
\newtheorem{prop}{\protect\propositionname}
\newtheorem{prop}{\protect\propositionname}[chapter]
\theoremstyle{remark}
\newtheorem{rem}{\protect\remarkname}
\newtheorem{rem}{\protect\remarkname}[chapter]
\newtheoremstyle{definitionstyle}
{0.2cm}
{0.2cm}
{\it}
{}
{\it\bfseries}
{.}
{ }
{\thmname{#1}\thmnumber{~#2}\thmnote{~(#3)}}
\newtheoremstyle{nameddefinitionstyle}
{\baselineskip\@plus.2\baselineskip\@minus.2\baselineskip}
{\baselineskip\@plus.2\baselineskip\@minus.2\baselineskip}
{}
{}
{\bfseries}
{.}
{ }
{\thmnote{#3}}
\newtheoremstyle{framednameddefinitionstyle}
{0.2cm}
{0.2cm}
{\it}
{}
{\it\bfseries}
{.}
{ }
{\thmnote{#3}}
\newtheoremstyle{theoremstyle}
{0.2cm}
{0.2cm}
{\it}
{}
{\it\bfseries}
{.}
{ }
{\thmname{#1}\thmnumber{~#2}\thmnote{~(#3)}}
\newtheoremstyle{framedtheoremstyle}
{\baselineskip\@plus.2\baselineskip\@minus.2\baselineskip}
{\baselineskip\@plus.2\baselineskip\@minus.2\baselineskip}
{\sl}
{}
{\bfseries}
{.}
{ }
{\thmname{#1}\thmnumber{~#2}\thmnote{~(#3)}}
\newtheoremstyle{proofstyle}
{\baselineskip\@plus.2\baselineskip\@minus.2\baselineskip}
{\baselineskip\@plus.2\baselineskip\@minus.2\baselineskip}
{}
{}
{}
{.}
{ }
{\textsc{\thmname{#1}\thmnote{~#3}}}
\theoremstyle{theoremstyle}
\newtheorem{lem}{Lemma}
\newtheorem{thm}{Theorem}
\theoremstyle{framedtheoremstyle}
\theoremstyle{definitionstyle}
\theoremstyle{definitionstyle}
\theoremstyle{definitionstyle}
\theoremstyle{definitionstyle}
\theoremstyle{nameddefinitionstyle}
\theoremstyle{framednameddefinitionstyle}
\newtheorem*{framednameddef}{}
\theoremstyle{proofstyle}
\theoremstyle{definitionstyle}
\newcommand{\fromto}{\rightarrow}
\newcommand{\ZZZ}{\mathbb{Z}}
\newcommand{\NNN}{\mathbb{N}}
\newcommand{\RRR}{\mathbb{R}}
\newcommand{\CCC}{\mathbb{C}}
\newcommand{\F}{\mathcal{F}}
\DeclareMathOperator{\Aut}{Aut}
\DeclareMathOperator{\kernel}{ker}
\newcommand{\coloneq}{\mathrel{\mathop:}=}
\newcommand{\homotopic}{\simeq}
\newcommand{\isomorphic}{\cong}
\newcommand{\paren}[1]{\left( #1 \right)}
\newcommand{\brackets}[1]{\left[ #1 \right]}
\newcommand{\braces}[1]{\left\{ #1 \right\}}
\def\ind\hspace{0.2in}
\newcommand{\charles}[1]{{\color{red}[Charles: #1]}}
\newcommand{\hsong}[1]{{\color{green}[HaoSong: #1]}}
\providecommand{\propositionname}{Proposition}
\providecommand{\remarkname}{Remark}
\begin{document}

\title{Bosonic Crystalline Symmetry Protected Topological Phases Beyond the Group Cohomology Proposal}
\author{Hao Song}\thanks{These authors contributed equally to this work.}
\affiliation{Departamento de F\'isica Te\'orica, Universidad Complutense, 28040 Madrid, Spain}
\author{Charles Zhaoxi Xiong}\thanks{These authors contributed equally to this work.}
\affiliation{Department of Physics, Harvard University, Cambridge, Massachusetts 02138, USA}
\author{Sheng-Jie Huang}
\affiliation{Department of Physics and Center for Theory of Quantum Matter, University of Colorado, Boulder, Colorado 80309, USA}
\date{\today}

\begin{abstract}
It is demonstrated by explicit construction that three-dimensional bosonic crystalline symmetry protected topological (cSPT) phases are classified by $H_{\phi}^{5}(G;\mathbb{Z})\oplus H_{\phi}^{1}(G;\mathbb{Z})$ for all 230 space groups $G$, where $H^n_\phi(G;\ZZZ)$ denotes 
the $n$th twisted group cohomology of $G$ with $\ZZZ$ coefficients, and $\phi$ indicates that
$g\in G$ acts non-trivially on coefficients by sending them to their inverses if $g$ reverses spacetime orientation and acts trivially otherwise.
The previously known summand $H_{\phi}^{5}(G;\mathbb{Z})$ corresponds only to crystalline phases built without the $E_8$ state or its multiples on 2-cells of space. It is the crystalline analogue of the ``group cohomology proposal'' for classifying bosonic symmetry protected topological (SPT) phases, which takes the form $H_{\phi}^{d+2}(G;\mathbb{Z})\cong H_{\phi}^{d+1}(G;U(1))$ for finite internal symmetry groups in $d$ spatial dimensions. 
The new summand $H_{\phi}^{1}(G;\mathbb{Z})$ classifies possible configurations of $E_8$ states on 2-cells that can be used to build crystalline phases beyond the group cohomology proposal.
The completeness of our classification and the physical meaning of $H_{\phi}^{1}(G;\mathbb{Z})$ are established through a combination of dimensional reduction, surface topological order, and explicit cellular construction. The value of $H_{\phi}^{1}(G;\mathbb{Z})$ can be easily read off from the international symbol for $G$. Our classification agrees with the prediction of the ``generalized cohomology hypothesis,'' which concerns the general structure of the classification of SPT phases, and therefore provides strong evidence for the validity of the said hypothesis in the realm of crystalline symmetries.
\end{abstract}

\maketitle

\section{Introduction}

As the interacting generalization of topological insulators and superconductors \cite{kane2005A, kane2005B, Qi_Hughes_Zhang, Schnyder, Kitaev_TI, Hasan_Kane, Qi_Zhang, Chiu_Teo_Schnyder_Ryu}, symmetry protected topological (SPT) phases \cite{SPT_origin} have garnered considerable interest in the past decade \citep{pollmann2010, Wen_1d, Cirac, Wen_sgSPT_1d, Wen_2d, Wen_Fermion, levin2012, 2dChiralBosonicSPT, chen2013cohomology, Kapustin_Boson, Kapustin_Fermion, Kapustin_equivariant, Freed_SRE_iTQFT, Freed_ReflectionPositivity, Kitaev_Stony_Brook_2011_SRE_1, Kitaev_Stony_Brook_2011_SRE_2, Kitaev_Stony_Brook_2013_SRE, Kitaev_IPAM, Husain, 2dChiralBosonicSPT, 2dChiralBosonicSPT_erratum, 3dBTScVishwanathSenthil, 3dBTScWangSenthil, 3dBTScBurnell, 3dFTScWangSenthil_1, 3dFTScWangSenthil_2, 3dFTScWangSenthil_2_erratum, 2dFermionGExtension, Lan_Kong_Wen_1, Lan_Kong_Wen_2, SOinfty, Else_edge, Jiang_sgSPT, Thorngren_sgSPT, Wang_Levin_invariants, Wang_intrinsic_fermionic, Huang_dimensional_reduction, Lu_sgSPT}. The early studies of SPT phases focused on phases with \emph{internal} symmetries (\emph{i.e.}, symmetries that do not change the position of local degrees of freedom, such as Ising symmetry, $U(1)$ symmetry, and time reversal symmetry). Now it is slowly being recognized \citep{Kitaev_Stony_Brook_2011_SRE_1,Kitaev_Stony_Brook_2013_SRE, Kitaev_IPAM, Xiong, Gaiotto_Johnson-Freyd} that the classification of internal SPT phases naturally satisfies certain axioms which happen to define a well-known structure in mathematics called generalized cohomology \citep{Hatcher,DavisKirk,Adams1,Adams2}. In particular, different existing proposals for the classification of internal SPT phases are simply different examples of generalized cohomology theories.

Ref.\,\citep{Xiong} distilled the above observations regarding the general structure of the classification of SPT phases into a ``generalized cohomology hypothesis.'' It maintained that (a) there exists a generalized cohomology theory $h$ that correctly classifies internal SPT phases in all dimensions for all symmetry groups, and that (b) even though we may not know exactly what $h$ is, meaningful physical results can still be derived from the fact that $h$ is a generalized cohomology theory alone. Indeed, it can be shown, on the basis of the generalized cohomology hypothesis, that three-dimensional bosonic SPT phases with internal symmetry $G$ are classified by $H_{\phi}^{5}(G;\mathbb{Z})\oplus H_{\phi}^{1}(G;\mathbb{Z})$ \cite{Xiong, Gaiotto_Johnson-Freyd},\footnote{The direct sum of two abelian groups is the same as their direct product, but the direct sum notation $\oplus$ is more common for abelian groups in the mathematical literature.} where $H^n_\phi(G;\ZZZ)$ denotes the $n$th twisted group cohomology of $G$ with $\ZZZ$ coefficients, and $\phi$ emphasizes that
$g\in G$ acts non-trivially on coefficients by sending them to their inverses if $g$ reverses spacetime orientation and acts trivial otherwise. The first summand, $H_{\phi}^{5}(G;\mathbb{Z})$,\footnote{For $n=0,1,2,\cdots$ and a compact group $G$, $H_{\phi}^{n+1}(G;\mathbb{Z})$ is isomorphic to the ``Borel group cohmology'' $H_{\text{Borel}, \phi}^{n}(G;U(1))$ considered in Ref.~\cite{Wen_Boson}, which is simply $H_{\phi}^{n}(G;U(1))$ if $G$ is finite.} corresponds to the ``group cohomology proposal'' for the classification of SPT phases \citep{chen2013cohomology}. The second summand, $H_{\phi}^{1}(G;\mathbb{Z})$, corresponds to phases beyond the group cohomology proposal, and are precisely the phases constructed in Ref.\,\cite{decorated_domain_walls} using decorated domain walls. Specifically, in Ref.\,\cite{decorated_domain_walls}, the domain walls were decorated with multiples of the $E_8$ state, which is a 2D bosonic state with quantized thermal Hall coefficient \citep{Kitaev_honeycomb,2dChiralBosonicSPT,2dChiralBosonicSPT_erratum,Kitaev_KITP}.

However, physical systems tend to crystallize. What is the classification of SPT phases if $G$ is a \emph{space-group} symmetry rather than internal symmetry? In the fermionic case, one can incorporate crystalline symmetries by imposing point-group actions on the Brillouin zone before activating interactions \cite{TCI_Fu}. This is obviously not applicable to bosonic systems due to the lack of a Brillouin zone. As a get-around, Refs. \citep{Hermele_torsor, Huang_dimensional_reduction} proposed to build bosonic crystalline SPT (cSPT) phases by focusing on high-symmetry points in the real space rather than momentum space. Concretely, on every high-symmetry line, plane, etc.\,of a $d$-dimensional space with space-group action by $G$, one can put an SPT phase of the appropriate dimensions with an internal symmetry equal to the stabilizer subgroup  of (any point on) the line, plane, etc. In particular, it was shown, for every element of $H_{\phi}^{5}(G;\mathbb{Z})$, that there is a 3D bosonic crystalline SPT phase with space group symmetry $G$ that one can construct. Curiously, the same mathematical object, $H_{\phi}^{5}(G;\mathbb{Z})$, is also the group cohomology proposal for the classification of 3D bosonic SPT phases with \emph{internal} symmetry $G$. A heuristic insight into this apparent correspondence between crystalline and internal SPT phases was provided in Ref.\,\citep{gauging2018}, which drew an analogy between internal gauge fields and a certain notation of crystalline gauge fields. The correspondence was referred to therein as the ``crystalline equivalence principle.''

Just like for internal symmetries, the group cohomology proposal $H_{\phi}^{5}(G;\mathbb{Z})$ does not give the complete classification for crystalline symmetries either. In fact, in the block-state construction of Ref.\,\citep{Huang_dimensional_reduction}, $E_{8}$ state was excluded from being used as a building block for simplicity. Appealing to the crystalline equivalence principle, one might guess that the complete classification of 3D bosonic crystalline SPT phases with space group symmetry $G$ would be $H_{\phi}^{5}(G;\mathbb{Z})\oplus H_{\phi}^{1}(G;\mathbb{Z})$, since that is the classification when $G$ is internal. The recent work \citep{Shiozaki2018} gives us added confidence in this conjecture. In that work, the authors justified the extension of generalized cohomology theory to crystalline symmetries by systematically interpreting terms of a spectral sequence of a generalized cohomology theory as building blocks of crystalline phases. Related discussions along this direction can also be found in Ref.~\citep{Else2018, Song2018}. (To be precise, Ref.~\citep{Shiozaki2018} focused on generalized \emph{homology} theories, but it is highly plausible that that is equivalent to a generalized cohomology formulation via a Poincar\'e duality.)

In this paper, we will conduct a thorough investigation into 3D bosonic cSPT phases protected by any space group symmetry $G$, dubbed $G$-SPT phases for short, and establish that their classification is indeed given by 
\begin{equation}
H_{\phi}^{5}(G;\mathbb{Z})\oplus H_{\phi}^{1}(G;\mathbb{Z}). \label{eq:classification}
\end{equation}
We will see that distinct embedding copies of the $E_8$ state in the Euclidean space $\mathbb E^3$ produce $G$-SPT phases with different $H_{\phi}^{1}(G;\mathbb{Z})$ labels. To obtain the classification of $G$-SPT phases and to understand its physical meaning, we will invoke three techniques: dimensional reduction, surface topological order, and explicit cellular construction. Technically, we will establish
\begin{enumerate}[(a)]
\item every 3D bosonic $G$-SPT phase can be mapped, via a homomorphism, to an element of $H_{\phi}^{1}(G;\mathbb{Z})$,
\item every element of $H_{\phi}^{1}(G;\mathbb{Z})$ can be mapped, via a homomorphism, to a 3D bosonic $G$-SPT phase,
\item the first map is a left inverse of the second, and
\item a 3D bosonic $G$-SPT phase comes from $H_{\phi}^{5}(G;\mathbb{Z})$ if and only if it maps to the trivial element of $H_{\phi}^{1}(G;\mathbb{Z})$.
\end{enumerate}
Therefore, with minor caveats such as the correctness of $H_{\phi}^{5}(G;\mathbb{Z})$ in classifying non-$E_8$-based phases and certain assumptions about the correlation length of short-range entangled (SRE) states, we will be providing an essentially rigorous proof that 3D bosonic cSPT phases are classified by $H_{\phi}^{5}(G;\mathbb{Z})\oplus H_{\phi}^{1}(G;\mathbb{Z})$. 
In addition, we will show that the value of $H_{\phi}^{1}(G;\mathbb{Z})$ can be easily read off from the international symbol for $G$, following this formula:
\begin{eqnarray}
H_{\phi}^{1}(G;\ZZZ)=\begin{cases}
\ZZZ^{k}, & \mbox{\ensuremath{G} preserves orientation},\\
\ZZZ^{k}\times\ZZZ_{2}, & \mbox{otherwise},
\end{cases}\label{formula!}
\end{eqnarray}
where $k=0$ if there is more than one symmetry direction listed in the international symbol, $k=3$ if the international symbol has one symmetry direction listed and it is $1$ or $\overline{1}$, and $k=1$ if the international symbol has one symmetry direction listed and it is not $1$ or $\overline{1}$.

Naturally, we expect that $H_{\phi}^{5}(G;\mathbb{Z})\oplus H_{\phi}^{1}(G;\mathbb{Z})$ also works for more general crystalline symmetries with $H_{\phi}^{1}(G;\mathbb{Z})$ can be easily computed in a similar way. In particular, for magnetic space groups $G$ (which include space groups as a subclass called type I), we still have $H_{\phi}^{1}(G;\mathbb{Z})=\ZZZ^{k}\times \ZZZ_{2}^{\ell}$ with $k\in\mathbb{Z}$ and $\ell\in\{0,1\}$ only depending on the associated magnetic point group. For magnetic group $G$ of type II or type IV, $H_{\phi}^{1}(G;\mathbb{Z})$ is simply $\mathbb{Z}_{2}$, while $k$ and $\ell$ can be read off from the associated magnetic point group  type according to Table~\ref{tab:mg} for $G$ of type III.

This paper is organized as follows. In Sec.\,\ref{sec:prediction}, we will explain how $H_{\phi}^{5}(G;\mathbb{Z})\oplus H_{\phi}^{1}(G;\mathbb{Z})$ arises from the generalized cohomology hypothesis. In Sec.~\ref{sec:examples}, we will present examples of cSPT phases described by $H_{\phi}^{1}(G;\mathbb{Z})$ for select space groups. In Sec.~\ref{sec:reduction_and_construction}, we will establish that 3D bosonic cSPT phases are classified by $H_{\phi}^{5}(G;\mathbb{Z})\oplus H_{\phi}^{1}(G;\mathbb{Z})$ in full generality. In Sec.~\ref{sec:discussions}, we will discuss possible generalizations and conclude the paper. There are three appendices to the paper. In Appendix~\ref{sec:comp_H1}, we will prove formula (\ref{formula!}) and tabulate $H_{\phi}^{1}(G;\ZZZ)$ for all 230 space groups. Moreover, we generalize the calculation to magnetic space groups as well. In Appendix~\ref{sec:stacking}, we review the stacking operation on SPT phases. In Appendix~\ref{sec:GCH}, we review the generalized cohomology hypothesis.

Throughout this paper, we will use the term SPT phases with symmetry $G$ to mean all invertible topological phases with that symmetry \cite{Kitaev_Stony_Brook_2011_SRE_2, Kitaev_Stony_Brook_2013_SRE, Kapustin_Boson, Freed_SRE_iTQFT, Freed_ReflectionPositivity, McGreevy_sSourcery, Xiong, Xiong_Alexandradinata}. Therefore, an SPT phase may or may not be trivializable by breaking the symmetry, in contrast to an SPT phase in the traditional sense \cite{SPT_origin}.

\section{Prediction by generalized cohomology hypothesis\label{sec:prediction}}

To pave the way for us to generally construct and completely classify
3D $E_{8}$-based cSPT phases, let us make an prediction
for what the classification of these phases might be using the generalized cohomology hypothesis \cite{Xiong, Xiong_Alexandradinata}. The generalized cohomology hypothesis was based on Kitaev's proposal \citep{Kitaev_Stony_Brook_2011_SRE_1,Kitaev_Stony_Brook_2013_SRE,Kitaev_IPAM}
that the classification of SPT phases must carry the structure of
generalized cohomology theories \citep{Hatcher,DavisKirk,Adams1,Adams2}.
This proposal was further developed in Refs.\citep{Xiong, Xiong_Alexandradinata, Gaiotto_Johnson-Freyd}.

The key idea here is that the classification of SPT phases
can be encoded by a sequence $F_{\bullet}=\{F_{d}\}$ of topological spaces, 
\begin{equation}
F_{0},F_{1},F_{2},F_{3},F_{4},\ldots
\end{equation}
where $F_{d}$ is the space made up of all $d$-dimensional short-range
entangled (SRE) states. It can be argued \cite{Kitaev_Stony_Brook_2011_SRE_1, Kitaev_Stony_Brook_2013_SRE, Kitaev_IPAM, Xiong, Gaiotto_Johnson-Freyd} that the spaces $F_{d}$
are related to each other: the $d$-th space is homotopy equivalent
to the loop space \citep{Hatcher} of the $(d+1)$st space, 
\begin{equation}
F_{d}\homotopic\Omega F_{d+1}.\label{loop_relation}
\end{equation}
Physically, this says that there is a correspondence between $d$-dimensional
SRE states and one-parameter families of $(d+1)$-dimensional SRE
states.

To state how the sequence $F_{\bullet}$ determines the classification of SPT
phases, let us introduce a homomorphism, 
\begin{equation}
\phi:G\fromto\braces{\pm1},\label{phi_hom}
\end{equation}
that tracks which elements of the symmetry group $G$ preserve the
orientation of spacetime (mapped to $+1$) and which elements do not
(mapped to $-1$). As discussed in Appendix~\ref{sec:stacking}, the collections of $G$-SPT phases (\emph{i.e.}, topological phase protected by symmetry $G$) in $d$ dimensions are classified by the Abelian group of $G$-SPT orders
\begin{equation}
\text{SPT}^d(G),
\end{equation}
whose addition operation is defined by stacking. It conjectured that the group structure of $\text{SPT}^d(G)$ can be obtained by computing the mathematical object
\begin{equation}
h_{\phi}^d(G; F_{\bullet})\coloneq\brackets{EG,\Omega F_{d+1}}_{G}.\label{hypothesis_expression}
\end{equation}
Here, $EG$ is the total space of the universal principal $G$-bundle
\citep{AdemMilgram}, and $\brackets{EG,\Omega F_{d+1}}_{G}$ denotes
the set of deformation classes of $G$-equivariant maps from the space
$EG$ to the space $\Omega F_{d+1}$ (see Appendix\,\ref{sec:GCH} for
detail). Explicitly, the generalized cohomology hypothesis states that we have an isomorphism
\begin{equation}
\text{SPT}^d(G) \isomorphic h_{\phi}^d(G; F_{\bullet}). \label{iso_2}
\end{equation}

To compute (\ref{hypothesis_expression}), we note by definition that
the 0th homotopy group of $F_{d}$, 
\begin{equation}
\pi_{0}(F_{d}),
\end{equation}
(\emph{i.e.}, the set of connected components of $F_{d}$) classifies $d$-dimensional
invertible topological orders (\emph{i.e.}, SPT phases without symmetry).
In 0, 1, 2, and 3 dimensions, the classification of invertible topological
orders is believed to be \citep{Kitaev_Stony_Brook_2011_SRE_1,Kitaev_Stony_Brook_2013_SRE,Kitaev_IPAM, Xiong, Gaiotto_Johnson-Freyd}
\begin{equation}
\pi_{0}(F_{0})=0,~\pi_{0}(F_{1})=0,~\pi_{0}(F_{2})=\ZZZ,~\pi_{0}(F_{3})=0,\label{classification_iTO}
\end{equation}
respectively, where the $\ZZZ$ in 2 dimensions is generated by the
$E_{8}$ phase \citep{Kitaev_honeycomb,2dChiralBosonicSPT,2dChiralBosonicSPT_erratum,Kitaev_KITP}.
Next, we note that 0-dimensional SRE states are nothing but rays in
Hilbert spaces. These rays form the infinite-dimensional complex projective
space \citep{Hatcher}, so 
\begin{equation}
F_{0}=\CCC P^{\infty}.\label{F0}
\end{equation}
Finally, we note, as a consequence of Eq.\,(\ref{loop_relation}),
that the $(k+1)$st homotopy group of $F_{d+1}$ is the same as the
$k$-th homotopy group of $F_{d}$ for all $k$ and $d$: 
\begin{equation}
\pi_{k}(F_{d})\isomorphic\pi_{k+1}(F_{d+1}).
\end{equation}
This allows us to determine all homotopy groups of $F_{1},F_{2},F_{3}$
from the classification of invertible topological orders (\ref{classification_iTO})
and the known space of 0D SRE states (\ref{F0}). The results are
shown in Table \ref{table:homotopy_groups}.

\begin{table}[t]
\caption{Homotopy groups of the space $F_{d}$ of $d$-dimensional SRE states,
for $0\protect\leq d\protect\leq3$.}
\label{table:homotopy_groups} %
\begin{tabular}{c|cccc}
$\pi_{>5}$  & 0  & 0  & 0  & 0 \tabularnewline
$\pi_{5}$  & 0  & 0  & 0  & $\ZZZ$ \tabularnewline
$\pi_{4}$  & 0  & 0  & $\ZZZ$  & 0 \tabularnewline
$\pi_{3}$  & 0  & $\ZZZ$  & 0  & 0 \tabularnewline
$\pi_{2}$  & $\ZZZ$  & 0  & 0  & 0 \tabularnewline
$\pi_{1}$  & 0  & 0  & 0  & $\ZZZ$ \tabularnewline
$\pi_{0}$  & 0  & 0  & $\ZZZ$  & 0 \tabularnewline
\hline 
~  & $F_{0}$  & $F_{1}$  & $F_{2}$  & $F_{3}$ \tabularnewline
\end{tabular}
\end{table}
It turns out the homotopy groups in Table \ref{table:homotopy_groups}
completely determine the space $F_{1}$, $F_{2}$, $F_{3}$ themselves
\citep{Xiong}: 
\begin{eqnarray}
F_{1} & = & K(\ZZZ,3),\label{F1}\\
F_{2} & = & K(\ZZZ,4)\times\ZZZ,\label{F2}\\
F_{3} & = & K(\ZZZ,5)\times\mathbf{S}^{1},\label{F3}
\end{eqnarray}
where $K(\ZZZ,n)$ is the $n$-th Eilenberg-MacLane space of $\ZZZ$
{[}defined by the property $\pi_{k}\paren{K(\ZZZ,n)}=\ZZZ$ for $k=n$
and $0$ otherwise{]} \citep{Hatcher}. Plugging Eqs.\,(\ref{F0})(\ref{F1})(\ref{F2})(\ref{F3})
into Eqs.\,(\ref{hypothesis_expression})(\ref{iso_2}), we arrive at the prediction 
\begin{eqnarray}
\text{SPT}^0(G) & \isomorphic & H_{\phi}^{2}(G;\ZZZ),\\
\text{SPT}^1(G) & \isomorphic & H_{\phi}^{3}(G;\ZZZ),\\
\text{SPT}^2(G) & \isomorphic & H_{\phi}^{4}(G;\ZZZ)\oplus H_{\phi}^{0}(G;\ZZZ), \label{2D_prediction}\\ 
\text{SPT}^3(G) & \isomorphic & H_{\phi}^{5}(G;\ZZZ)\oplus H_{\phi}^{1}(G;\ZZZ),\label{3D_prediction}
\end{eqnarray}
where $H_{\phi}^{n}(G;\ZZZ)$ denotes the $n$-th twisted group cohomology of $G$ with coefficient $\ZZZ$ and twist $\phi$ \citep{AdemMilgram}.
For finite or compact groups, we have $H_{\phi}^{5}(G;\ZZZ)\isomorphic H_{{\rm Borel}, \phi}^{4}(G;U(1))$;
we identify this as the contribution from the group cohomology proposal
\citep{Wen_Boson} to the 3D classification (\ref{3D_prediction}).
The existence of $H_{\phi}^{1}(G;\ZZZ)$ in Eq.\,(\ref{3D_prediction}),
on the other hand, can be traced back to the fact that $\pi_{0}(F_{2})=\ZZZ$
in Eq.\,(\ref{classification_iTO}); we identify it as the contribution
of $E_{8}$-based phases \citep{Kitaev_honeycomb, 2dChiralBosonicSPT,2dChiralBosonicSPT_erratum, Kitaev_KITP}
to the 3D classification. Therefore, we predict that 3D bosonic cSPT
phases built from $E_{8}$ states are classified by 
\begin{equation}
H_{\phi}^{1}(G;\ZZZ)
\end{equation}
(up to some non-$E_8$-based phases),
where $G$ is the space group and $\phi:G\fromto\braces{\pm1}$ keeps
track of which elements of $G$ preserve/reverse the orientation [see
Eq.\,(\ref{phi_hom})]. 

$H_{\phi}^{1}(G;\ZZZ)$ can be intuitively thought of as the set of
``representations of $G$ in the integers.'' Explicitly,
an element of $H_{\phi}^{1}(G;\ZZZ)$ is represented by a map (called
a group 1-cocycle) 
\begin{equation}
\nu^{1}:G\fromto\ZZZ\label{cochain}
\end{equation}
satisfying the cocycle condition
\begin{equation}
\nu^{1}(g_{1}g_{2})=\nu^{1}(g_{1})+\phi(g_{1})\nu^{1}(g_{2})\label{cocycle_condition}
\end{equation}
for all $g_{1},g_{2}\in G$. Suppose $\phi$ was trivial for the moment
(mapping all elements to $+1$). Then we would have $\nu^{1}(g_{1}g_{2})=\nu^{1}(g_{1})+\nu^{1}(g_{2})$,
which is precisely the axiom $\rho(g_{1}g_{2})=\rho(g_{1})\rho(g_{2})$
for a representation $\rho$ of $G$, written additively as opposed
to multiplicatively. Now if we allowed $\rho$ to be antilinear, then
we would have the modified condition $\rho(g_{1}g_{2})=\rho(g_{1})\overline{\rho(g_{2})}$
(overline denoting complex conjugation) for all $g_{1}$'s that are
represented antilinearly. The analogue of this for $\nu^{1}$ is precisely
(\ref{cocycle_condition}). The cohomology group $H_{\phi}^{1}(G;\ZZZ)$ itself is defined to be the quotient of group 1-cocycles
{[}as in Eqs.\,(\ref{cochain})(\ref{cocycle_condition}){]} by what is called group
1-coboundaries. In Appendix\,\ref{sec:comp_H1}, we make the definition of group 1-coboundary explicit and show how one can easily read off $H_{\phi}^{1}(G;\ZZZ)$ from the international symbol of $G$.

\section{Examples of 3D crystalline phases\label{sec:examples}}

Now let us focus on cSPT phases in three spatial dimensions (\emph{i.e.},
$d=3$). For each fixed space group $G$, they (or more precisely $G$-SPT orders) form a Abelian group $\text{SPT}^{3}(G)$ equipped with the stacking operation as explained in Appendix~\ref{sec:stacking}.
To examine whether $G$-SPT phases are classified by $h_{\phi}^{d}\left(G;F_{\bullet}\right)\cong H_{\phi}^{5}\left(G,\mathbb{Z}\right)\oplus H_{\phi}^{1}\left(G,\mathbb{Z}\right)$,
we notice that the first summand $H_{\phi}^{5}\left(G,\mathbb{Z}\right)$ has been already computed in Ref.~\citep{Thorngren_sgSPT}
and matches with the phases built by lower dimensional group cohomology
states explicitly constructed and classified in Ref.~\citep{Huang_dimensional_reduction}.
It was also noticed that distinct cSPT phases can be built with $E_{8}$
states \citep{Hermele_torsor}. In the following sections, we are going
to construct such phases systematically for all 230 space groups and
show that they are classified by the second summand $H_{\phi}^{1}\left(G,\mathbb{Z}\right)$.

To clarify the notion of cSPT phases and to motivate the general construction,
let us present explicit examples of cSPT phases for the space groups
$P1\left(1\right)$, $P\overline{1}\left(2\right)$, $Pm\left(6\right)$,
$Pc\left(7\right)$, $Pmm2\left(25\right)$, and $Pmmm\left(47\right)$.
Here, for the reader's convenience, the sequential number
of a space group (as in the \emph{International Tables for Crystallography}~\citep{ITA2006}) are provided
in the parentheses after its Hermann-Mauguin symbol.

\subsection{Space group $P1\left(1\right)$}

The space group $P1$ contains only translation symmetries. In
proper coordinates, $P1$ is generated by 
\begin{align}
t_{x} & :\left(x,y,z\right)\mapsto\left(x+1,y,z\right),\label{eq:tx}\\
t_{y} & :\left(x,y,z\right)\mapsto\left(x,y+1,z\right),\label{eq:ty}\\
t_{z} & :\left(x,y,z\right)\mapsto\left(x,y,z+1\right).\label{eq:tz}
\end{align}
As an abstract group, it is isomorphic to $\mathbb{Z}\times\mathbb{Z}\times\mathbb{Z}\equiv\mathbb{Z}^{3}$.
Thus, we have 
\begin{align}
H_{\phi}^{5}\left(P1,\mathbb{Z}\right) & =0,\\
H_{\phi}^{1}\left(P1,\mathbb{Z}\right) & =\mathbb{Z}^{3}.
\end{align}
The latter can be identified with layered $E_{8}$ states.

For instance, we can put a copy of $E_{8}$ state on each of the planes
$y=\cdots,-2,-1,0,1,2,\cdots$ as in Fig.~\ref{fig:layerE8_y}. Since
each $E_{8}$ state can be realized with translation symmetries $t_{x}$
and $t_{z}$ respected, the layered $E_{8}$ states can be made to
respect all three translations and thus realize a 3D cSPT phase with
$P1$ symmetry.

\begin{figure}
\includegraphics[width=0.75\columnwidth]{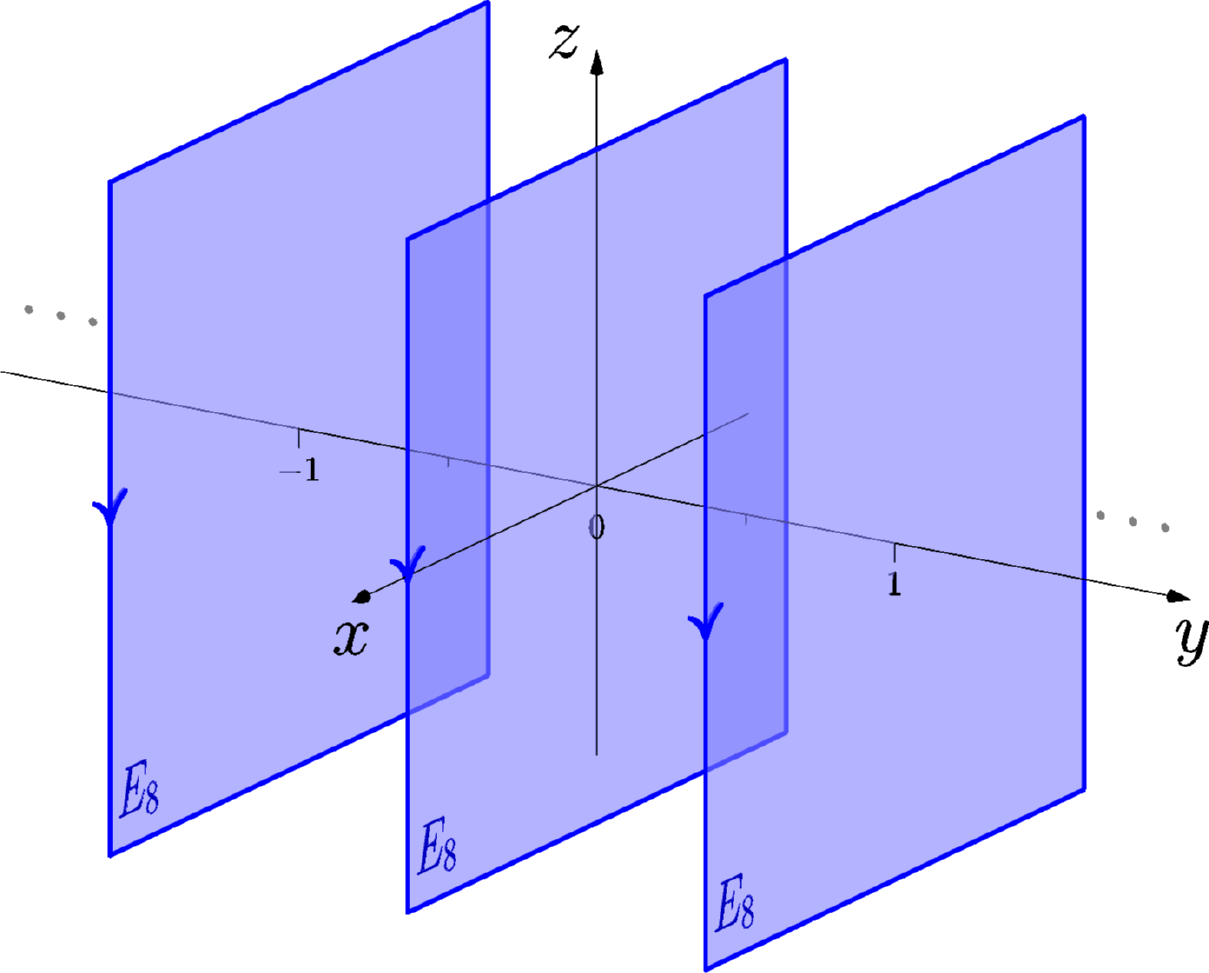}

\caption{Layered $E_{8}$ states in the $t_{y}$ direction.}

\label{fig:layerE8_y} 
\end{figure}
To characterize this cSPT phase, we take a periodic boundary condition
in $t_{y}$ direction requiring $t_{y}^{L_{y}}=1$. Such a procedure
is called a compactification in the $t_{y}$ direction, which is well-defined
for any integer $L_{y}\gg1$ in general for a gapped system with a
translation symmetry $t_{y}$. The resulting model has a finite thickness in the $t_{y}$ direction and thus can be viewed as a 2D system
extending in the $t_{x}$, $t_{z}$ directions. Further, neglecting
the translation symmetries $t_{x}$ and $t_{z}$, we take an open
boundary condition of the 2D system. Then its edge supports $8L_{y}$
co-propagating chiral boson modes (chiral central charge $c_{-}^{y}=8L_{y}$).
The resulting quantized thermal Hall effect is proportional to $L_{y}$
and shows the nontriviality of the layered $E_{8}$ states as a cSPT
phase with the $P1$ symmetry.

Actually, even without translation symmetries, we cannot trivialize
such a system into a tensor product state with local unitary gates
with a universal finite depth homogeneously in space. However, such a state is trivial in a weaker
sense: if the system has correlation length less than $\xi>0$, then
any ball region of size much larger than $\xi$ can be trivialized
with correlation length kept smaller than $\xi$. We call such a state
\emph{weakly trivial}.  In this paper, the notion of cSPT phase includes
all gapped quantum phases without emergent nontrivial
quasiparticles and particularly these weakly trivial states.

A bosonic SPT system has $c_{-}^{y}=\gamma_{y}L_{y}$ with $\gamma_{y}$
a multiple of $8$ (\emph{i.e.}, $\gamma_{y}\in8\mathbb{Z}$) in general. To
see $\gamma_{y}\in8\mathbb{Z}$, we notice that the net number of
chiral boson modes along the interface between compactified systems
of thicknesses $L_{y}$ and $L_{y}+1$ is $\gamma_{y}$. Then the
absence of anyons in both sides implies $\gamma_{y}\in8\mathbb{Z}$
\citep{KITAEV20062}. It is obvious that $\gamma_{y}$'s are added
during a stacking operation.

Clearly, this cSPT phase of layered $E_{8}$ states is invertible;
its inverse is made of layered $\overline{E_{8}}$ states, where $\overline{E_{8}}$
denotes the chiral twin of $E_{8}$. The edge modes of $\overline{E_{8}}$
propagate in the direction opposite to those of $E_{8}$ and hence we have
$\gamma_{y}=-8$ for the cSPT phase of the layered $\overline{E_{8}}$
states. Via the stacking operation explained in Appendix~\ref{sec:stacking},
all possible $\gamma_{y}\in8\mathbb{Z}$ is generated by the cSPT
phase of layered $E_{8}$ states and its inverse.

Analogously, we can define $\gamma_{x}$ (resp. $\gamma_{z}$) by
compactifying a system in the $t_{x}$ (resp. $t_{z}$) direction.
Thus, the cSPT phases with $P1$ symmetry are classified by $\frac{1}{8}\left(\gamma_{x},\gamma_{y},\gamma_{z}\right)\in\mathbb{Z}^{3}=H_{\phi}^{1}\left(P1,\mathbb{Z}\right)$.
Via the stacking operation, they can be generated by the three cSPT
phases made of layered $E_{8}$ states in the $t_{x}$, $t_{y}$ and
$t_{y}$ directions respectively and their inverses.

\subsection{Space groups $Pm\left(6\right)$, $Pmm2\left(25\right)$, and $Pmmm\left(47\right)$}

To explore possible cSPT phases in the present of reflection symmetries
(\emph{i.e.}, mirror planes), let us look at the space groups $Pm$,
$Pmm2$ and $Pmmm$ as examples.

\subsubsection{Space group $Pm\left(6\right)$\label{subsec:Pm}}

The space group $Pm$ is generated by $t_{x}$, $t_{y}$, $t_{y}$
and a reflection 
\begin{equation}
m_{y}:\left(x,y,z\right)\mapsto\left(x,-y,z\right).
\end{equation}
Thus, the mirror planes are $y=\cdots,-1,-\frac{1}{2},0,\frac{1}{2},1,\cdots$
(\emph{i.e.}, integer $y$ planes and half-integer $y$ planes). For
$Pm$, the group structure of $G$ and its subgroup $G_{0}$ of orientation
preserving symmetries are 
\begin{align}
G & =\mathbb{Z}\times\mathbb{Z}\times\left(\mathbb{Z}\rtimes\mathbb{Z}_{2}^{\phi}\right),\\
G_{0} & =\mathbb{Z}\times\mathbb{Z}\times\mathbb{Z},
\end{align}
where $\mathbb{Z}_{2}^{\phi}$ is the group generated by $m_{y}$
with the superscript $\phi$ emphasizing $\phi\left(m_{y}\right)=-1$.

Theorem~\ref{thm:H1_formula} in Appendix~\ref{sec:comp_H1} computes
the second part of $h_{\phi}^{3}(G,\phi)$ shown in Eq.~(\ref{3D_prediction});
the result is 
\begin{gather}
H_{\phi}^{1}\left(G,\mathbb{Z}\right)=\mathbb{Z}\times\mathbb{Z}_{2}.\label{eq:Pm_h1}
\end{gather}
The $\mathbb{Z}$ factor specifies $\frac{1}{8}\gamma_{y}$ of the
cSPT phases compatible with $Pm$ symmetry; the reflection symmetry
$m_{y}$ requires that $\gamma_{x}=\gamma_{z}=0$. Explicitly, putting
an $E_{8}$ state on each integer $y$ plane produces a phase with
$\gamma_{y}=1$.

\begin{figure}
\includegraphics[width=0.75\columnwidth]{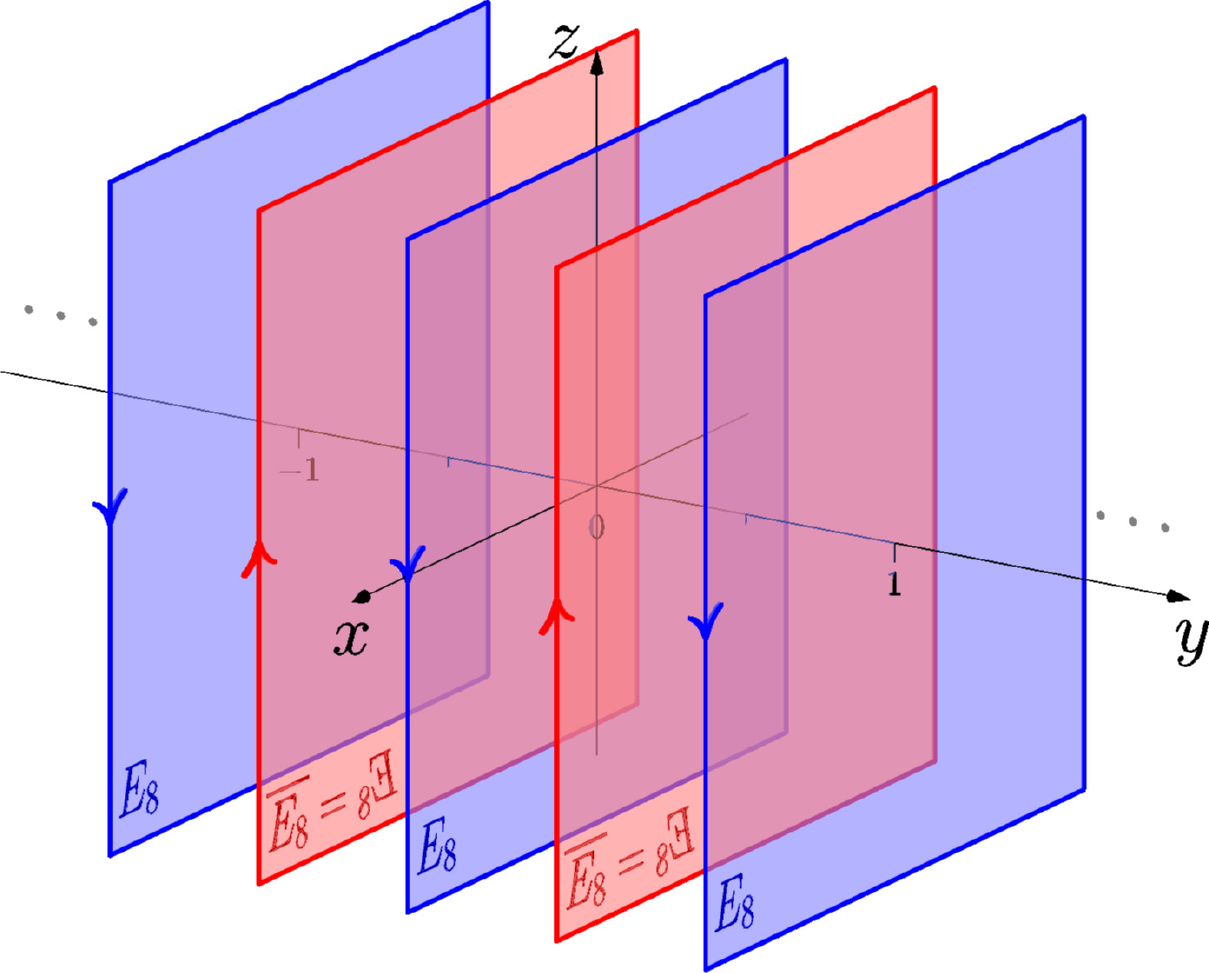}

\caption{Alternately layered $E_{8}$ states in the $t_{y}$ direction.}

\label{fig:layerE8_y2} 
\end{figure}
The factor $\mathbb{Z}_{2}$ in Eq.~(\ref{eq:Pm_h1}) is generated
by the cSPT phase built by putting an $E_{8}$ state at each integer
$y$ plane and its inverse $\overline{E_{8}}$ at each half-integer
$y$ plane, as shown in Fig.~\ref{fig:layerE8_y2}. This construction
was proposed and the resulting phases were studied in Ref.~\citep{Hermele_torsor}.
In particular, the order of this cSPT phase is $2$, meaning that
stacking two copies of such a phase produces a trivial phase. In fact,
we will soon see that this generating phase can be realized by a model
with higher symmetry like $Pmmm$ and is protected nontrivial by any
single orientation-reversing symmetry.

In addition, the other part of $h_{\phi}^{3}\left(G,\phi\right)$ is 
\begin{multline}
H_{\phi}^{5}\left(G;\mathbb{Z}\right)=H_{\phi}^{5}\left(\mathbb{Z}\rtimes\mathbb{Z}_{2}^{\phi};\mathbb{Z}\right)\oplus\left[H_{\phi}^{4}\left(\mathbb{Z}\rtimes\mathbb{Z}_{2}^{\phi};\mathbb{Z}\right)\right]^{2}\\
\oplus H_{\phi}^{3}\left(\mathbb{Z}\rtimes\mathbb{Z}_{2}^{\phi};\mathbb{Z}\right)=\mathbb{Z}_{2}^{2}\oplus0\oplus\mathbb{Z}_{2}^{2}.\label{eq:Pm_h5}
\end{multline}
The three summands correspond to the phases built from group cohomology
SPT phases in 2, 1, and 0 spatial dimensions respectively in Ref.~\citep{cSPT2017}.

For the reader's convenience, let us review the construction briefly
here. The four cSPT phases corresponding to first summand $\mathbb{Z}_{2}^{2}$
in Eq.~(\ref{eq:Pm_h5}) can be built by putting 2D SPT states with
Ising symmetry on mirror planes. We notice that each reflection acts
as an Ising symmetry (\emph{i.e.}, a unitary internal symmetry of order
2) on its mirror plane. Moreover, the space group $Pm$ contains two
families of inequivalent mirrors (\emph{i.e.}, integer $y$ planes
and half-integer $y$ planes). For each family, there are two choices
of 2D SPT phases with Ising symmetry, classified by $H^{4}\left(\mathbb{Z}_{2},\mathbb{Z}\right)\cong H^{3}\left(\mathbb{Z}_{2},U\left(1\right)\right)\cong\mathbb{Z}_{2}$.
Thus, we realize four cSPT phases with group structure $\mathbb{Z}_{2}^{2}$.
Further, noticing the translation symmetries within each mirror, we
can put, to every unit cell of the mirror, a 0D state carrying eigenvalue
$\pm1$ of the corresponding reflection; 0D states with Ising symmetry are classified by their symmetry charges and are formally labeled 
by $H^{2}\left(\mathbb{Z}_{2},\mathbb{Z}\right)\cong H^{1}\left(\mathbb{Z}_{2},U\left(1\right)\right)=\mathbb{Z}_{2}$.
They produce another $\mathbb{Z}_{2}$ factor of cSPT phases associated
with each inequivalent family of mirrors. Thus, we get another four
cSPT phases labeled by the last summand $\mathbb{Z}_{2}^{2}$ in Eq.~(\ref{eq:Pm_h5}).

\begin{figure}
\includegraphics[width=0.6\columnwidth]{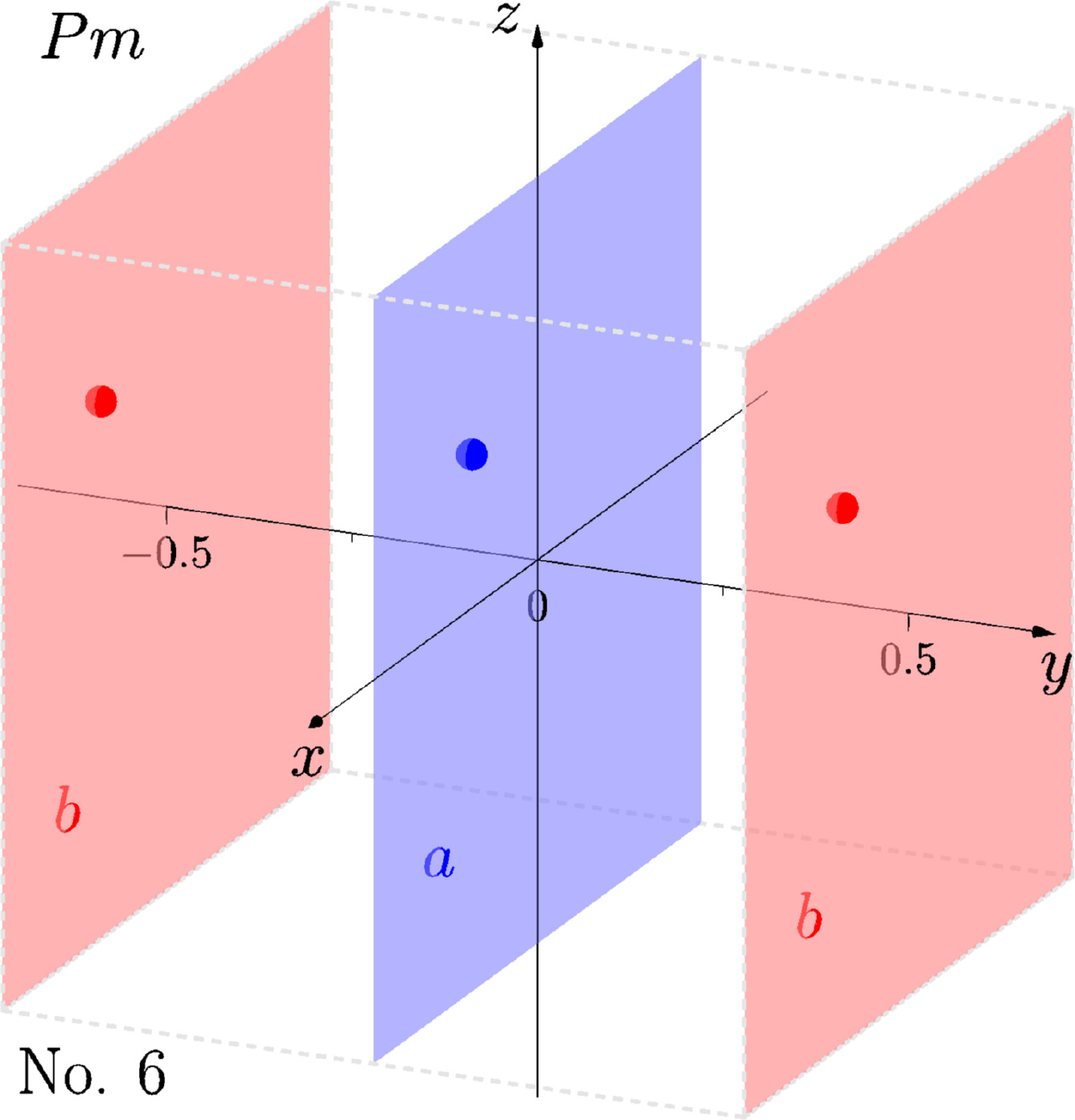}

\caption{For space group No.~6 (\emph{i.e.}, $G=Pm$), there are two inequivalent
family of mirrors (blue and red online), whose Wyckoff position labels
are $a$ and $b$ respectively in Ref.~\citep{ITA2006}. The cSPT
phases corresponding to $H_{\phi}^{5}\left(G,\mathbb{Z}\right)$ are
built by assigning to the mirrors 2D SPT phases protected by Ising
symmetry and $\mathbb{Z}_{2}$ point charges (blue and red dots online). }

\label{fig:Pm_SPT} 
\end{figure}
To summarize, this construction of $H_{\phi}^{5}\left(G,\mathbb{Z}\right)$
phases for the space group $Pm$ can be presented by Fig.~\ref{fig:Pm_SPT}.
It can be generalized to all the other space groups: to each Wyckoff
position, we assign group cohomology SPT phases protected by its site symmetry.
Further technical details can be found in Ref.~\citep{cSPT2017}.

\subsubsection{Space groups $Pmm2\left(25\right)$}

In the following, we will focus on developing a universal construction
of $H_{\phi}^{1}\left(G,\mathbb{Z}\right)$ phases. To get motivated,
let us look at more examples.

The space group $G=Pmm2$ is generated by $t_{x}$, $t_{y}$, $t_{y}$,
and two reflections 
\begin{align}
m_{x}: & \left(x,y,z\right)\mapsto\left(-x,y,z\right),\\
m_{y}: & \left(x,y,z\right)\mapsto\left(x,-y,z\right).
\end{align}
This time, Theorem~\ref{thm:H1_formula} in Appendix~\ref{sec:comp_H1}
tells us that 
\begin{gather}
H_{\phi}^{1}\left(G,\mathbb{Z}\right)=\mathbb{Z}_{2}.\label{eq:Pmm2_h1}
\end{gather}
There is no $\mathbb{Z}$ factor any more as expected, because $m_{x}$
and $m_{y}$ together require $\gamma_{x}=\gamma_{y}=\gamma_{z}=0$.

The cSPT phase with $Pmm2$ symmetry generating $H_{\phi}^{1}\left(G,\mathbb{Z}\right)$
is actually compatible with a higher symmetry $Pmmm$. Hence let us
combine the study on this phase with the discussion of the space group
$Pmmm$ below.

\subsubsection{Space group $Pmmm\left(47\right)$\label{subsec:cSPT_Pmmm}}

\begin{figure}
\includegraphics[width=0.6\columnwidth]{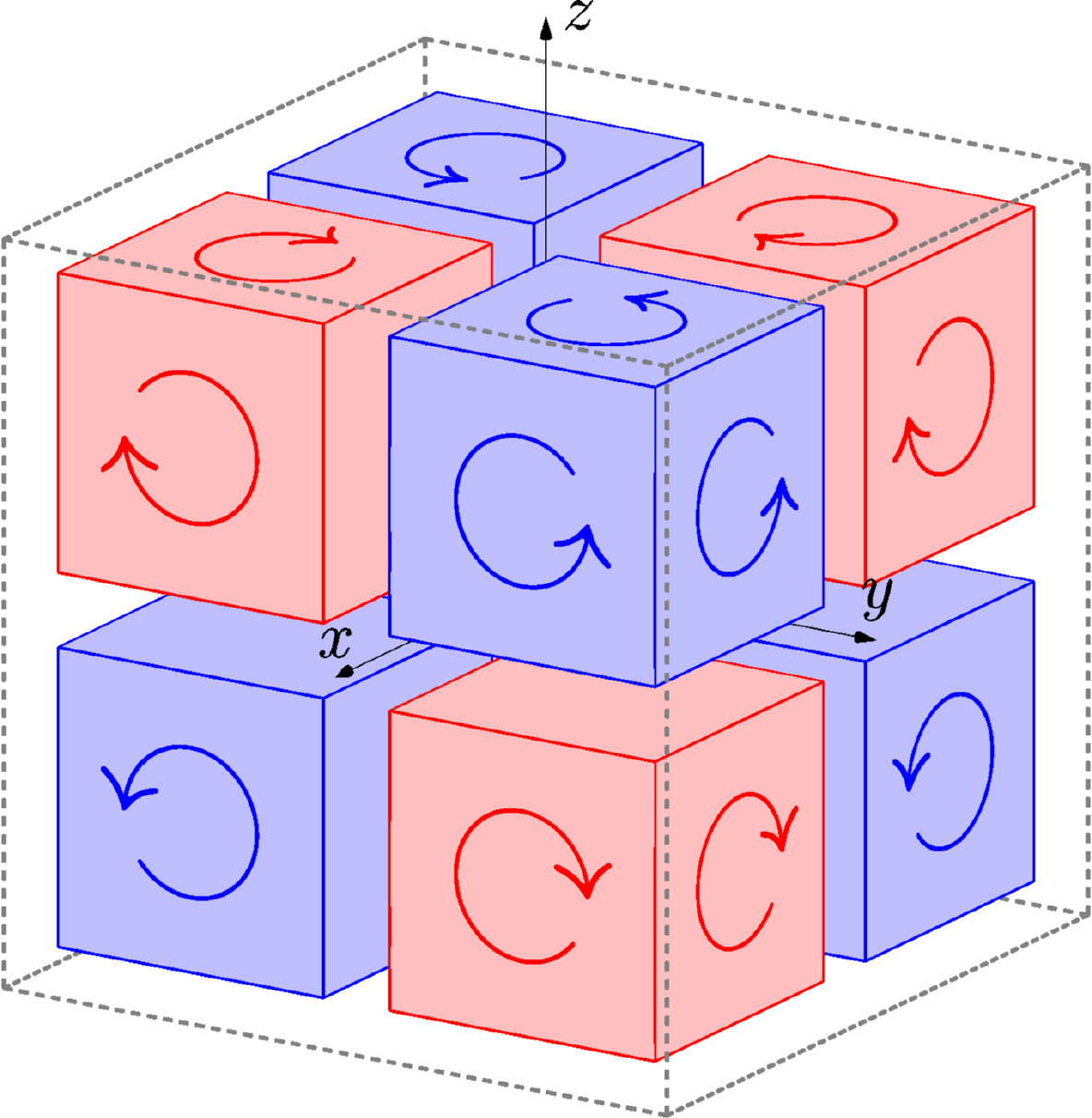}

\caption{A unit cell for the space group $Pmmm\left(47\right)$ is partitioned
into eight cuboids (red and blue online), each of which works as a
fundamental domain. A chiral $\mathfrak{e_{f}m_{f}}$ state is put
on the surface of each fundamental domain; its chirality is indicated
by the arrowed arcs such that all symmetries are respected. On each
interface between two fundamental domains, there are two copies of
$\mathfrak{e_{f}m_{f}}$ states. Let $\left(\mathfrak{e_{f}},\mathfrak{e_{f}}\right)$
(resp. $\left(\mathfrak{m_{f}},\mathfrak{m_{f}}\right)$) denote the
anyon formed by pairing $\mathfrak{e_{f}}$ (resp. $\mathfrak{m_{f}}$)
from each copy. Then the anyons $\left(\mathfrak{e_{f}},\mathfrak{e_{f}}\right)$
and $\left(\mathfrak{m_{f}},\mathfrak{m_{f}}\right)$ can be condensed,
leading to a model that generates (via the stacking operation) cSPT
phases corresponding to $H_{\phi}^{1}\left(G,\mathbb{Z}\right)$ for
$G=Pmmm$.}

\label{fig:sg47_E8}
\end{figure}
The space group $Pmmm$ is generated by $t_{x}$, $t_{y}$, $t_{y}$, 
and three reflections 
\begin{align}
m_{x}: & \left(x,y,z\right)\mapsto\left(-x,y,z\right),\label{eq:mx}\\
m_{y}: & \left(x,y,z\right)\mapsto\left(x,-y,z\right),\label{eq:my}\\
m_{z}: & \left(x,y,z\right)\mapsto\left(x,y,-z\right).\label{eq:mz}
\end{align}
For $G=Pmmm$, Theorem~\ref{thm:H1_formula} in Appendix~\ref{sec:comp_H1}
gives
\begin{gather}
H_{\phi}^{1}\left(G,\mathbb{Z}\right)=\mathbb{Z}_{2}.\label{eq:Pmmm_h1}
\end{gather}
There is no $\mathbb{Z}$ factor as in the case of $Pmm2\left(25\right)$
above.

The cSPT phase generating $H_{\phi}^{1}\left(G,\mathbb{Z}\right)=\mathbb{Z}_{2}$
can be constructed as in Fig.~\ref{fig:sg47_E8}. Step 1: we partition
the 3D space into cuboids of size $\frac{1}{2}\times\frac{1}{2}\times\frac{1}{2}$.
Each such cuboid works as a \emph{fundamental domain}, also know as
an \emph{asymmetric unit} in crystallography \citep{international_tables};
it is a smallest simply connected closed part of space from which,
by application of all symmetry operations of the space group, the
whole of space is filled. Every orientation-reversing symmetry relates
half of these cuboids (blue online) to the other half (red online). Step 2: we attach an $\mathfrak{e_{f}}\mathfrak{\mathfrak{m}_{f}}$
topological state to the surface of each cuboid from inside with all
symmetries in $G$ respected. An $\mathfrak{e_{f}m_{f}}$ topological
state hosts three anyon species, denoted $\mathfrak{e_{f}}$, $\mathfrak{m_{f}}$,
and $\boldsymbol{\varepsilon}$, all with fermionic self-statistics.
Such a topological order can be realized by starting with a $\nu=4$
integer quantum Hall state and then coupling the fermion parity to
a $\mathbb{Z}_{2}$ gauge field in its deconfined phase \citep{KITAEV20062}.
This topological phase exhibits net chiral edge modes under an open
boundary condition. Step 3: there are two copies of $\mathfrak{e_{f}m_{f}}$
states at the interface of neighbor cuboids and we condense $\left(\mathfrak{e_{f}},\mathfrak{e_{f}}\right)$
and $\left(\mathfrak{m_{f}},\mathfrak{m_{f}}\right)$ simultaneously
without breaking any symmetries in $G$, where $\left(\mathfrak{e_{f}},\mathfrak{e_{f}}\right)$
(resp. $\left(\mathfrak{m_{f}},\mathfrak{m_{f}}\right)$) denotes
the anyon formed by pairing $\mathfrak{e_{f}}$ (resp. $\mathfrak{m_{f}}$)
from each copy. After condensation, all the other anyons are confined,
resulting in the desired cSPT phase, denoted $\mathcal{E}$. 

To see that $\mathcal{E}$ generates $H_{\phi}^{1}\left(G,\mathbb{Z}\right)=\mathbb{Z}_{2}$,
we need to check that $\mathcal{E}$ is nontrivial and that $2\mathcal{E}$
(\emph{i.e.}, two copies of $\mathcal{E}$ stacking together) is trivial.
First, we notice that any orientation-reversing symmetry (\emph{i.e.},
a reflection, a glide reflection, an inversion, or a rotoinversion) $g\in G$ is enough
to protect $\mathcal{E}$ nontrivial. Let us consider an open boundary condition of the model, keeping only the fundamental domains enclosed by the surface shown in ~\ref{fig:sg47_E8_STO}. Then the construction in Fig.~\ref{fig:sg47_E8}
leaves a surface of the $\mathfrak{e_{f}m_{f}}$ topological order
respecting $g$. If the bulk cSPT phase is trivial, then the surface
topological order can be disentangled in a symmetric way from the
bulk by local unitary gates with a finite depth. However, this strictly
2D system of the $\mathfrak{e_{f}m_{f}}$ topological order is chiral,
wherein the orientation-reversing symmetry $g$ has to be violated.
This contradiction shows that the bulk is nontrivial by the protection
of $g$.

\begin{figure}
\noindent\begin{minipage}[t]{1\columnwidth}%
\includegraphics[width=0.6\columnwidth]{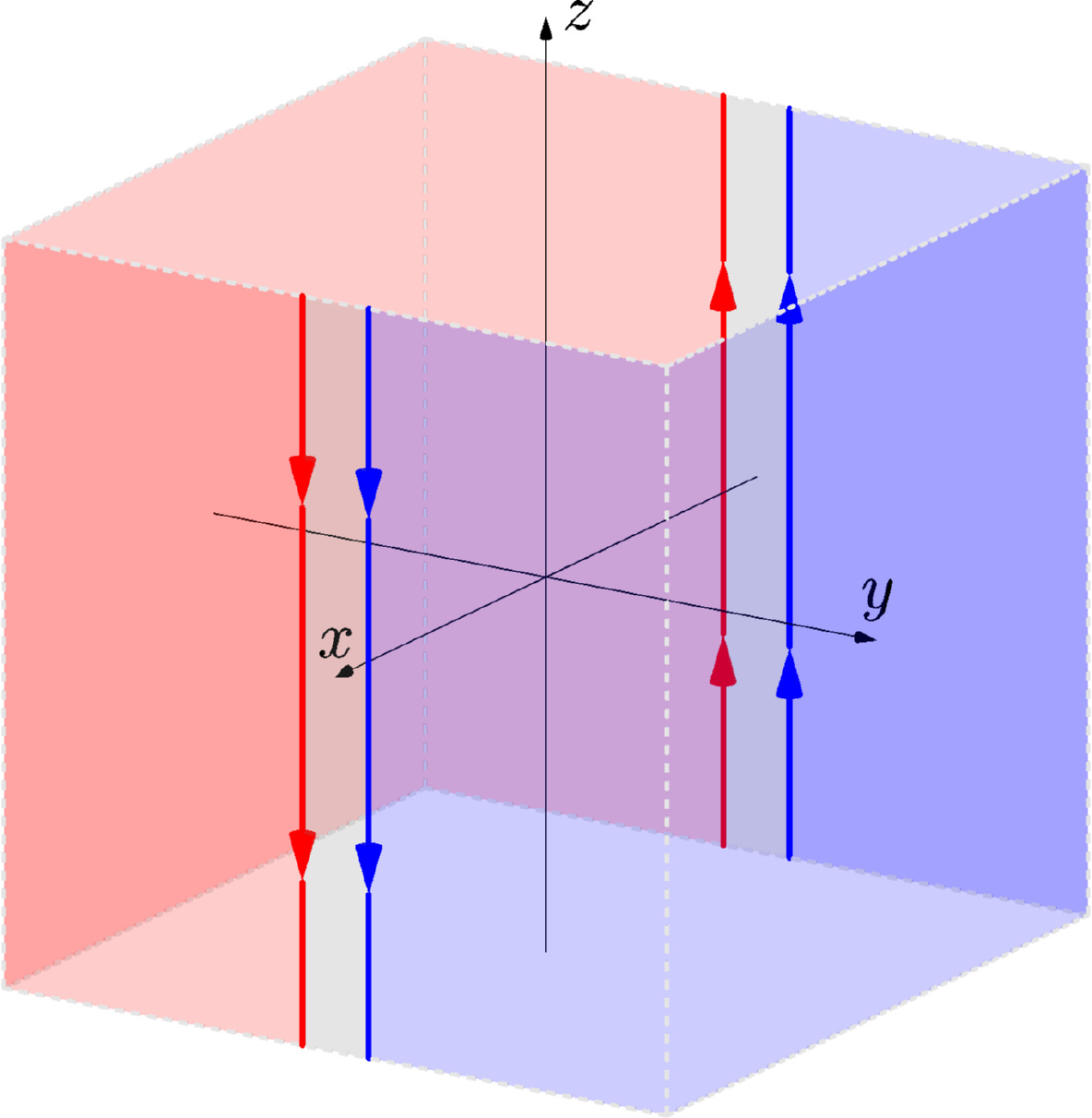}

(a)%
\end{minipage}

\noindent\begin{minipage}[t]{1\columnwidth}%
\includegraphics[width=0.6\columnwidth]{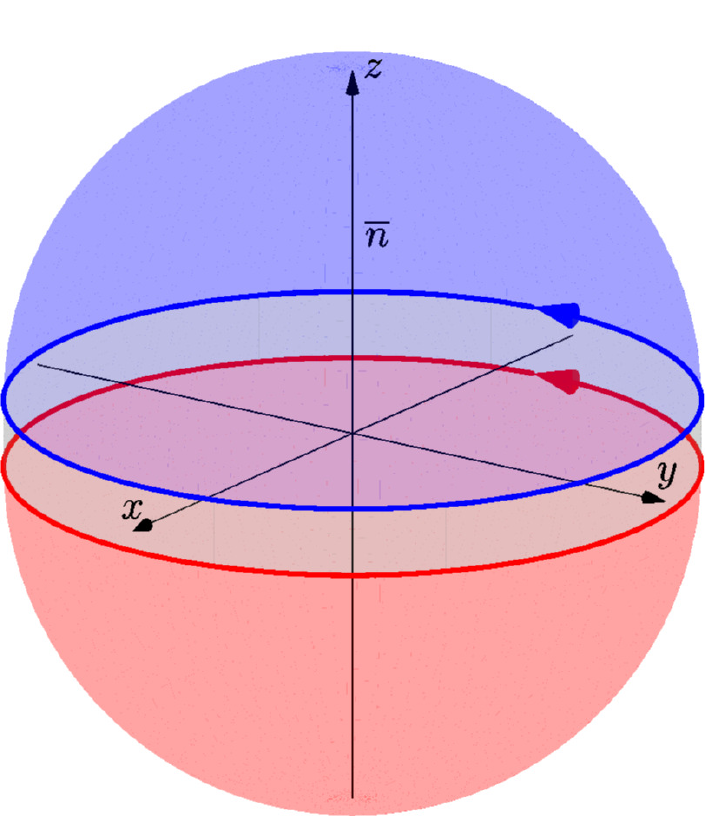}

(b)%
\end{minipage}

\caption{Two 2D systems (red and blue online) of the $\mathfrak{e_{f}m_{f}}$
topological order cannot be glued together into a \emph{gapped} state
respecting any orientation-reversing symmetry; their chiral edge modes
must propagate in the same direction because of the symmetry. (a)
illustrates cases with an rotoinversion axis $\overline{n}$ (including
the spacial cases with an inversion center or a mirror plane). (b)
illustrates cases with a glide reflection plane such as $c:\left(x,y,z\right)\protect\mapsto\left(x,-y,z+1/2\right)$. }

\label{fig:sg47_E8_STO}
\end{figure}

To better understand the incompatibility of a strictly 2D system of
the $\mathfrak{e_{f}m_{f}}$ topological order with any orientation-reversing
symmetry $g$, we view the 2D system as a gluing result of two regions
related by $g$ as in Fig.~\ref{fig:sg47_E8_STO}. The $\mathfrak{e_{f}m_{f}}$
topological order implies net chiral edge modes for each region. Further,
$g$ requires that edge modes from the two region propagate in the
same direction at their 1D interface. Thus, a gapped gluing is impossible,
which shows the non-existence of the $\mathfrak{e_{f}m_{f}}$ topological
order compatible with $g$ in a strictly 2D system. Therefore, the surface $\mathfrak{e_{f}m_{f}}$
topological order respecting $g$ proves the non-triviality of the
bulk cSPT phase $\mathcal{E}$.

On the other hand,  $2\mathcal{E}$ is equivalent to the model obtained
	by attaching an $E_{8}$ state to the surface of each fundamental
	domain from inside in a symmetric way. To check this, let us trace
back the construction of $2\mathcal{E}$: we start with attaching
two copies of $\mathfrak{e_{f}m_{f}}$ states on the surface of each
fundamental domain from inside. Then let us focus a single rectangle interface
between two cuboids. There are four copies of $\mathfrak{e_{f}m_{f}}$
states along it, labeled by $i=1,2$ for one side and $i=3,4$ for
the other side. The symmetries relate $i=1\leftrightarrow i=4$ and
$i=2\leftrightarrow i=3$ separately. If we further condense $(\mathfrak{e}_{\mathfrak{f}}^{1},\mathfrak{e}_{\mathfrak{f}}^{4})$,
$(\mathfrak{m}_{\mathfrak{f}}^{1},\mathfrak{m}_{\mathfrak{f}}^{4})$,
$(\mathfrak{e}_{\mathfrak{f}}^{2},\mathfrak{e}_{\mathfrak{f}}^{3})$,
and $(\mathfrak{m}_{\mathfrak{f}}^{2},\mathfrak{m}_{\mathfrak{f}}^{3})$,
then we get the local state of $2\mathcal{E}$ near the rectangle.
Here $\mathfrak{e}_{\mathfrak{f}}^{i}$ (resp. $\mathfrak{m}_{\mathfrak{f}}^{i}$)
denotes the $\mathfrak{e_{f}}$ (resp. $\mathfrak{m_{f}}$) particle
from the $i^{th}$ copy of $\mathfrak{e_{f}m_{f}}$ state and $(\mathfrak{e}_{\mathfrak{f}}^{i},\mathfrak{e}_{\mathfrak{f}}^{j})$
(resp. $(\mathfrak{m}_{\mathfrak{f}}^{i},\mathfrak{m}_{\mathfrak{f}}^{j})$)
is the anyon obtained by pairing $\mathfrak{e}_{\mathfrak{f}}^{i}$,
$\mathfrak{e}_{\mathfrak{f}}^{j}$ (resp. $\mathfrak{m}_{\mathfrak{f}}^{i}$,
$\mathfrak{m}_{\mathfrak{f}}^{j}$). However, we may alternately condense
$(\mathfrak{e}_{\mathfrak{f}}^{1},\mathfrak{e}_{\mathfrak{f}}^{2})$,
$(\mathfrak{m}_{\mathfrak{f}}^{1},\mathfrak{m}_{\mathfrak{f}}^{2})$,
$(\mathfrak{e}_{\mathfrak{f}}^{3},\mathfrak{e}_{\mathfrak{f}}^{4})$,
and $(\mathfrak{m}_{\mathfrak{f}}^{3},\mathfrak{m}_{\mathfrak{f}}^{4})$.
The resulting local state is two copies of $E_{8}$ states connecting
the same environment in a symmetric gapped way. Thus, the two local
states produced by different condensation procedures have the same
edge modes with the same symmetry behavior and hence are equivalent.
Therefore, $2\mathcal{E}$ is equivalent to the model constructed
by attaching an $E_{8}$ state to the surface of each fundamental
domain from inside. Since the later can be obtained by blowing an
$E_{8}$ state bubble inside each fundamental domain, it (and hence
$2\mathcal{E}$) is clearly trivial.

Thus, we have shown that $\mathcal{E}$ is nontrivial and that $2\mathcal{E}$
is trivial. Therefore, the cSPT phases generated by $\mathcal{E}$
have the $\mathbb{Z}_{2}$ group structure, which holds for any non-orientation-preserving
subgroup of $Pmmm$ such as $Pm\left(6\right)$ and $Pmm2\left(25\right)$.

In particular, for $Pm$, the model constructed in Fig.~\ref{fig:layerE8_y2}
actually presents the same cSPT phase as $\mathcal{E}$ constructed
in Fig.~\ref{fig:sg47_E8}. To see this, we could blow an $E_{8}$
state bubble inside cuboids centered at $\frac{1}{4}\left(1,\pm1,-1\right)+\mathbb{Z}^{3}$
and $\frac{1}{4}\left(-1,\pm1,1\right)+\mathbb{Z}^{3}$, with chirality
opposite to those indicated by the arrowed arcs shown on the corresponding
cuboids in Fig.~\ref{fig:sg47_E8}. This relates $\mathcal{E}$ to
alternately layered $E_{8}$ states; however, each reflection does
not acts trivially on the resulting $E_{8}$ layer on its mirror as
the model in Fig.~\ref{fig:layerE8_y2}. To further show their equivalence,
we look at a single $E_{8}$ layer at $y=0$ for instance. Since it
may be obtained by condensing $(\mathfrak{e_{f}},\mathfrak{e_{f}})$
and $(\mathfrak{m_{f}},\mathfrak{m_{f}}$) in a pair of $\mathfrak{e_{f}m_{f}}$
topological states attached to the mirror, it hence can connect the
$\mathfrak{e_{f}m_{f}}$ surface shown in Fig.~\ref{fig:sg47_E8_STO}
in a gapped way with the reflection $m_{y}$ respected. On the other hand, we know
that an $E_{8}$ state put at $y=0$ with trivial
$m_{y}$ action can also connect to this surface in gapped $m_{y}$-symmetric
way \citep{Hermele_torsor} and is thus equivalent to the corresponding $E_{8}$ layer just mentioned with nontrivial $m_{y}$ action. As a result, the models constructed in Figs.~\ref{fig:layerE8_y2}
and \ref{fig:sg47_E8} realize the same cSPT phase with $Pm$ symmetry.

\subsection{Space group $P\overline{1}\left(2\right)$}

Let us explain the role of inversion symmetry by the example of space
group $P\overline{1}$, which is generated by $t_{x}$, $t_{y}$,
$t_{y}$, and an inversion 
\begin{equation}
\overline{1}:\left(x,y,z\right)\mapsto\left(-x,-y,-z\right).
\end{equation}
For $G=P\overline{1}$, Theorem~\ref{thm:H1_formula} in Appendix~\ref{sec:comp_H1}
tells us that 
\begin{equation}
H_{\phi}^{1}\left(G,\mathbb{Z}\right)=\mathbb{Z}^{3}\times \mathbb{Z}_{2}.\label{eq:sg2_E8}
\end{equation}
The factor $\mathbb{Z}^{3}$ specifies $\frac{1}{8}\left(\gamma_{x},\gamma_{y},\gamma_{z}\right)$
as in the case of $P1$. For instance, the phase labeled by $\frac{1}{8}\left(\gamma_{x},\gamma_{y},\gamma_{z}\right)=\left(0,1,0\right)$
can be constructed by putting a copy of $E8$ state on each of the
planes $y=\cdots,-2,-1,0,1,2,\cdots$ with the symmetries $t_{x},t_{y},t_{z}$
and $\overline{1}$ respected. 

On the other hand, we can generate the factor $\mathbb{Z}_{2}$ in
Eq.~(\ref{eq:sg2_E8}) by the phase constructed in Fig.~\ref{fig:sg47_E8}.
In particular, we have shown that this phase is nontrivial under the
protection any orientation-reversing symmetry, like the inversion
symmetry here, in Sec.~\ref{subsec:cSPT_Pmmm}.

\subsection{Space group $Pc\left(7\right)$}

Finally, we explain the role of glide reflection symmetry by the example
of space group $Pc$, which is generated by $t_{x}$, $t_{y}$, $t_{y}$, 
and a glide reflection
\begin{equation}
c:\left(x,y,z\right)\mapsto\left(x,-y,z+1/2\right).
\end{equation}
For $G=Pc$, Theorem~\ref{thm:H1_formula} in Appendix~\ref{sec:comp_H1}
tells us that 
\begin{equation}
H_{\phi}^{1}\left(G,\mathbb{Z}\right)=\mathbb{Z}\times\mathbb{Z}_{2}.
\end{equation}
The factor $\mathbb{Z}$ specifies $\frac{1}{8}\left(0,\gamma_{y},0\right)$;
the glide reflection symmetry $c$ requires $\gamma_{x}=\gamma_{z}=0$. 

To construct the cSPT phase that generates the summand $\mathbb{Z}_{2}$,
we consider the space group $G'$ generated by translations $t_{x}$, $t_{y}$,
$t_{z}^{\prime}:\left(x,y,z\right)\mapsto\left(x,y,z+\frac{1}{2}\right)$
together with reflections $m_{x}$, $m_{y}$, $m_{z}$ in Eqs.~(\ref{eq:mx}-\ref{eq:mz}). Obviously,
$G\subset G'$ and $G'$ is a space group of type $Pmmm$. Then the
model constructed as in Fig.~\ref{fig:sg47_E8} with respect to $G'$
generates this  $\mathbb{Z}_{2}$ factor of cSPT phases protected
by space group $G$. Particularly, the glide reflection $c\in G$
is enough to protect the corresponding cSPT phase nontrivial by the argument in Sec.~\ref{subsec:cSPT_Pmmm}.

\section{Dimensional reduction and general construction\label{sec:reduction_and_construction}}

In the above examples, we have presented cSPT phases by lower dimensional
short-range entangled (SRE) states. Such a representation
for a generic cSPT phase can be obtained by a dimensional reduction procedure;
it is quite useful for constructing, analyzing, and classifying cSPT
phases \citep{Hermele_torsor, Huang_dimensional_reduction}. Below, let us review the idea
of dimension reduction and illustrate how to build cSPT phases with
lower dimensional SRE states in general. More emphasis will be put
on the construction of cSPT phases involving $E_{8}$ states, which
have not been systematically studied for all space groups in the literature. 

Given a space group $G$, we first partition the 3D euclidean space
$\mathbb{E}^{3}$ into fundamental domains accordingly. A \emph{fundamental
domain}, also know as an \emph{asymmetric unit} in crystallography
\citep{international_tables}, is a smallest simply connected closed
part of space from which, by application of all symmetry operations
of the space group, the whole of space is filled. Formally, the partition
is written as 
\begin{equation}
\mathbb{E}^{3}=\bigcup_{g\in G}g\mathcal{F},
\end{equation}
where $\mathcal{F}$ is a fundamental domain and $g\mathcal{F}$ its
image under the action of $g\in G$. If $g$ is not the identity of space group $G$,
then by definition $\mathcal{F}$ and $g\mathcal{F}$ only intersect
in their surfaces at most. In general, $\mathcal{F}$ can be chosen
to be a convex polyhedron: the Dirichlet-Voronoi cell of a point $\mathsf{P}\in\mathbb{E}^{3}$
to its $G$-orbit, with $\mathsf{P}$ chosen to have a trivial stabilizer
subgroup (this is always possible by discreteness of space groups)
\citep{fundamental_domain}. 

\begin{figure}
\includegraphics[width=0.7\columnwidth]{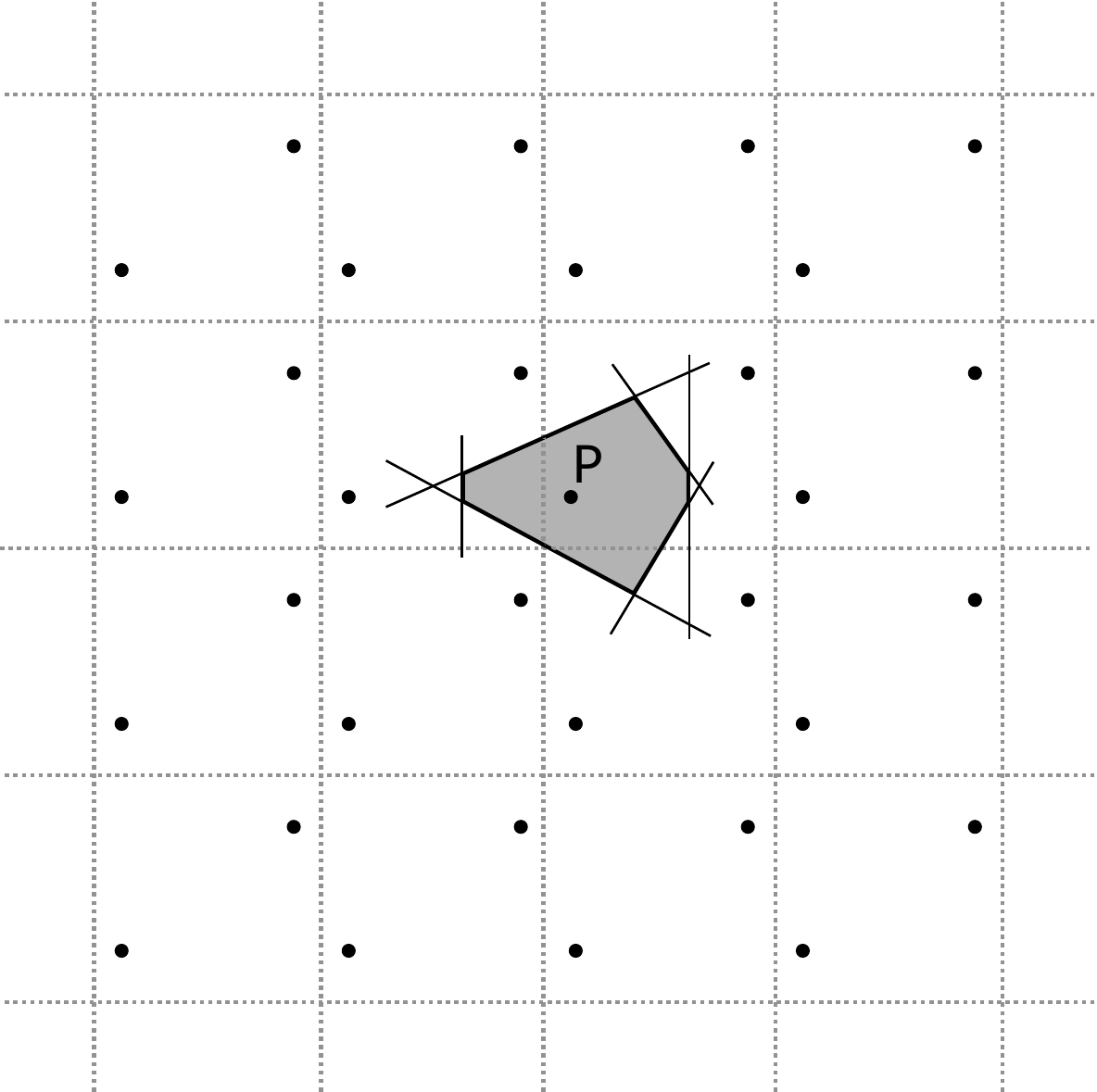}

\caption{The Dirichlet-Voronoi cell (dark region) of a point $\mathsf{P}\in\mathbb{E}^{2}$
to its $G$-orbit (black dots), where $G$ is generated by translations
and an in-plane two-fold rotation.}

\label{fig:p2_DV.pdf}
\end{figure}
The above definition and general construction of fundamental domain
works for space groups in any dimensions. Let us take a lower dimensional
case for a simple illustration: Fig.~\ref{fig:p2_DV.pdf} shows a
fundamental domain given by the Dirichlet-Voronoi cell construction
for wallpaper group No.~2, which is generated by translations and
a two-fold rotation. Clearly, the choice of fundamental domain is
often not unique as in this case; a regular choice of fundamental
domain for each wallpaper group and 3D space group is available in
the International Tables for Crystallography \citep{ITA2006}.

To use terminology from simplicial homology \citep{Elements_AT},
we further partition the fundamental domain $\mathcal{F}$ (resp.
$g\mathcal{F}$) into tetrahedrons $\left\{ \varsigma_{\alpha}\right\} _{\alpha=1,2,\cdots,N}$
(resp. $\left\{ g\varsigma_{\alpha}\right\} _{\alpha=1,2,\cdots,N}$)
such that $X=\mathbb{E}^{3}$ becomes a \emph{$G$-simplicial} complex. In a $G$-simplicial complex, each of its simplices is either completely fixed or mapped onto
another simplex by $g$, $\forall g\in G$. Clearly, all internal
points of every simplex share the same site symmetry\footnote{The \emph{site symmetry} of a point $p$ is the group of symmetry
	operations under which $p$ is not moved, \emph{i.e.}, $G_{p}\coloneqq\left\{ g\in G|gp=p\right\} $.}. To encode the simplex structure, let $\Delta^{n}\left(X\right)$
be the set of $n$-simplices (\emph{i.e.}, vertices for $n=0$, edges
for $n=1$, triangles for $n=2$, and tetrahedrons for $n=3$) and
$X_{n}$ the $n$-skeleton of $X$ (\emph{i.e.}, the subspace made
of all $k$-simplices of $X$ for $k\leq n$).

\subsection{Dimensional reduction of cSPT phases}

Topological phases of matter should admit a topological quantum field
theory (TQFT) description, whose correlation functions do not depend
on the metric of spacetime. Thus, it is natural to conjecture the
following two basic properties of topological phases (including cSPT
phases). (1) Each phase can be presented by a state $\left|\psi\right\rangle $
with arbitrary short correlation length. (2) In each phase, any two
states $\left|\psi_{0}\right\rangle $ and $\left|\psi_{1}\right\rangle $
with correlation length shorter than $r>0$ can be connected a path
of states $\left|\psi_{\tau}\right\rangle $ (parameterized by $\tau\in\left[0,1\right]$)
whose correlation length is shorter than $r$ for all $\tau$. The
conjecture is satisfied by all topological states investigated in
this paper, allowing the dimensional reduction procedure described
below. Its rigorous proof in a reasonable setting, however, remains
an interesting question and goes beyond the scope of this paper. If
the conjecture holds in general for cSPT phases, our classification
in this paper will be complete. Otherwise, we would miss the cSPT
phases where the two properties fail.

Given any cSPT phase for space group $G$, let us now describe the
dimensional reduction procedure explaining why it can be built by
lower dimensional states in general. To start, as conjectured, we
can present this cSPT phase by a state $\left|\Psi\right\rangle $
with correlation length $\xi$ much smaller than the linear size of
the fundamental domain $\mathcal{F}$. The short-range correlation
nature implies that $\left|\Psi\right\rangle $ is the ground state
of some gapped Hamiltonian $H$ whose interaction range is $\xi$
as well. The local part of $H$ inside $\mathcal{F}$ thus describes
a 3D SRE state. It is believed that all 3D SRE states are trivial
or weakly trivial (e.g. layered $E_{8}$ states). Thus, we can
continuously change $H$ into a trivial Hamiltonian inside $\mathcal{F}$
(except within a thin region near the boundary of $\mathcal{F}$)
keeping the correlation length of its ground state smaller than $\xi$
all the time. Removing the trivial degrees of freedom, we are left
with a system on the $2$-skeleton $X_{2}$ of $\mathbb{E}^{3}$. 

Still, the reduced system host no nontrivial excitations. Thus, there
is an SRE state on each 2-simplex (\emph{i.e.}, triangle) $\tau_{\alpha}$, 
indexed by $\alpha$, of $X_{2}$. In particular, it could be $q_{2}\left(\alpha\right)$
copies of $E_{8}$ states (without specifying symmetry), where $q_{2}\left(\alpha\right)\in\mathbb{Z}$
with sign specifying the chirality. These data may be written collectively
as a formal sum $q_{2}=\sum_{\alpha}q_{2}\left(\alpha\right)\tau_{\alpha}$,
which may contain infinitely many terms as $X=\mathbb{E}^{3}$ is
noncompact. On each edge $\ell$, the chiral modes from all triangles
connecting to $\ell$ have to cancel in order for the system to be
gapped. In terms of the simplicial boundary map $\partial$, we thus
have that $\partial q_{2}\coloneqq\sum q_{2}\left(\alpha\right)\partial\tau_{\alpha}$
equals 0. In general, let $C_{k}\left(X\right)$ the set of such formal
sums of $k$-simplices of $X$. Naturally, $C_{k}\left(X\right)$
has an Abelian group structure; $C_{-1}\left(X\right)$ is taken
to be the trivial group. For any integer $k\geq0$, the boundary map
$\partial_{k}$ (or simply $\partial$)$:C_{k}\left(X\right)\rightarrow C_{k-1}\left(X\right)$
is a group homomorphism and let $B_{k}\left(X\right)$ (resp. $Z_{k-1}\left(X\right)$)
denote its kernel (resp. image). Thus, the $E_{8}$ state configuration
on $X_{2}$ is encoded by $q_{2}\in B_{2}\left(X\right)$. 

It clear that $\partial_{k+1}\circ\partial_{k}=0$ and hence $Z_{k}\left(X\right)\subseteq B_{k}\left(X\right)$.
As $C_{k}\left(X\right)$ contains \emph{infinite} sums of simplices,
it is not a standard group of $k$-chains. Instead, it can be viewed
as $\left(3-k\right)$-cochains on the \emph{dual polyhedral decomposition}
(also called \emph{dual block decomposition} \citep{Elements_AT})
of $X$. Thus, $B_{k}\left(X\right)/Z_{k}\left(X\right)$ should be
understood as $\left(3-k\right)^{th}$ cohomology $H^{3-k}\left(X\right)$
rather than $k^{th}$ homology of $X$. Since $X=\mathbb{E}^{3}$
is contractible, its cohomology groups are the same as a point: $H^{n}\left(X\right)$
is $\mathbb{Z}$ for $n=0$ and trivial for $n>0$. Thus, $B_{2}\left(X\right)=Z_{2}\left(X\right)$
and hence any gapped $E_{8}$ state configuration $q_{2}\in B_{2}\left(X\right)$
can be expressed as $q_{2}=\partial q_{3}$ for some $q_{3}\in C_{3}\left(X\right)$.
Also, it is clear that $B_{3}\left(X\right)$ is generated by the
sum of all $3$-simplices with the right-handed orientation, which
is simply denoted by $X$ as well.

As $q_{2}=\partial q_{3}$ is symmetric under $G$, we have $\partial\left(gq_{3}\right)=q\partial q_{3}=\partial q_{3}$
and hence 
\begin{equation}
gq_{3}=q_{3}+\nu^{1}\left(g\right)X,\label{eq:mu1}
\end{equation}
for some $\nu^{1}\left(g\right)\in\mathbb{Z}$. Clearly, $\nu^{1}\left(e\right)=0$
for the identity element $e\in G$. The consistent condition $(gh)q_{3}=g(hq_{3})$
requires that 
\begin{equation}
\nu^{1}(gh)=\phi(g)\nu^{1}(h)+\nu^{1}(g),
\end{equation}
\emph{i.e.}, the cocycle condition for $Z_{\phi}^{1}(G;\mathbb{Z})$. Definitions of group cocycles $Z_{\phi}^{1}(G;\mathbb{Z})$ as well as coboundaries $B_{\phi}^{1}(G;\mathbb{Z})$ and cohomologies $H_{\phi}^{1}(G;\mathbb{Z})$ are given in Appendix~\ref{subsec:H1_sg}.
Thus, $\nu^{1}$ is a normalized 1-cocycle. Moreover, we notice that
$q_{3}$ is not uniquely determined by the $E_{8}$ state configuration
$q_{2}$; solutions to $\partial q_{3}=q_{2}$ may differ by a multiple
of $X$. According to Eq.~(\ref{eq:mu1}), the choice change $q_{3}\rightarrow q_{3}+\nu^{0}X$
leads to $\nu^{1}\left(g\right)\rightarrow\nu^{1}\left(g\right)+\left(d\nu^{0}\right)\left(g\right)$,
where $\nu^{0}\in\mathbb{Z}$ and $\left(d\nu^{0}\right)\left(g\right)\coloneqq\phi\left(g\right)\nu^{0}-\nu^{0}$.
Thus, $\nu^{1}$ is only specified up to a $1$-coboundary.  Therefore,
any $G$-symmetric model (with correlation length much shorter than
simplex size) on $X_{2}$ defines a cohomology group element $[\nu^{1}]\in H_{\phi}^{1}\left(G;\mathbb{Z}\right)$. 

By Lemma~\ref{lemma1}, each $[\nu^{1}]\in H_{\phi}^{1}\left(G;\mathbb{Z}\right)$
can be parameterized by $\nu^{1}\left(t_{v_{1}}\right),\nu^{1}\left(t_{v_{2}}\right),\nu^{1}\left(t_{v_{3}}\right)\in\mathbb{Z}$
(together with $\nu^{1}\left(r\right)\pmod2\in\mathbb{Z}_{2}$ if
$G$ is non-orientation-preserving), where $t_{v_{1}}$, $t_{v_{2}}$,
$t_{v_{3}}$ are three elementary translations that generate the translation
subgroup and $r$ is an orientation-reversing symmetry. Let us explain
their physical meaning by examples. For the model in Fig.~\ref{fig:layerE8_y},
we could pick 
\begin{equation}
q_{3}=-\sum_{i,j,k\in\ZZZ}j\left(t_{x}^{i}t_{y}^{j}t_{z}^{k}\mathcal{F}\right)\label{eq:q3_translations}
\end{equation}
with fundamental domain $\mathcal{F}=\left[0,1\right]\times\left[0,1\right]\times\left[0,1\right]$.
To use the terminology of simplicial homology, $\mathcal{F}$ can
be partitioned into tetrahedrons and be viewed as a formal sum of
them with the right-handed orientation. Moreover, $t_{x}^{i}t_{y}^{j}t_{z}^{k}\mathcal{F}$
denote the translation result of $\mathcal{F}$. It is straightforward
to check that $\partial q_{3}$ equals the sum of 2-simplices (oriented
toward the positive $y$ direction according to the right-hand rule)
on integer $y$ planes; thus, $\partial q_{3}$ corresponds to the
model in Fig.~\ref{fig:layerE8_y}. It is also clear that $t_{v}q_{3}-q_{3}=v^{y} X$
for a translation by $v=\left(v^{x},v^{y},v^{z}\right)$; hence $\nu^{1}\left(t_{x}\right)=0$,
$\nu^{1}\left(t_{y}\right)=1$, and $\nu^{1}\left(t_{z}\right)=0$.
Thus, we get a physical interpretation of $\nu^{1}\left(t_{v}\right)$:
if the model is compactified in the $v$ direction such that $t_{v}^{L}=1$,
then it is equivalent to $L\nu^{1}\left(t_{v}\right)$ copies of $E_{8}$
states as a 2D system. Clearly, models with different $\nu^{1}\left(t_{v}\right)$
on any translation symmetry $t_{v}$ must present distinct cSPT phases.

As another example, for $G=Pmmm$, $H_{\phi}^{1}\left(G;\mathbb{Z}\right)=\mathbb{Z}_{2}$
with element $[\nu^{1}]$ parameterized by $\nu^{1}\left(r\right)\pmod2$
on any orientation-reversing element $r$ of $G$; here $\nu^{1}\left(t_{v}\right)$
on any translation $t_{v}\in G$ is required to be zero by symmetry.
The $E_{8}$ state configuration of the model in Fig.~\ref{fig:sg47_E8}
can be encoded by $q_{3}=\sum_{g\in G_{0}}g\mathcal{F}$ (\emph{i.e.},
the sum of cuboids colored blue online), where $\mathcal{F}=[0,\frac{1}{2}]\times[0,\frac{1}{2}]\times[0,\frac{1}{2}${]}
is a fundamental domain \footnote{To use the terminology of simplicial homology, $\mathcal{F}$ can
	be partitioned into tetrahedrons and be viewed as a formal sum of
	them with the right-handed orientation.} and $G_{0}$ is the orientation-preserving subgroup of $G$. For
any orientation-reversing symmetry $r\in G$ (e.g. $m_{x}$, $m_{y}$,
and $m_{z}$ in Eqs.~(\ref{eq:mx}-\ref{eq:mz})), it is clear that
$rq_{3}=q_{3}-\gamma_{X}$ and hence $\nu^{1}\left(r\right)=-1$. 

For any space group $G$, let $\text{SPT}^{3}(G)$ be the set of $G$-SPT phases, which forms a group under the stacking operation as shown in Appendix~\ref{sec:stacking}.
we will see in Sec.~\ref{subsec:H1_invariance} that the dimensional reduction procedure actually gives a well-define group homorphism $\mathfrak{D}: \text{SPT}^{3}(G)\rightarrow H_{\phi}^{1}(G;\mathbb{Z})$. In particular,
$[\nu^1]\in H_{\phi}^{1}(G;\mathbb{Z})$ is a well-defined invariant, independent of the dimensional reduction details, for each generic $G$-SPT phase. Conversely, we will show in Sec.~\ref{subsec:Construction} that a $G$-symmetric SRE state, denoted $\left|[\nu^{1}]\right\rangle $,  can always be constructed to present a $G$-SPT phase corresponding to each $[\nu^1]\in H_{\phi}^{1}(G;\mathbb{Z})$, resulting in a group homomorphism $\mathfrak{C}:H_{\phi}^{1}(G;\mathbb{Z}) \rightarrow \text{SPT}^{3}(G) $.

Then a generic $G$-symmetric SRE state $\left|\Psi\right\rangle $ is equivalent (as a $G$-SPT phase) to $\left|[\nu^{1}]\right\rangle \otimes\left|\Psi_{2}\right\rangle $ with $[\nu^{1}]\in H_{\phi}^{1}(G;\mathbb{Z})$  specified by $\left|\Psi\right\rangle $ via the dimensional reduction and $\left|\Psi_{2}\right\rangle \coloneqq \left|-[\nu^{1}]\right\rangle \otimes\left|\Psi\right\rangle $ obtained by stacking $\left|-[\nu^{1}]\right\rangle$ (an inverse of $\left|[\nu^{1}]\right\rangle$) with $\left|\Psi\right\rangle $. By dimensional reduction, we may represent $\left|\Psi_{2}\right\rangle$ by a $G$-symmetry SRE state (with arbitrarily short correlation length) on $X_{2}$, whose $E_8$ state configuration specifies $0\in H_{\phi}^{1}\left(G;\mathbb{Z}\right)$. Actually, $\left|\Psi_{2}\right\rangle$ can be represented
without using $E_{8}$ states: since $E_{8}$ state configuration
of $\left|\Psi_{2}\right\rangle $ is given by $q_{3}$ satisfying
$gq_{3}=q_{3}$, we can reduce $q_{2}=\partial q_{3}$ to $q_{2}=0$
by blowing $-q_{3}\left(\alpha\right)$ copies of $E_{8}$ state bubbles
in a process respecting $G$ symmetry. Explicitly, as in the example
of $Pm\left(6\right)$ in Sec.~\ref{subsec:Pm}, such phases equipped
with the stacking operation form the summand $H_{\phi}^{5}\left(G;\mathbb{Z}\right)$
in Eq.~(\ref{3D_prediction}) and can be built with lower dimensional
group cohomology phases \citep{Thorngren_sgSPT,cSPT2017}. Let us
briefly describe how to decompose $\left|\Psi_{2}\right\rangle $
(with $q_{2}=0$) into these lower dimensional components. 

Without $E_{8}$ states, $\left|\Psi_{2}\right\rangle $ (represented on $X_{2}$) can only
have nontrivial 2D phases on 2-simplices in mirror planes, where each
point has a $\mathbb{Z}_{2}$ site symmetry effectively working as
an Ising symmetry protecting 2D phases classified by $H^{3}\left(\mathbb{Z}_{2},U\left(1\right)\right)=\mathbb{Z}_{2}$.
As the system is symmetric and gapped, the 2D states associated with
all 2-simplices in the same mirror plane have to be either trivial
or nontrivial simultaneously. Thus, each inequivalent mirror plane contributes
a $\mathbb{Z}_{2}$ factor to cSPT phase classification. Conversely,
a reference model of 2D nontrivial phase protected by the $\mathbb{Z}_{2}$
site symmetry on a mirror plane can be constructed by sewing 2-simplices
together with all symmetries respected. Adding such a model to each
mirror plane with nontrivial state in $\left|\Psi_{2}\right\rangle $
resulting in a state $\left|\Psi_{1}\right\rangle $ which may be
nontrivial only along the 1-skeleton $X_{1}$. Only 1D states lie
in the intersection of inequivalent mirror planes can be nontrivial
projected by the site symmetry $C_{nv}$ for $n=2,4,6$. A reference
nontrivial model, which generates the 1D phases protected by the $C_{nv}$
site symmetry and classified by $H^{2}\left(C_{nv};U\left(1\right)\right)=\mathbb{Z}_{2}$
for $n$ even, can be constructed by connecting states on 1-simplices
in a gapped and symmetric way. Adding such reference states to the
$C_{nv}$ axes where $\left|\Psi_{1}\right\rangle $ is nontrivial,
we get a state $\left|\Psi_{0}\right\rangle $ which may be nontrivial
only on the 0-skeleton $X_{0}$. Explicitly, $\left|\Psi_{0}\right\rangle $
is a tensor product of trivial degrees of freedom and some isolated
0D states centered at the 0-simplices (i.e., vertices) of $X$ carrying
nontrivial one-dimensional representation of its site symmetry (\emph{i.e.},
site symmetry charges). However, some different site symmetry charge configurations may be changed into each other by charge splitting and fusion; they are
not in 1-1 correspondence with cSPT phases, whose classification and
characterization are studied in Ref.~\citep{cSPT2017}. 

Putting all the above ingredient together, we get that a generic cSPT
state $\left|\Psi\right\rangle $ can be reduced to the stacking of
$\left|[\nu^{1}]\right\rangle $, 2D nontrivial states protected
by $\mathbb{Z}_{2}$ site symmetry on some mirror planes, 1D nontrivial
states protected by $C_{nv}$ site symmetry on some $C_{nv}$ axes
with $n$ even, and 0D site symmetry charges. As we have mentioned,
the cSPT phases built without using $E_{8}$ states have a group structure
$H_{\phi}^{5}\left(G;\mathbb{Z}\right)$ (\emph{i.e.}, the first summand
in Eq.~(\ref{3D_prediction})) under stacking operation \citep{cSPT2017,Thorngren_sgSPT}.
Next, we will focus on the models with ground states $\left|[\nu^{1}]\right\rangle $
for $[\nu^{1}]\in H_{\phi}^{1}\left(G;\mathbb{Z}\right)$.

\subsection{$H_{\phi}^{1}\left(G;\mathbb{Z}\right)$ as a cSPT phase invariant
	\label{subsec:H1_invariance}}

Let $\text{SPT}^{3}(G)$ be the set of cSPT phases with space group $G$
symmetry in $d=3$ spatial dimension. As shown in Appendix~\ref{sec:stacking}, the stacking operation equips $\text{SPT}^{3}(G)$ with an Abelian
group structure. Below, let us prove that the dimensional reduction
procedure defines a group homomorphism
\begin{equation}
\mathfrak{D}:\text{SPT}^{3}(G)\rightarrow H_{\phi}^{1}\left(G;\mathbb{Z}\right).\label{eq:D}
\end{equation}
To check the well-definedness of $\mathfrak{D}$, we notice the following
two facts. 

\begin{prop}
	Let $\mathcal{M}$ be a model on $X_{2}$ with $\nu^{1}\left(t_{v_{1}}\right)$
	specified by its $E_{8}$ configuration, where $t_{v_{1}}$ is the
	translation symmetry along a vector $v_{1}\in\mathbb{R}^{3}$. Compactifying
	$\mathcal{M}$ such that $t_{v_{1}}^{L}=1$ results in a 2D system with the invertible topologicial order of $L\nu^{1}\left(t_{v_{1}}\right)$
	copies of $E_{8}$ states.
\end{prop}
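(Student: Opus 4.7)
My plan is to compactify the bubble chain $q_{3}$ to the quotient space $Y=\mathbb{E}^{3}/\langle t_{v_{1}}^{L}\rangle$, compute its boundary there, and identify the mismatch with $\bar{q}_{2}$ as exactly $L\nu^{1}(t_{v_{1}})$ transverse $E_{8}$ layers. Choose coordinates so that $v_{1}=\hat{y}$ and refine the $G$-simplicial decomposition so that every plane $y\in\mathbb{Z}$ is a subcomplex. Then $S=\{0\le y<L\}$ is a fundamental domain for the $\langle t_{v_{1}}^{L}\rangle$ action, and $Y\cong S^{1}\times\mathbb{R}^{2}$. Since $q_{2}$ is $G$-invariant it descends to a gapped configuration $\bar{q}_{2}$ on $Y$, and by conjecture the compactified model presents a 2D SRE phase whose $E_{8}$ count equals the class of $\bar{q}_{2}$ modulo symmetric $E_{8}$ bubbling.

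I would then define a 3-chain $\bar{q}_{3}^{L}$ on $Y$ by $\bar{q}_{3}^{L}(\sigma)=q_{3}(\tilde{\sigma})$, where $\tilde{\sigma}$ is the unique lift of each 3-simplex $\sigma$ of $Y$ into $S$, and compute $\partial\bar{q}_{3}^{L}$ cell by cell. For any 2-simplex $\tau$ away from the seam plane $P$ (the common image of $y=0$ and $y=L$), both adjacent 3-simplices of $Y$ lift to the same pair in $\mathbb{E}^{3}$ and the computation reproduces $\bar{q}_{2}(\tau)$. For $\tau\in P$, however, the two adjacent 3-simplices lift to a pair related by $t_{v_{1}}^{L}$; iterating Eq.~(\ref{eq:mu1}) yields $q_{3}(t_{v_{1}}^{L}\sigma)=q_{3}(\sigma)+L\nu^{1}(t_{v_{1}})$, so the coefficient of $\tau$ in $\partial\bar{q}_{3}^{L}$ exceeds $\bar{q}_{2}(\tau)$ by $L\nu^{1}(t_{v_{1}})$, with sign determined by the outward normal to $S$. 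Summing over $\tau\in P$ gives
\begin{equation}
\partial\bar{q}_{3}^{L}=\bar{q}_{2}+L\nu^{1}(t_{v_{1}})\,[P],
\end{equation}
where $[P]$ denotes the seam plane regarded as a 2-chain oriented by $\hat{v}_{1}$.

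Because $\bar{q}_{3}^{L}$ encodes a $G$-symmetric pattern of $E_{8}$-bubble blowing on $Y$, this identity shows that $\bar{q}_{2}$ is bubble-equivalent on $Y$ to $-L\nu^{1}(t_{v_{1}})\,[P]$, i.e., to $L\nu^{1}(t_{v_{1}})$ copies of a single $E_{8}$ layer filling the non-compactified $\mathbb{R}^{2}$ directions at some fixed height on the $S^{1}$ factor; viewed as a 2D state this is precisely the stated invertible topological order. The hardest step will be the orientational bookkeeping on $P$: I must verify that the per-simplex $L\nu^{1}(t_{v_{1}})$ contributions add coherently rather than cancel, and that the sign of $[P]$ matches the $E_{8}$ chirality convention used to define $\nu^{1}$. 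A secondary point to justify is that bubble equivalence on $Y$ really detects the 2D invariant, which follows from $H^{1}(Y;\mathbb{Z})=\mathbb{Z}$ (the only bubble class generator is $[P]$) together with the fact that the only gapped local modification of an $E_{8}$ configuration without creating anyons or edge modes is a symmetric bubble.
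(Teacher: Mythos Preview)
Your argument is correct and takes a genuinely different route from the paper. The paper does not pass to the quotient $Y=\mathbb{E}^{3}/\langle t_{v_{1}}^{L}\rangle$ at all; instead it forgets down to the translation subgroup $H=\langle t_{v_{1}},t_{v_{2}},t_{v_{3}}\rangle$, builds an explicit \emph{reference} model $\mathcal{M}'$ of layered $E_{8}$ states (one stack of $\nu^{1}(t_{v_{j}})$ layers per unit step in each direction $j=1,2,3$) whose 3-chain $q_{3}'$ satisfies the same shift relation $t_{v_{j}}q_{3}'=q_{3}'+\nu^{1}(t_{v_{j}})X$, observes that $q_{3}-q_{3}'$ is $H$-invariant and hence can be bubbled away symmetrically on $\mathbb{E}^{3}$, and only then compactifies the explicit $\mathcal{M}'$.

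Your transfer-to-the-quotient computation is more direct and makes the topological mechanism transparent: the class $[\bar q_{2}]\in B_{2}(Y)/Z_{2}(Y)\cong H^{1}(Y;\mathbb{Z})\cong\mathbb{Z}$ is literally the holonomy of the relation $t_{v_{1}}q_{3}=q_{3}+\nu^{1}(t_{v_{1}})X$ around the $S^{1}$ factor. The paper's route, by contrast, buys a little more: it shows that $\mathcal{M}$ is actually $H$-\emph{equivalent} (not merely 2D-invertible-order-equivalent after compactification) to an explicit layered state, which is a slightly stronger normal form. Two small points on your write-up: (i) the phrase ``$G$-symmetric pattern of bubble blowing on $Y$'' is unnecessary and in fact not true---$\bar q_{3}^{L}$ need not respect any residual symmetry, and none is required to conclude about the 2D invertible order; (ii) your stated identity $q_{3}(t_{v_{1}}^{L}\sigma)=q_{3}(\sigma)+L\nu^{1}(t_{v_{1}})$ has the sign backwards under the paper's conventions (check against the explicit example $q_{3}=-\sum j\,(t_{x}^{i}t_{y}^{j}t_{z}^{k}\mathcal{F})$ with $\nu^{1}(t_{y})=1$), but as you already note this is exactly the orientational bookkeeping that needs to be tracked and does not affect the conclusion.
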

\begin{proof}
	Let us only keep the subgroup $H\subseteq G$ of symmetries generated
	by three translations $t_{v_{1}}$, $t_{v_{2}}$, and $t_{v_{3}}$
	along linearly independent vectors $v_{1}$, $v_{2}$, and $v_{3}$
	respectively. For convenience, we now use the coordinate system such
	that $t_{v_{1}}$, $t_{v_{2}}$, $t_{v_{3}}$ work as $t_{x}$, $t_{y}$,
	$t_{z}$ in Eqs.~(\ref{eq:tx}-\ref{eq:tz}). Let $\mathcal{F}'=\left[0,1\right]\times\left[0,1\right]\times\left[0,1\right]$.
	Then $\mathcal{F}'$ is both a fundamental domain and a unit cell
	for $H$. The original triangulation $X$ partitions $\mathcal{F}'$
	into convex polyhedrons, which can be further triangulated resulting
	in a finer $H$-simplex complex $X'$. Clearly, $q_{3}$
	can be viewed as an element of $C_{3}\left(X'\right)$ as well. Moreover,
	putting $\nu^{1}\left(t_{v_{1}}\right)$ (resp. $\nu^{1}\left(t_{v_{2}}\right)$,
	$\nu^{1}\left(t_{v_{3}}\right)$) copies of $\overline{E_{8}}$ states
	on each of integer $x$ (resp. $y$, $z$) planes produces a $H$-symmetric
	model $\mathcal{M}'$ with $E_{8}$ configuration encoded
	by $q_{3}^{\prime}$ satisfying $t_{v_{j}}q_{3}^{\prime}=q_{3}^{\prime}+\nu^{1}\left(t_{v_{j}}\right)X,\forall j=1,2,3$.
	Then $q_{3}-q_{3}^{\prime}$ is invariant under the action of $H$.
	Thus, the original model $\mathcal{M}$ can be continuously changed
	into $\mathcal{M}'$ through a path of $H$-symmetric SRE states. In particular,
	it reduces to a 2D system with the invertible topologicial order of
	$L\nu^{1}\left(t_{v_{1}}\right)$ copies of $E_{8}$ states, when
	the system is compactified such that $t_{v_{1}}^{L}=1$.
\end{proof}

\begin{rem}
	This compactification procedure provides an alternate interpretation
	of $\nu^{1}\left(t_{v_{1}}\right)$, which is now clearly independent
	of the dimensional reduction details and invariant under a continuous
	change of cSPT states. Thus, $\nu^{1}\left(t_{v_{1}}\right)$ is well-defined
	for a cSPT phase. 
\end{rem}

\begin{prop}
	Suppose that $G$ contains an orientation-reversing symmetry $r$.
	Given two $G$-symmetric models $\mathcal{M}_{a}$ and $\mathcal{M}_{b}$
	on $G$-simplicial complex structures $X_{a}$ and $X_{b}$
	of $\mathbb{E}^{3}$ respectively, let $\nu_{a}^{1}\left(r\right)\pmod2$
	and $\nu_{b}^{1}\left(r\right)\pmod2$ be specified by their $E_{8}$
	configurations. If $\mathcal{M}_{a}$ and $\mathcal{M}_{b}$ are
	in the same $G$-SPT phase, then $\nu_{a}^{1}\left(r\right)=\nu_{b}^{1}\left(r\right)\pmod2$.
\end{prop}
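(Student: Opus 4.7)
The plan is to reduce the claim to showing that $\nu^1(r)\pmod 2$ is invariant under two types of operation: (i) passage to a $G$-equivariant refinement of the $G$-simplicial complex, and (ii) continuous deformation through $G$-symmetric SRE states on a fixed complex. These two invariances together give the proposition: given $\mathcal{M}_a$ on $X_a$ and $\mathcal{M}_b$ on $X_b$, I would build a common $G$-equivariant refinement $X_c$ of $X_a$ and $X_b$, transport both models to $X_c$ by (i), and then connect them on $X_c$ by a $G$-symmetric path provided by the hypothesis that they realize the same $G$-SPT phase, applying (ii).

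For step (i), I would construct $X_c$ by first taking the common polyhedral subdivision of $X_a$ and $X_b$ and then $G$-equivariantly retriangulating it using orbit-averaged barycenters, as is standard in $G$-CW theory. I would then show that a cellular $E_8$-configuration $q_2\in C_2(X_a)$ with lift $q_3\in C_3(X_a)$ transports canonically to $\tilde q_2\in C_2(X_c)$ and $\tilde q_3\in C_3(X_c)$, by replacing every simplex by the formal sum of simplices of $X_c$ refining it (with inherited orientations). The three identities to check are $\partial \tilde q_3=\tilde q_2$, $g\tilde q_3-\tilde q_3=\nu^1(g)\,\tilde X$ where $\tilde X$ is the fundamental $3$-chain of $X_c$, and that the lift intertwines the $G$-action; all follow from the functoriality of the subdivision. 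This proves that $\nu^1(g)$, and a fortiori $\nu^1(r)\pmod 2$, is unchanged under refinement.

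For step (ii), once $\mathcal{M}_a$ and $\mathcal{M}_b$ are on the common complex $X_c$, the assumption that they realize the same $G$-SPT phase gives a path of $G$-symmetric SRE states $|\Psi_\tau\rangle$ whose correlation length stays much smaller than the simplex size. Applying dimensional reduction to each $|\Psi_\tau\rangle$ yields chains $q_{3,\tau}\in C_3(X_c)$, and the only way these can evolve with $\tau$ is through $G$-symmetric $E_8$-bubble creation/annihilation events. Each such event shifts $q_3$ by some $\sigma\in C_3(X_c)$ for which $g\sigma-\sigma$ is a multiple of the fundamental $3$-chain; a direct computation then shows that $\nu^1$ changes by a $1$-coboundary $d\nu^0$, so $[\nu^1]\in H^1_\phi(G;\mathbb{Z})$ is preserved, and in particular $\nu^1(r)\pmod 2$ is preserved.

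The main obstacle is the technical construction in step (i): producing a $G$-equivariant common refinement of two arbitrary $G$-simplicial structures on $\mathbb{E}^3$ and verifying that the chain-level data transports without ambiguity, especially across orientation-reversing elements like $r$. As a cross-check and physical sanity argument, I would also run the surface-topological-order reasoning of Sec.~\ref{subsec:cSPT_Pmmm} in reverse: if $\nu_a^1(r)\not\equiv\nu_b^1(r)\pmod 2$, then on an $r$-symmetric surface of the stack $\mathcal{M}_a\otimes\overline{\mathcal{M}_b}$ one would detect an unabsorbable $\mathfrak{e_f m_f}$ surface topological order, contradicting triviality of the stack as a cSPT phase.
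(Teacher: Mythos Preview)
Your step (i) --- passing to a common $G$-equivariant refinement and transporting the chain data --- is exactly what the paper does in its first paragraph, and your outline of it is fine.

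The gap is in step (ii). You assert that along the path $|\Psi_\tau\rangle$ the dimensionally-reduced data $q_{3,\tau}$ can change ``only through $G$-symmetric $E_8$-bubble creation/annihilation events,'' but this is precisely the hard part and you have not justified it. Dimensional reduction is not a canonical procedure: for each $\tau$ it involves choosing how to trivialize the bulk inside every $3$-cell, and different choices already change $q_2$ (and hence $q_3$) by bubble events for a \emph{single} state. To run your argument you would need to show that these choices can be made coherently in $\tau$, so that the resulting family $q_{3,\tau}$ is piecewise constant with jumps only of the controlled form you describe. That requires an equivariant parametrized version of the trivialization step, which is not available from the paper's setup and is not obvious.

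The paper sidesteps this entirely and uses as its \emph{main} argument what you list as a cross-check. It argues by contradiction: assuming $\nu_a^1(r)\not\equiv\nu_b^1(r)\pmod 2$, it builds an explicit $r$-symmetric inverse $\overline{\mathcal{M}_b}$ (possible because $\nu_b^1(r)\equiv 0$), so that $\mathcal{M}_a\oplus\overline{\mathcal{M}_b}$ is trivial and hence admits an $r$-symmetric gapped surface. It then restricts to a finite $r$-symmetric region, introduces an $r$-invariant plane $\Pi$, and uses $G$-symmetric bubble moves to concentrate the $E_8$ configuration onto $\Pi$. A chiral edge-mode count on the two halves of the surface (related by $r$) then shows the surface cannot be gapped, giving the contradiction. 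This approach avoids any need to track dimensional reduction continuously along a path; it extracts the invariant directly from a physical obstruction on a single finite region. Your proposal would be completed most cleanly by promoting that ``cross-check'' to the actual proof.
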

\begin{proof}
	The triangulation of $X_{a}$ partitions each tetrahedron of $X_{b}$
	into convex polyhedrons, which can be further triangulated
	resulting a simplicial complex $X$ finer than both $X_{a}$ and $X_{b}$.
	Naturally, both $\mathcal{M}_{a}$ and $\mathcal{M}_{b}$ can be
	viewed as models on $X$ for the convenience of making a comparison;
	the values of  $\left[\nu_{a}^{1}\right]$ and $\left[\nu_{b}^{1}\right]$
	are clearly invariant when computed on a finer triangulation. 
	
	To make a proof by contradiction, suppose $\nu_{a}^{1}\left(r\right)\neq\nu_{b}^{1}\left(r\right)\pmod2$.
	Without loss of generality, we may assume $\nu_{a}^{1}\left(r\right)\pmod2=1$
	and $\nu_{b}^{1}\left(r\right)\pmod2=0$. To compare the difference
	between $\mathcal{M}_{a}$ and $\mathcal{M}_{b}$, let us construct
	an inverse of $\mathcal{M}_{b}$ with symmetry $r$ respected. Since
	$\nu_{b}^{1}\left(r\right)\pmod2=0$, the $E_{8}$ configuration of
	$\mathcal{M}_{b}$ can be described by $\partial q_{3}^{b}$ with
	$q_{3}^{b}\in C_{3}\left(X\right)$ satisfying $rq_{3}^{b}=q_{3}^{b}$.
	Let $\overline{\mathcal{M}_{b}^{\prime}}$ be a model obtained by
	attaching a bubble of $-q_{3}^{b}\left(\varsigma\right)$ copies of
	$E_{8}$ states to $\partial\varsigma,\forall\varsigma\in\Delta_{3}\left(X\right)$
	from its inside. Then adding $\overline{\mathcal{M}_{b}^{\prime}}$
	to $\mathcal{M}_{b}$ cancels its $E_{8}$ configuration; the resulting
	state may only have group cohomology SPT phases left on simplices
	and hence has an inerse, denoted $\overline{\mathcal{M}_{b}^{\prime\prime}}$.
	Let $\overline{\mathcal{M}_{b}}=\overline{\mathcal{M}_{b}^{\prime}}\oplus\overline{\mathcal{M}_{b}^{\prime\prime}}$
	(\emph{i.e.}, the stacking of $\overline{\mathcal{M}_{b}^{\prime}}$
	and $\overline{\mathcal{M}_{b}^{\prime\prime}}$). Then $\overline{\mathcal{M}_{b}}$
	clearly is an inverse of $\mathcal{M}_{b}$ with symmetry $r$ respected.
	Since $\mathcal{M}_{a}$ and $\mathcal{M}_{b}$ are in the same
	$G$-SPT phase, $\overline{\mathcal{M}_{b}}$ is also
	an inverse of $\mathcal{M}_{a}$ with symmetry $r$ respected. Thus,
	$\mathcal{M}_{a}\oplus\overline{\mathcal{M}_{b}}$ admits a $r$-symmetric
	surface.  Moreover, since $\nu_{a}^{1}\left(r\right)\pmod2=1$, the
	$E_{8}$ configuration of $\mathcal{M}_{a}\oplus\overline{\mathcal{M}_{b}}$
	can be described by $\partial q_{3}$ for some $q_{3}\in C_{3}\left(X\right)$
	satisfying $rq_{3}=q_{3}-X$. 
	
	To proceed, we pick an $r$-symmetric region $Y$ with surface shown
	in Fig.~\ref{fig:sg47_E8_STO}. For convenience, $Y$ is chosen to
	be a subcomplex of $X$ (\emph{i.e.}, union of simplices in $X$).
	In addition, $r$ can be an inversion, a rotoinversion, a reflection,
	and a glide reflection in three dimensions. Let $\Pi$ be a plane
	passing the inversion/rotoinversion center, the mirror plane, and
	the glide reflection plane respectively. Clearly, $r\Pi=\Pi$. Adding
	$\Pi$ to the triangulation of $Y$, some tetrahedrons are divided
	in convex polyhedrons, which can be further triangulated resulting
	in a finer triangulation of $Y$. Below, we treat $Y$ as a simplicial
	complex specified by the new triangulation. By restriction, $q_{3}$
	can be viewed as an element of $C_{3}\left(Y\right)$ satisfying $rq_{3}=q_{3}-Y$.
	Let $\mathcal{M}$ be a model on $Y$ with an $r$-symmetric SRE
	surface and a bulk identical to $\mathcal{M}_{a}\oplus\overline{\mathcal{M}_{b}}$.
	Thus, $\partial q_{3}$ describes the bulk $E_{8}$ configuration
	of $\mathcal{M}$.
	
	On the other hand, putting an $r$-symmetric $E_{8}$ state on $\Pi$
	gives a configuration described by $\partial q_{3}^{\prime}$ with
	$q_{3}^{\prime}\left(\varsigma\right)$ equal to $1$ for all 3-simplex
	$\varsigma$ on side of $\Pi$ and $0$ on the other side; accordingly,
	$rq_{3}^{\prime}=q_{3}^{\prime}-Y$ and hence $r\left(q_{3}^{\prime}-q_{3}\right)=q_{3}^{\prime}-q_{3}$.
	Thus, adding a bubble of $q_{3}^{\prime}\left(\varsigma\right)-q_{3}\left(\varsigma\right)$
	copies of $E_{8}$ states to $\partial\varsigma,\forall\varsigma\in\Delta_{3}\left(Y\right)$
	from inside gives an $r$-symmetric SRE model, denoted $\mathcal{M}'$.
	By construction, the bulk $E_{8}$ configuration of $\mathcal{M}'$
	get concentrated on $\Pi$; more precisely, $\mathcal{M}'$ can be
	viewed as a gluing result of the two half surfaces (red and blue)
	in Fig.~\ref{fig:sg47_E8_STO} and an $E_{8}$ state on $\Pi$. As
	$\partial Y$ hosts no anyons, each half surface has $8n$ co-propagating
	chiral boson modes along its boundary, where $n\in\mathbb{Z}$. Due
	to the symmetry $r$, they add up to $16n$ co-propagating chiral
	boson modes, which cannot be canceled by boundary modes of the $E_{8}$
	state on $\Pi$. This implies that $\mathcal{M}'$ cannot be gapped,
	which contradicts that $E_{8}$ is SRE and hence disproves our initial
	assumption. Therefore, $\nu_{a}^{1}\left(r\right)=\nu_{b}^{1}\left(r\right)\pmod2$
	for any two models $\mathcal{M}_{a}$ and $\mathcal{M}_{b}$ in
	the same $G$-symmetric cSPT phase.
\end{proof}

By Lemma~\ref{lemma1}, $\left[\nu^{1}\right]\in H_{\phi}^{1}\left(G;\mathbb{Z}\right)$
is specified by $\nu^{1}\left(t_{v_{1}}\right)$, $\nu^{1}\left(t_{v_{2}}\right)$,
$\nu^{1}\left(t_{v_{3}}\right)$ together with $\nu^{1}\left(r\right)\pmod2\in\mathbb{Z}_{2}$
if $G$ is non-orientation-preserving, where $t_{v_{1}}$, $t_{v_{2}}$,
$t_{v_{3}}$ are three linearly independent translation symmetries
and $r$ is an orientation-reversing symmetry. Combining the above
two facts, we get that any two models (probably on different simplicial
complex structures of $\mathbb{E}^{3}$) in the same $G$-symmetric
cSPT phase must determine a unique $\left[\nu^{1}\right]\in H_{\phi}^{1}\left(G;\mathbb{Z}\right)$.
Thus, $\mathfrak{D}$ in Eq.~(\ref{eq:D}) is well-defined, independent
of the details of the dimensional reduction. Moreover, it clearly
respects the group structure. 

\subsection{Construction of $H_{\phi}^{1}\left(G;\mathbb{Z}\right)$ cSPT phases
	\label{subsec:Construction}}

To show that the group structure of $\text{SPT}^{3}(G)$ (\emph{i.e.},
$G$-SPT phases in $d$ spatial dimensions) is $H_{\phi}^{5}\left(G;\mathbb{Z}\right)\oplus H_{\phi}^{1}\left(G;\mathbb{Z}\right)$,
we analyze three group homomorphisms $\mathfrak{I}$, $\mathfrak{C}$,
and $\mathfrak{D}$, which can be organized as \begin{equation} 
\begin{tikzcd}
H_{\phi}^{5}(G;\mathbb{Z})   \arrow[r, "\mathfrak{I}", hook] & 
\text{SPT}^{3}(G)	 \arrow[r, "\mathfrak{D}", twoheadrightarrow] &
H_{\phi}^{1}(G;\mathbb{Z}). \arrow[l, "\mathfrak{C}", bend right, swap]
\end{tikzcd}
\label{eq:extension}
\end{equation}A hooked (resp. two-head) arrow is used to indicate that $\mathfrak{I}$
is injective (resp. $\mathfrak{D}$ is surjective). The map $\mathfrak{I}$
is an inclusion identifying $H_{\phi}^{5}\left(G;\mathbb{Z}\right)$
as a subgroup of $\text{SPT}^{3}(G)$; Refs.~\citep{cSPT2017,Thorngren_sgSPT}
have shown that $H_{\phi}^{5}\left(G;\mathbb{Z}\right)$ classifies
the $G$-symmetric cSPT phases built with lower dimensional group
cohomology phases protected site symmetry. We have defined $\mathfrak{D}$
via dimensional reduction. Clearly, $\mathfrak{D}$ maps all $G$-SPT
phases labeled by $H_{\phi}^{5}\left(G;\mathbb{Z}\right)$ to $0\in H_{\phi}^{1}\left(G;\mathbb{Z}\right)$.
Conversely, the $E_{8}$ configuration of any model on $X_{2}$ with
$\left[\nu^{1}\right]=0$ can be described by $\partial q_{3}$ with
$q_{3}\in C_{3}\left(X\right)$ satisfying $gq_{3}=q_{3},\forall g\in G$.
Thus, it is possible to grow a bubble of $-q_{3}\left(\varsigma\right)$
of $E_{8}$ states inside $\varsigma$ to its boundary for all $\varsigma\in\Delta_{3}\left(X\right)$
in a $G$-symmetric way, canceling all $E_{8}$ states on 2-simplicies.
Thus, any phase with $\left[\nu^{1}\right]=0$ can be represented by
a model built with group cohomology phases only. Formally, the image
of $\mathfrak{I}$ equals the kernel of $\mathfrak{D}$. Below, we
will define the group homomophism $\mathfrak{C}$ by constructing
a $G$-symmetric SRE state representing a $G$-SPT phase with each $\left[\nu^{1}\right]\in H_{\phi}^{1}\left(G;\mathbb{Z}\right)$
and show that $\mathfrak{D}\circ\mathfrak{C}$ equals the identity
map on $H_{\phi}^{1}\left(G;\mathbb{Z}\right)$, which implies the
surjectivity of $\mathfrak{D}$ and further
\begin{equation}
\text{SPT}^{3}(G)\cong H_{\phi}^{5}\left(G;\mathbb{Z}\right)\oplus H_{\phi}^{1}\left(G;\mathbb{Z}\right)
\end{equation}
by the splitting lemma in homological algebra. 

Suppose that $t_{v_{1}}$, $t_{v_{2}}$, and $t_{v_{3}}$ generate
the translation subgroup of $G$. Let $G_{0}$ be the orientation-preserving
subgroup of $G$, \emph{i.e.}, $G_{0}\coloneqq\left\{ g\in G\;|\;\phi\left(g\right)=1\right\} $.
By Lemma~\ref{lemma1}, each $[\nu^{1}]\in H_{\phi}^{1}\left(G;\mathbb{Z}\right)$
can be parameterized by $\nu^{1}\left(t_{v_{1}}\right),\nu^{1}\left(t_{v_{2}}\right),\nu^{1}\left(t_{v_{3}}\right)\in\mathbb{Z}$
together with $\nu^{1}\left(r\right)\pmod2\in\mathbb{Z}_{2}$ if $G$
is non-orientation-preserving (\emph{i.e.}, $G_{0}\neq G$), where
$r$ is an orientation-reversing symmetry. 
Below, we construct models for
generators of $H_{\phi}^{1}\left(G;\mathbb{Z}\right)$ and then the
model corresponding to a generic $[\nu^{1}]$ will be obtained by
stacking.

For a non-orientation-preserving space group $G$, let $\nu_{r}^{1}$
be a 1-cocycle satisfying $\nu_{r}^{1}\left(g\right)=0,\forall g\in G_{0}$
and $\nu_{r}^{1}\left(r\right)\pmod2=1$; the corresponding $[\nu_{r}^{1}]\in H_{\phi}^{1}\left(G;\mathbb{Z}\right)$
clearly has order 2. Let us first construct a $G$-symmetric SRE state
with $[\nu_{r}^{1}]\in H_{\phi}^{1}\left(G;\mathbb{Z}\right)$, which
is done by recasting the illustrative construction in Fig.~\ref{fig:sg47_E8}
in a general setting. We attach an bubble of $\mathfrak{e_{f}}\mathfrak{m_{f}}$
topological state to $\partial\mathcal{F}$ from inside and duplicate
it at $\partial\left(g\mathcal{F}\right)$ by all symmetries $g\in G$.
Then there are two copies of $\mathfrak{e_{f}}\mathfrak{m_{f}}$ topological
states on each 2-simplex, which is always an interface between two
fundamental domains $g_{1}\mathcal{F}$ and $g_{2}\mathcal{F}$ for
some $g_{1},g_{2}\in G$. Let $\left(\mathfrak{e_{f}},\mathfrak{e_{f}}\right)$
(resp. $\left(\mathfrak{m_{f}},\mathfrak{m_{f}}\right)$) denote the
anyon formed by pairing $\mathfrak{e_{f}}$ (resp. $\mathfrak{m_{f}}$)
from each copy. Condensing $\left(\mathfrak{e_{f}},\mathfrak{e_{f}}\right)$
and $\left(\mathfrak{m_{f}},\mathfrak{m_{f}}\right)$ on all 2-simplices
results in a $G$-symmetric SRE state, denoted $\left|\Psi_{r}\right\rangle $.
Its $E_{8}$ configuration can be described by $\partial q_{3}$ with
$q_{3}=\sum_{g\in G_{0}}g\mathcal{F}$; there is a (resp. no) $E_{8}$
state on the interface between $g_{1}\mathcal{F}$ and $g_{2}\mathcal{F}$
with $\phi\left(g_{1}\right)\neq\phi\left(g_{2}\right)$ (resp. $\phi\left(g_{1}\right)=\phi\left(g_{2}\right)$).
Clearly, $rq_{3}=q_{3}-X$ and hence $\nu^{1}\left(r\right)\pmod2=1$.
In addition, with only $G_{0}$ respected, the $E_{8}$ configuration
can be canceled by growing a bubble of an $\overline{E_{8}}$ state
inside each $g\mathcal{F}$ for $g\in G_{0}$. Thus, as a $G_{0}$-symmetric
cSPT phase, $\left|\Psi_{r}\right\rangle $ corresponds to $0\in H_{\phi}^{1}\left(G_{0};\mathbb{Z}\right)$;
in particular, $\nu^{1}\left(t_{v}\right)=0$ for any translation
$t_{v}$. Therefore, $\left|\Psi_{r}\right\rangle $ is a $G$-symmetric
SRE state mapped to $[\nu_{r}^{1}]\in H_{\phi}^{1}\left(G;\mathbb{Z}\right)$
by $\mathfrak{D}$. 
Moreover, the order of $\left|\Psi_{r}\right\rangle $
is also 2 in $\text{SPT}^{3}(G)$ by the same argument as the one in Sec.~\ref{subsec:cSPT_Pmmm}
	showing that stacking two copies of the models in Fig.~\ref{fig:sg47_E8}
	gives a trivial cSPT phase.

Below, the construction of $\mathfrak{C}$ will be completed case
by case based on the number of symmetry directions in the international
(Hermann-Mauguin) symbol of $G$. 

\subsubsection{$G$ with more than one symmetry direction}

For $G$ with more than one symmetry direction, if $G$ is orientation-preserving,
then $H_{\phi}^{1}\left(G;\mathbb{Z}\right)=0$ and hence the definition
of $\mathfrak{C}$ is obvious: $\mathfrak{C}$ maps $0\in H_{\phi}^{1}\left(G;\mathbb{Z}\right)$
to the trivial phase in $G\text{-SPT}$. Clearly, $\mathfrak{D}\circ\mathfrak{C}$
is the identity.

On the other hand, if $G$ does not preserve the orientation of $\mathbb{E}^{3}$,
then $H_{\phi}^{1}\left(G;\mathbb{Z}\right)=\mathbb{Z}_{2}$ with
$[\nu_{r}^{1}]$ the only nontrivial element. Since the order of $\left|\Psi_{r}\right\rangle $
is 2 in $\text{SPT}^{d}(G)$, the group homomorphism $\mathfrak{C}$ can
be specified by mapping $[\nu_{r}^{1}]$ to the cSPT phase presented
by $\left|\Psi_{r}\right\rangle $. By construction, $\mathfrak{D}\circ\mathfrak{C}$
equals the identity on $H_{\phi}^{1}\left(G;\mathbb{Z}\right)$.

\subsubsection{$G=P1$ and $P\overline{1}$}

For $G=P1$ and $P\overline{1}$, we pick a coordinate system such
that $t_{v_{1}}$, $t_{v_{2}}$, $t_{v_{3}}$ work as $t_{x}$, $t_{y}$,
$t_{z}$ in Eqs.~(\ref{eq:tx}-\ref{eq:tz}) (and such that the origin
is an inversion center\footnote{Every orientation-reversing symmetry of $P\overline{1}$ is an inversion.} of $r$ for $P\overline{1}$).
Let $[\nu_{i}^{1}]$ be an element of $H_{\phi}^{1}\left(G;\mathbb{Z}\right)$
presented by a 1-cocycle satisfying $\nu_{i}^{1}\left(t_{v_{j}}\right)=\delta_{ij}$
and $\nu_{i}^{1}\left(r\right)\pmod2=1$ for $i=1,2,3$. As an example,
a $G$-symmetric SRE state $\left|\Psi_{2}\right\rangle $ for $[\nu_{2}^{1}]$
can be constructed by putting an $E_{8}$ state $\left|E_{8}^{y}\right\rangle $
on the plane $y=0$ with the symmetries $t_{v_{1}},t_{v_{3}}$ (as
well as the inversion $r:\left(x,y,z\right)\mapsto-\left(x,y,z\right)$
for $P\overline{1}$) respected and its translation image $t_{v_{2}}^{n}\left|E_{8}^{y}\right\rangle $
on planes $y=n$ for $n\in\mathbb{Z}$. In particular, since there
is a single $E_{8}$ layer passing the inversion center of $r$ in
the case $G=P\overline{1}$, the $E_{8}$ configuration of $\left|\Psi_{2}\right\rangle $
determines $\nu_{2}^{1}\left(r\right)\pmod2=1$. A $G$-symmetric
SRE state $\left|\overline{\Psi}_{2}\right\rangle $, inverse to $\left|\Psi_{2}\right\rangle $
in $\text{SPT}^{3}(G)$, can be obtained by replacing each $E_{8}$ layer
by its inverse. Clearly, $\left|\overline{\Psi}_{2}\right\rangle $
is mapped to $-[\nu_{y}^{1}]$ by $\mathfrak{D}$. Analogously, we
construct a $G$-symmetric SRE state $\left|\Psi_{i}\right\rangle $
for $[\nu_{i}^{1}]$ and its inverse $\left|\overline{\Psi}_{i}\right\rangle $
for $i=1,3$ as well. 

Noticing that each $[\nu^{1}]\in H_{\phi}^{1}\left(G;\mathbb{Z}\right)$
can uniquely be expressed as $\sum_{i=1}^{3}n_{i}[\nu_{i}^{1}]$ for
$G=P1$ and $n_{r}[\nu_{r}^{1}]+\sum_{i=1}^{3}n_{i}[\nu_{i}^{1}]$
for $G=P\overline{1}$ with $n_{r}=0,1$ and $n_{1},n_{2},n_{3}\in\mathbb{Z}$,
we can define $\mathfrak{C}\left([\nu^{1}]\right)$ as the phase presented
by the $G$-symmetric SRE state $\left|\Psi_{r}\right\rangle ^{\otimes n_{r}}\otimes\left|\Psi_{1}\right\rangle ^{\otimes n_{1}}\otimes\left|\Psi_{2}\right\rangle ^{\otimes n_{2}}\otimes\left|\Psi_{3}\right\rangle ^{\otimes n_{3}}$,
where $\left|\Psi_{r}\right\rangle ^{\otimes n_{r}}$ is needed only
for $G=P\overline{1}$ and $\left|\Psi\right\rangle ^{\otimes n}$
denotes the stacking of $\left|n\right|$ copies of $\left|\Psi\right\rangle $
(resp. its inverse $\left|\overline{\Psi}\right\rangle $) for $n\geq0$
(resp. $n<0$). In particular, $\left|\Psi\right\rangle ^{\otimes0}$
denotes any trivial state. By construction, $\mathfrak{C}$ is a group
homomorphism and $\mathfrak{D}\circ\mathfrak{C}$ is the identity
on $H_{\phi}^{1}\left(G;\mathbb{Z}\right)$.

\subsubsection{G with exactly one symmetry diretion other than $1$ or $\overline{1}$
	\label{subsec:Construction1}}

For $G$ with exactly one symmetry direction other than $1$ or $\overline{1}$
(e.g. $I\overline{4}$, $P2$, $P2_{1}/c$, $Pm$), we pick a Cartesian
coordinate system whose $z$ axis lies along the symmetry direction.
With each point represented by a column vector of its coordinates
$u=\left(x,y,z\right)^{\mathsf{T}}\in\mathbb{R}^{3}$, each $g\in G$
is represented as $u\mapsto\phi\left(g\right)R_{z}\left(\varphi\right)u+w$
with $w=\left(w^{x},w^{y},w^{z}\right)^{\mathsf{T}}\in\mathbb{R}^{3}$
and an orthogonal matrix
\begin{equation}
R_{z}\left(\varphi\right)\coloneqq\left(\begin{array}{ccc}
\cos\varphi & -\sin\varphi & 0\\
\sin\varphi & \cos\varphi & 0\\
0 & 0 & 1
\end{array}\right)
\end{equation}
describing a rotation about the $z$ axis. Clearly, $w^{z}$ is independent of the origin position
for $g\in G_{0}$, where $G_{0}$ is the orientation-preserving
subgroup of $G$. Let $\mathbb{A}$ be the collection of $w^{z}$
of all orientation-preserving symmetries. Then $\mathbb{A}=\kappa\mathbb{Z}$
for some positive real number $\kappa$. Pick $h\in G_{0}$ with $w^{z}=\kappa$. 

If $G_{0}\neq G$, then $r\in G-G_{0}$ may be an inversion, a rotoinversion,
a reflection, or a glide reflection. Clearly, there is a plane $\Pi$
perpendicular to the $z$ direction satisfying $r\Pi=\Pi$. If $G_{0}=G$,
we just pick $\Pi$ to be any plane perpendicular to the $z$ direction.
For convenience, we may choose the coordinate origin on $\Pi$ such
that $w^{z}=0$ for $r$. In such a coordinate system,  $w^{z}\in\kappa\mathbb{Z}$
for all $g\in G$ even if $\phi\left(g\right)=-1$. Moreover, $G_{\Pi}\coloneqq\left\{ g\in G|g\Pi=\Pi\right\} $
contains symmetries with $w^{z}=0$. It is a wallpaper group for $\Pi$,
which preserves the orientation of $\Pi$.

Putting a $G_{\Pi}$-symmetric $E_{8}$ layer $\left|E_{8}^{z}\right\rangle $
on $\Pi$ and its duplicate $h^{n}\left|E_{8}^{z}\right\rangle $
on $h^{n}\Pi$, we get an $h$-symmetric SRE state $\left|\Psi_{z}\right\rangle \coloneqq\cdots\otimes h^{-1}\left|E_{8}^{z}\right\rangle \otimes\left|E_{8}^{z}\right\rangle \otimes h\left|E_{8}^{z}\right\rangle \otimes\cdots$.
For all $f\in G_{\Pi}$, $fh^{n}\left|E_{8}^{z}\right\rangle =h^{\phi\left(f\right)n}f_{n}\left|E_{8}^{z}\right\rangle =h^{\phi\left(f\right)n}\left|E_{8}^{z}\right\rangle $,
where $f_{n}=h^{-\phi\left(f\right)n}gh^{n}$ is an element of $G_{\Pi}$
and hence leaves $\left|E_{8}^{z}\right\rangle $ invariant. Thus,
$\left|\Psi_{z}\right\rangle $ is $G_{\Pi}$-symmetric and hence
$G$-symmetric, as each $g=G$ can be expressed as $h^{m}f$ with
$m\in\mathbb{Z}$ and $f\in G_{\Pi}$. A $G$-symmetric SRE state
$\left|\overline{\Psi}_{z}\right\rangle $, inverse to $\left|\Psi_{z}\right\rangle $,
can be obtained by replacing each $E_{8}$ by its inverse. 

Let $[\nu_{z}^{1}]\in H_{\phi}^{1}\left(G;\mathbb{Z}\right)$ (presented
by a 1-cocyle $\nu_{z}^{1}$) be the image of $\left|\Psi_{z}\right\rangle $
under $\mathfrak{D}$. By construction, $\nu_{z}^{1}\left(g\right)=w^{z}/\kappa$
on each orientation-preserving symmetry $g:u\mapsto R_{z}\left(\varphi\right)u+\left(w^{x},w^{y},w^{z}\right)$.
When $G_{0}\neq G$, a single $E_{8}$ layer on $\Pi$ implies the
nonexistence of $r$-symmetric $q_{3}$ with $\partial q_{3}$ describing
the $E_{8}$ configuration of $\left|E_{8}^{z}\right\rangle $ and
hence $\nu_{z}^{1}\left(r\right)\pmod2=1$. Clearly, $\left|\overline{\Psi}_{z}\right\rangle $
is mapped to $-[\nu_{z}^{1}]$ by $\mathfrak{D}$.

Further, we notice that every $[\nu^{1}]\in H_{\phi}^{1}\left(G;\mathbb{Z}\right)$
can be uniquely expressed as $n_{h}[\nu_{z}^{1}]$ (resp. $n_{r}[\nu_{r}^{1}]+n_{h}[\nu_{z}^{1}]$)
when $G_{0}=G$ (resp. $G_{0}\neq G$); comparing values on $r$ and
$h$ gives $n_{h}=\nu^{1}\left(h\right)\in\mathbb{Z}$ and $n_{r}=\nu^{1}\left(r\right)+\nu^{1}\left(h\right)\pmod2\in\left\{ 0,1\right\} $.
Thus, a map $\mathfrak{C}$ can be defined by $n_{h}[\nu_{z}^{1}]\mapsto\left|\Psi_{z}\right\rangle ^{\otimes n_{h}}$
(resp. $n_{r}[\nu_{r}^{1}]+n_{h}[\nu_{z}^{1}]\mapsto\left|\Psi_{r}\right\rangle ^{\otimes n_{r}}\otimes\left|\Psi_{z}\right\rangle ^{\otimes n_{h}}$),
where $\left|\Psi\right\rangle ^{\otimes n}$ denotes the stacking
of $\left|n\right|$ copies of $\left|\Psi\right\rangle $ (resp.
its inverse $\left|\overline{\Psi}\right\rangle $) for $n\geq0$
(resp. $n<0$). In particular, $\left|\Psi\right\rangle ^{\otimes0}$
denotes any trivial state. Since $\left|\Psi_{r}\right\rangle $ has
order 2 in $\text{SPT}^{3}(G)$, $\mathfrak{C}$ is a group homomorphism.
By construction, $\mathfrak{D}\circ\mathfrak{C}$ is clearly the identity
on $H_{\phi}^{1}\left(G;\mathbb{Z}\right)$.

\section{Discussion \label{sec:discussions}}

Dimensional reduction procedure suggests the cSPT phases can be presented by lower dimensional SRE states in general. Earlier works systematically showed that the restricted bosonic cases ignoring possible presence of 2D nontrivial invertible topological orders (\emph{i.e.}, $E_8$ states or its multiples) in $d\leq3$ spatial dimensions classified by the group cohomology  $H_{\text{Borel},\phi}^{d+1}(G;U(1)) \cong H_{\phi}^{d+2}(G;\mathbb{Z})$. 



In this paper, we studied the ignored cSPT states and found that the complete classification of cSPT phases is given by the twisted generalized cohomology $h_{\phi}^{d}(G;F_{\bullet})\coloneqq[EG,\Omega F_{d+1}]_{G}$ with $\Omega$-spectrum $F_{\bullet}=\{F_{d}\}$ given by Eq.~(\ref{F1}-\ref{F3}) in low dimensions for bosonic systems. Since $h_{\phi}^{d}(G)$ 
was previously conjectured to completely classify SPT phases with \emph{internal} symmetry (\emph{i.e.}, symmetry keeping the location of degrees of freedom) $G$; thus, our result supports the Crystalline Equivalent Principle at the level  beyond group cohomology phases.

In particular, 
the abelian group $\text{SPT}^{d}(G)$ of cSPT phases (equipped with the stacking operation) for space group $G$ in $d=3$ spatial dimensions is isomorphic to  $h_{\phi}^{3}(G;F_{\bullet})\cong H_{\phi}^{5}(G;\mathbb{Z})\oplus H_{\phi}^{1}(G;\mathbb{Z})$, where $\phi$ indicates that $g\in G$ acting as multiplication by $\phi(g)=\pm 1$ depending on whether space orientation is preserved by $g$ or not.
The summand $H_{\phi}^{1}(G;\mathbb{Z})$, missed in the  group cohomology proposal, is related to $E_{8}$
state configurations on the 2-skeleton of the space  $\mathbb{E}^3$ which is triangulated into a $G$-simplicial complex. According to Theorem~\ref{thm:H1_formula}, it is isomorphic to $\mathbb{Z}^{k}$ if $G$ preserves orientation and $\mathbb{Z}^{k} \times \mathbb{Z}_{2}$ otherwise, where $k=0,1,3$ can be easily ready off from the international (Hermann-Mauguin) symbol of $G$. The results of $H_{\phi}^{5}(G; \mathbb{Z})$, $H_{\phi}^{1}(G;\mathbb{Z})$, and  $\text{SPT}^{3}(G)$ is tabulated in Table~\ref{table:H1}.

To show $H_{\phi}^{5}(G;\mathbb{Z})\oplus H_{\phi}^{1}(G;\mathbb{Z}) \cong \text{SPT}^{3}(G)$ and to explain its physical meaning, 
we defined a group homorphism 
$\mathfrak{C}:H_{\phi}^{1}(G;\mathbb{Z})\rightarrow \text{SPT}^{3}(G)$ by constructing a model for each $[\nu^{1}] \in H_{\phi}^{1}(G;\mathbb{Z})$.
Explicitly, if $G$ does not preserve space orientation, then a model corresponding to the nontrivial element of the $\mathbb{Z}_{2}$ factor in $H_{\phi}^{1}(G;\mathbb{Z})$ 
can be obtained through anyon condensation from $\mathfrak{e_f m_f}$ state bubbles on boundary of fundamental domains. Such a model gives an example how lower dimensional invertible topological orders naturally appear and is responsible for the existence of bosonic cSPT phases beyond group cohomology. Thus, a twisted generalized cohomology $h_{\phi}^{3}(G;F_{\bullet})\cong H_{\phi}^{5}(G;\mathbb{Z})\oplus H_{\phi}^{1}(G;\mathbb{Z})$ is needed for classifying cSPT phases completely; it naturally includes crystalline invertible topological phases generated by layered $E_{8}$ states, which corresponds to the $\mathbb{Z}^{k}$ factor in $H_{\phi}^{1}(G;\mathbb{Z})$.

We now discuss some natural generalizations and the outlook for further developments
motivated by the results presented here. First, the isomorphism $h_{\phi}^{3}(G;F_{\bullet})\cong H_{\phi}^{5}(G;\mathbb{Z}) \oplus H_{\phi}^{1}(G;\mathbb{Z})$ holds and our computation strategy of $H^{1}(G;\mathbb{Z})$ works in general, predicting a universal $\mathbb{Z}_{2}$ factor (from $H_{\phi}^{1}(G;\mathbb{Z})$) lying beyond the group cohomology classification (\emph{i.e.}, $H_{\phi}^{5}(G;\mathbb{Z})$) in the presence of any spacetime-orientation-reversing symmetry. There may be an extra $\mathbb{Z}^k$ factor labeling the phases of infinite order (e.g. layered $E_8$ states with the same chirality), which are invertible but not completely trivial even without symmetry. On the other hand, phases of finite order would become trivial when symmetry protection is removed.

In particular, the result of $H_{\phi}^1(G;\mathbb{Z})$ for any magnetic space group $G$ is given in Appendix~\ref{subsec:H1_mg}. It only depends on the magnetic point group $\wp G$ associated with $G$ and hence can be read off directly from the Opechowski-Guccione or Belov-Neronova-Smirnova symbol of $G$. Type I magnetic space group is just pure space group and the corresponding $H_{\phi}^{\phi}(G;\mathbb{Z})$ is already tabulated in Table~\ref{table:H1}. 
If $G$ is of type II or type IV, then $H_{\phi}^{\phi}(G;\mathbb{Z})$ is always $\mathbb{Z}_{2}$. If $G$ is of type III, then 
$H_{\phi}^{1}(G;\mathbb{Z})=\mathbb{Z}^{k}\times\mathbb{Z}_{2}^{\ell}$ with $k$ and $\ell$ listed in Table~\ref{tab:mg} for all the 58 possibilities of  the associated magnetic point group.

Secondly, we have carefully checked that $h_{\phi}^{d}(BG;F_{\bullet})$ gives a complete classification of cSPT phases, \emph{i.e.}, $h_{\phi}^{d}(BG;F_{\bullet}) \cong \text{SPT}^{d}(G)$, for any space group $G$ (at least for low dimensions $d\leq3$), provided that every cSPT phase can be represented by a state with arbitrarily short correlation length.
It is still an open question to see weather a generic crystalline gapped quantum phase hosting no nontrivial excitation always has such a representation. 
If it is true, then the dimensional reduction procedure applies to cSPT phases in general and we will be more confident that our classification is complete without further technical assumption.

Another remaining important problem is to complete a generalized cohomology theory for fermionic SPT phases. Along the same line of thinking, fermionic SPT (including cSPT) phases should also be classified by some generalized cohomology. 
Obviously, the corresponding $\Omega$-spectrum $F_{\bullet}=\{F_{n}\}$ must be different the one describing Bosonic SPT phases. However, the fermionic situation is much richer: a symmetry group $G$ may fractionalize on fermions in many different ways. 
For instance, a mirror reflection may square to $(-1)^{F}$ and translations may only commute
we may have $t_{x}t_{y}=(-1)^{F} t_{y} t_{x}$ and $m^{2}=(-1)^{F}$, where $t_{x}$, $t_{y}$ are two translations, $m$ denotes a mirror reflection, and $(-1)^{F}$ is the fermion parity which assigns distinct signs to states with even and odd number of fermions. It is clear that different symmetry fractionalizations on fermions are labeled by $H^{2}(G;\mathbb{Z}_{2})$. If possible,
we would desire a generalized cohomology that classifies fermionic SPT phases in a universal way, which produces a classification with a quadruple input $(d, G, \phi, \omega)$ and apply it for various physical systems, where $d$ is the dimension of space, $G$ is the symmetry group, $\phi: G\rightarrow \{\pm 1\}$ determines whether spacetime orientation is preserved or not, and 
$\omega \in H^{2}(G;\mathbb{Z}_{2})$ describing symmetry fractionalization on fermions.

\begin{acknowledgments}
We  would like to thank  Dominic  Else and Meng Cheng for  useful  correspondence. We are  grateful to Liang Fu, Michael Hermele, and Yi-Ping Huang for collaboration  on  related
prior work. H.S. acknowledges financial support from the Spanish MINECO grant FIS2015-67411, the CAM research consortium QUITEMAD+ S2013/ICE-2801, and Grant FEI-EU-17-14 PICC.
The research of S.-J.H. is supported by the U.S. Department of Energy, Office of Science, Basic Energy Sciences (BES) under Award number DE-SC0014415. 
\end{acknowledgments}

\appendix

\section{Computation of $H_{\phi}^{1}(G;\ZZZ)$\label{sec:comp_H1}}

\subsection{230 space groups (3D) \label{subsec:H1_sg}}

Here we compute $H_{\phi}^{1}(G;\ZZZ)$, which classifies the $E_{8}$ state configuration for bosonic SPT phases, for all 230 space groups in three spatial dimensions.
We will first prove a formula for $H_{\phi}^{1}(G;\ZZZ)$,
in Theorem \ref{thm:H1_formula}, which will then be used to obtain the classification, in Table \ref{table:H1}.

\begin{thm} Let $G$ be a 3D space group and $\phi:G\fromto\braces{\pm1}$
	be the homomorphism that sends orientation-preserving elements to
	1 and the rest to $-1$. Then 
	\begin{eqnarray}
	H_{\phi}^{1}(G;\ZZZ)=\begin{cases}
	\ZZZ^{k}, & \mbox{\ensuremath{G} preserves orientation},\\
	\ZZZ^{k}\times\ZZZ_{2}, & \mbox{otherwise},
	\end{cases}
	\end{eqnarray}
	where $k=0,1,3$ respectively if, in the international (Hermann-Mauguin) symbol, there are more than one symmetry direction
	listed, exactly one symmetry direction listed and it is not $1$ or
	$\bar{1}$, and exactly one symmetry direction listed and it is $1$
	or $\bar{1}$. \label{thm:H1_formula} \end{thm}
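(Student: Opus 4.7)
The plan is to invoke Lemma~\ref{lemma1} to reduce $H_{\phi}^{1}(G;\ZZZ)$ to a finite amount of data — the values of $\nu^{1}$ on three generating translations, and (when $G$ is non-orientation-preserving) the residue $\nu^{1}(r) \pmod 2$ on one orientation-reversing generator — and then to determine the allowed values of this data by a linear compatibility constraint coming from the point-group action. This splits the problem cleanly into a $\ZZZ^{k}$ piece (from translations) and, in the appropriate case, a $\ZZZ_{2}$ piece (from $r$).

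The restriction of any cocycle $\nu^{1}$ to the translation subgroup $T\cong\ZZZ^{3}$ is a group homomorphism $\nu^{1}|_{T}\in\Hom(T,\ZZZ)\cong\ZZZ^{3}$, since $\phi|_{T}\equiv 1$. Applying the cocycle identity to $g\,t = (g t g^{-1})\,g$ yields $\nu^{1}(p\cdot t) = \phi(p)\,\nu^{1}(t)$ for every $p$ in the point group $P=G/T$, where $p\cdot t$ denotes the conjugation action of $P$ on $T$. Writing $v\in\ZZZ^{3}$ for $\nu^{1}|_{T}$ and identifying $p$ with its orthogonal matrix on $\RRR^{3}$, this reads $p v = \phi(p)\, v$. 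Thus $v$ lies in the $\phi$-twisted $P$-invariant sublattice $L\subset\ZZZ^{3}$; since $\phi|_{T}\equiv 1$, coboundaries vanish on $T$, so distinct $v\in L$ give distinct cohomology classes.

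Next I would determine $\rank L$ by a short case analysis across the crystallographic point groups. A proper rotation of order $>1$ forces $v$ onto its rotation axis, while a mirror, inversion, or rotoinversion $p$ of determinant $-1$ forces $v$ onto the $(-1)$-eigenspace of $p$ — a single line in each case (the mirror normal, all of $\RRR^{3}$ for pure inversion, or the rotoinversion axis for $\overline{n}$ with $n>2$). Compounding two such constraints along non-parallel directions collapses $L$ to $0$. This yields the trichotomy: $\rank L=3$ iff $P\subseteq\{e,\bar{1}\}$; $\rank L=1$ iff there is exactly one privileged direction (monoclinic; trigonal/hexagonal of type $C_{3},C_{3i},C_{6},C_{3h},C_{6h}$ and subgroups, with centering variants); and $\rank L=0$ otherwise. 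Comparing with the conventions for listing directions in the Hermann--Mauguin symbol reproduces the three cases in the statement: no non-trivial direction listed ($k=3$), exactly one non-trivial direction listed ($k=1$), or two or more directions listed ($k=0$).

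Finally I would handle the $\ZZZ_{2}$ factor. In the orientation-preserving case there is no $r$ and Lemma~\ref{lemma1} already asserts that $v$ determines the class, so $H_{\phi}^{1}(G;\ZZZ)=\ZZZ^{k}$. In the non-orientation-preserving case, coboundaries shift $\nu^{1}(r)$ by $\phi(r)\nu^{0}-\nu^{0}=-2\nu^{0}$ while leaving $v$ unchanged, so $\nu^{1}(r)\pmod 2$ is a well-defined invariant independent of $v$, and Lemma~\ref{lemma1} tells us that $v$ and this residue together exhaust the parameters. The main obstacle is to verify that every choice of $v\in L$ together with every residue in $\ZZZ_{2}$ extends to a genuine cocycle on $G$ — in particular through the screw and glide relations of non-symmorphic space groups. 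Each such relation has the form (rotation or reflection generator)$^{n}=$ (translation), so the value of $\nu^{1}$ on the right-hand side is pinned by $v$, and the twisted-invariance identity $pv=\phi(p)v$ ensures consistency when the cocycle identity is expanded on the left. Combining all steps gives $H_{\phi}^{1}(G;\ZZZ)\cong\ZZZ^{k}$ or $\ZZZ^{k}\times\ZZZ_{2}$ with $k$ read off directly from the international symbol, as claimed.
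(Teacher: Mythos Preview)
Your approach is essentially the paper's: invoke Lemma~\ref{lemma1}, derive the twisted invariance constraint $pv=\phi(p)v$ on $\nu^{1}|_{T}$, read off the dimension $k$ of the invariant subspace from the point group (this is the paper's second lemma), and handle the $\ZZZ_{2}$ via the coboundary action on $\nu^{1}(r)$.

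There is one real gap. You claim that every $v\in L$ extends to an integral cocycle on $G$, arguing that the screw/glide relations $(\text{generator})^{n}=\text{translation}$ are ``consistent'' once $pv=\phi(p)v$ holds. For glides this is fine: with $\phi(r)=-1$ the left side telescopes to $0$, and the invariance forces the right side to vanish too. But for a screw rotation $s$ with $\phi(s)=+1$ and $s^{n}=t_{w}$, the cocycle identity gives $n\,\nu^{1}(s)=v(t_{w})$, and nothing in $v\in L$ forces $n\mid v(t_{w})$. Concretely, in $G=P4_{1}$ the screw satisfies $s^{4}=t_{z}$, so only those $v$ with $4\mid v(t_{z})$ extend; the image of $Z_{\phi}^{1}(G;\ZZZ)$ in $L\cong\ZZZ$ is the proper sublattice $4\ZZZ$. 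The theorem survives because this sublattice still has rank $k$, but you have not argued that. The paper closes the gap by working not with $\Hom(T,\ZZZ)$ but with the homomorphism $\beta:G_{0}\to V$ sending $(W,w)$ to the $V$-component of $w$, showing $\beta(G_{0})$ is a rank-$k$ lattice in $V$, and then setting $\nu^{1}(W,w)=\mu\cdot w$ for $\mu$ in the \emph{dual} lattice of $\beta(G_{0})$ --- integral on all of $G_{0}$ by construction, not merely on $T$. An alternative patch in your framework is to observe that $\lvert\wp G_{0}\rvert\cdot L$ always extends (every screw order divides $\lvert\wp G_{0}\rvert$), so the extendable sublattice has full rank.
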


We recall that $H_{\phi}^{1}(G;\ZZZ)$ is the quotient of group 1-cocycles
by group 1-coboundaries \citep{AdemMilgram}. That is, 
\begin{equation}
H_{\phi}^{1}(G;\ZZZ)\coloneq Z_{\phi}^{1}(G;\ZZZ)/B_{\phi}^{1}(G;\ZZZ),\label{Z1/B1}
\end{equation}
where 
\begin{eqnarray}
Z_{\phi}^{1}(G;\ZZZ) & \coloneq & \braces{\nu^{1}:G\fromto\ZZZ~|~d\nu^{1}=0},\label{Z1}\\
B_{\phi}^{1}(G;\ZZZ) & \coloneq & \braces{d\nu^{0}~|~\nu^{0}\in C_{\phi}^{0}(G;\ZZZ)}.\label{B1}
\end{eqnarray}
The condition $d\nu^{1}=0$ (cocycle condition) reads 
\begin{equation}
\nu^{1}(g_{1}g_{2})=\nu^{1}(g_{1})+\phi(g_{1})\nu^{1}(g_{2}),~~\forall g_{1},g_{2}\in G,\label{cocycle_condition-1}
\end{equation}
whereas the elements $d\nu^{0}$ (group 1-coboundaries) are of the
form 
\begin{eqnarray}
d\nu^{0}:G & \fromto & \ZZZ\nonumber \\
g & \mapsto & \begin{cases}
2m, & \phi(g)=-1,\\
0, & \phi(g)=1,
\end{cases}\label{coboundaries}
\end{eqnarray}
for some $m\in\ZZZ$. To proceed, we fix three linearly independent
translations $t_{v_{1}},t_{v_{2}},t_{v_{3}}\in G$ as well as an orientation-reversing
element $r\in G$, if any, where $t_{v}$ denotes a translation by
vector $v\in\RRR^{3}$. We have the following lemma.

\begin{lem}\label{lemma1}An element $\nu^{1}\in Z_{\phi}^{1}(G;\ZZZ)$
	is completely determined by $\nu^{1}(t_{v_{1}})$, $\nu^{1}(t_{v_{2}})$,
	and $\nu^{1}(t_{v_{3}})$, together with $\nu^{1}(r)$ if $G$ is
	non-orientation-preserving. \end{lem}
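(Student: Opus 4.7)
The plan is to exploit the standard short exact sequence $1 \to T \to G \to P \to 1$, where $T \cong \mathbb{Z}^3$ is the translation subgroup generated by $t_{v_1}, t_{v_2}, t_{v_3}$ and $P$ is the (finite) point group, and to recover $\nu^1$ on $G$ in three stages by repeatedly applying the cocycle identity \eqref{cocycle_condition-1}. Because only uniqueness is being asserted, no consistency obstruction needs to be resolved at any stage. First, on $T$ the twist $\phi$ is trivial, so the cocycle identity reduces to ordinary additivity: $\nu^{1}|_{T}\colon \mathbb{Z}^3 \to \mathbb{Z}$ is an ordinary group homomorphism, completely pinned down by its values on the three generators $t_{v_1}, t_{v_2}, t_{v_3}$.

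Next, suppose $g \in G_0$, the orientation-preserving subgroup. Its image in $G_0/T$ has finite order $n \geq 1$ because the point group of any 3D space group is finite, hence $g^n \in T$. Iterating the cocycle identity using $\phi(g) = 1$ yields $\nu^1(g^n) = n\,\nu^1(g)$. Since $\mathbb{Z}$ is torsion-free and $n \geq 1$, this equation has at most one solution for $\nu^1(g)$, and its right-hand side is already determined by Stage 1. Therefore every $\nu^1(g)$ with $\phi(g) = 1$ is fixed by $\nu^1(t_{v_1})$, $\nu^1(t_{v_2})$, $\nu^1(t_{v_3})$.

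Finally, assume $G \neq G_0$ and invoke the fixed orientation-reversing element $r \in G$. For any $g \in G \setminus G_0$, the product $gr^{-1}$ is orientation-preserving, and the cocycle identity gives
\begin{equation*}
\nu^1(g) \;=\; \nu^1(gr^{-1}) + \phi(gr^{-1})\,\nu^1(r) \;=\; \nu^1(gr^{-1}) + \nu^1(r),
\end{equation*}
so $\nu^1(g)$ is determined by $\nu^1(gr^{-1})$ (known from Stage 2) together with $\nu^1(r)$, giving in total the four claimed data. The only nontrivial input used anywhere is the crystallographic fact that the point group of a 3D space group is finite, invoked in Stage 2 to guarantee $g^n \in T$; apart from that, the whole argument is a direct unwinding of the cocycle condition, and I expect no further obstacle.
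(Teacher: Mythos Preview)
Your proof is correct and follows essentially the same route as the paper's: determine $\nu^1$ on translations by additivity, extend to all of $G_0$ by taking a power landing in the translation subgroup and using torsion-freeness of $\ZZZ$, then reach orientation-reversing elements via the coset decomposition $G = G_0 \cup G_0 r$. The only cosmetic difference is that the paper splits Stage~2 into separate cases (finite-order $g$ with $g^n = 1$, versus infinite-order $g$ a translation or screw rotation with $g^n \in T$), whereas you handle both at once via finiteness of the point group; your packaging is slightly cleaner but the content is identical.
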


\begin{proof} Setting $g_{1}=g_{2}$ to be the identity element $1\in G$
	in Eq.\,(\ref{cocycle_condition-1}), we see that $\nu^{1}(1)=2\nu^{1}(1)$,
	so $\nu^{1}(1)=0$. If an element $g\in G$ is orientation-preserving
	and has finite order $n$, then Eq.\,(\ref{cocycle_condition-1})
	implies $n\nu^{1}(g)=\nu^{1}(g^{n})=\nu^{1}(1)=0$, so $\nu^{1}(g)=0$.
	If an element $g\in G$ is orientation-preserving and has infinite
	order, then it is either a translation or a screw rotation; either
	way, there exists an integer $n$ such that $g^{n}=t_{v_{1}}^{n_{1}}t_{v_{2}}^{n_{2}}t_{v_{3}}^{n_{3}}$
	for some integers $n_{1},n_{2},n_{3}$, and Eq.\,(\ref{cocycle_condition-1})
	implies 
	\begin{equation}
	\nu^{1}(g)=\frac{1}{n}\brackets{n_{1}\nu^{1}(t_{v_{1}})+n_{2}\nu^{1}(t_{v_{2}})+n_{3}\nu^{1}(t_{v_{3}})}. \label{eq:nu_t}
	\end{equation}
	Thus the value of $\nu^{1}$ on the orientation-preserving subgroup
	of $G$ is completely determined. The orientation-reversing elements
	of $G$, if any, can all be written $gr$ for some orientation-preserving
	$g$, and Eq.\,(\ref{cocycle_condition-1}) implies 
	\begin{equation}
	\nu^{1}(gr)=\nu^{1}(g)+\nu^{1}(r).\label{extending_by_r}
	\end{equation}
	Thus the value of $\nu^{1}$ on the orientation-reversing subset of
	$G$ is completely determined as well. \end{proof}

Let us show that there are constraints on the integers $\nu^{1}(t_{v_{1}})$,
$\nu^{1}(t_{v_{2}})$, and $\nu^{1}(t_{v_{3}})$. It is convenient
to represent the triple $\paren{\nu^{1}(t_{v_{1}}),\nu^{1}(t_{v_{2}}),\nu^{1}(t_{v_{3}})}$
by the unique vector $\mu\in\RRR^{3}$ such that $\nu^{1}(t_{v})=\mu\cdot v$.
In Cartesian coordinates, each point of the Euclidean space $\mathbb{E}^3$ can be identified with a column vector $u=(x,y,z)^{\mathsf{T}}\in \mathbb{R}^3$ and each $g\in G$ can be presented as $u\mapsto W u + w$, where the superscript $\mathsf{T}$ denotes matrix transpose, $W \in O(3)$ (\emph{i.e.}, $W$ is an orthogonal matrix) and $w\in \mathbb{R}^3$ is a column vector. Thus, any $g\in G$ can be denoted by a pair $(W,w)\in O(3)\times \mathbb{R}^3$. Clearly, $\phi$ is also well-defined for $W\in O(3)$ based on whether $W$ preserves orientation
and we have $\phi(g)=\phi(W)=\det (W)$. 
Noticing 
\begin{eqnarray}
gt_{v}g^{-1}=t_{Wv},~~\forall t_{v}\in G,
\end{eqnarray}
and applying
$\nu^{1}$ to both sides and utilizing the cocycle condition (\ref{cocycle_condition-1}),
we obtain 
\begin{equation}
\phi(g)\nu^{1}(t_{v})=\nu^{1}(t_{Wv}),\label{W_condition_1}
\end{equation}
which can be recast as 
\begin{equation}
\phi(W)\mu\cdot v=\mu\cdot Wv.
\end{equation} 
Since this holds for all $v\in\RRR^{3}$ for which $t_{v}\in G$,
which span all three dimensions, we must have 
\begin{equation}
\left[\phi(W)I-W^{\mathsf{T}}\right]\mu=0,
\end{equation}
or equivalently $[\phi(W) W - I]\mu=0$ with $I$ denoting the identity matrix.
Thus we see that
\begin{equation}
\mu\in\bigcap_{W\in{\wp G}}\kernel\brackets{\phi(W)W-I},
\end{equation}
where $\wp$ denotes the group homomorphism $G\rightarrow O(3)$ that maps $(W,w)$ to $W$ and hence $\wp G$ is the point group of $G$.

The dimensionality of $\bigcap_{W\in \wp G}\kernel\brackets{\phi(W)W-I} \subseteq \mathbb{R}^3$ can
be determined as follows.

\begin{lem}  The vector space $\bigcap_{W \in \wp G}\kernel\brackets{\phi(W)W-I}$
	is 0-, 1-, or 3-dimensional if, respectively, in the international (Hermann-Mauguin)
	symbol of space group $G$, there are more than one symmetry directions listed,
	exactly one symmetry direction listed and it is not $1$ or $\bar{1}$,
	or exactly one symmetry direction listed and it is $1$ or $\bar{1}$.
\end{lem}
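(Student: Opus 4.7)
The plan is to observe that $\phi(W)W$ is always an element of $SO(3)$: when $\det W=1$ we have $\phi(W)W=W$, while when $\det W=-1$ we have $\phi(W)W=-W$, which still has determinant $+1$. Consequently, each $\kernel\brackets{\phi(W)W-I}$ is either all of $\RRR^{3}$ (precisely when $\phi(W)W=I$, i.e.\ $W=\pm I$) or the $1$-dimensional rotation axis of $\phi(W)W$. So it suffices to understand which rotation axes occur among $\braces{\phi(W)W:W\in\wp G,\ W\neq\pm I}$: the intersection $\bigcap_{W}\kernel\brackets{\phi(W)W-I}$ is $3$-dimensional if this set is empty, $1$-dimensional if every element shares a common axis, and $0$-dimensional as soon as two distinct axes appear.

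Next I would unpack what the ``symmetry directions'' in the international (Hermann--Mauguin) symbol mean. By convention each position in the symbol refers to a single direction $a$, and the point-group element placed there is either a proper rotation or screw about $a$, a mirror or glide whose plane has normal $a$, or a rotoinversion $\overline{n}$ about $a$. In every case, $\phi(W)W$ is a proper rotation with axis precisely $a$: for a proper rotation this is tautological; for a mirror with normal $a$, $\phi(W)W=-W$ is the two-fold rotation about $a$; for a rotoinversion $W=(-I)R_{a}(2\pi/n)$, $\phi(W)W=R_{a}(2\pi/n)$ has axis $a$. Thus the set of axes of $\braces{\phi(W)W:W\neq\pm I}$ coincides exactly with the set of symmetry directions listed in the international symbol other than $1$ or $\overline{1}$.

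The three dimensional counts now follow. If the symbol lists only $1$ or $\overline{1}$, then $\wp G\subseteq\braces{I,-I}$, so every factor in the intersection equals $\RRR^{3}$ and the answer is $3$-dimensional. If exactly one symmetry direction other than $1$ or $\overline{1}$ is listed (the monoclinic situation), every non-central element of $\wp G$ has $\phi(W)W$ rotating about that common direction $a$, and the intersection collapses to the $1$-dimensional line $\RRR a$. If more than one symmetry direction is listed, $\wp G$ contains elements whose $\phi(W)W$ rotate about distinct non-parallel axes, and the kernels of the corresponding $\phi(W)W-I$ intersect only at the origin.

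The main obstacle I anticipate is the middle step: rigorously justifying that the conventions of the international symbol really align the listed symmetry directions with the rotation axes of $\phi(W)W$ across all 230 space groups. This requires consulting the convention used in the \emph{International Tables for Crystallography} and checking, by crystal system (monoclinic for the one-direction case; orthorhombic, tetragonal, trigonal, hexagonal, and cubic for the multi-direction case), that no hidden additional rotation axis appears and that each listed direction really is hit by some $W$ with $\phi(W)W\neq I$. Once this correspondence is established, the rest is the one-line rank--nullity observation of the first paragraph.
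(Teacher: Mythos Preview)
Your proposal is correct and follows essentially the same approach as the paper: identify $\kernel\brackets{\phi(W)W-I}$ with the symmetry axis of $W$ (or $\RRR^3$ when $W=\pm I$), then intersect over the generators listed in the Hermann--Mauguin symbol. Your version is more explicit (the $SO(3)$ observation and the case-by-case treatment of mirrors and rotoinversions), but note that your parenthetical ``the monoclinic situation'' is too narrow---groups like $P4$, $P3$, $P6$, $P\overline{4}$ also have exactly one listed symmetry direction other than $1$ or $\overline{1}$.
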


\begin{proof} 
	We can focus on international symbol of the point group,
	which has the same number of symmetry directions as the space group.
	A symbol $n$ (resp.\,$\bar{n}$) represents an $n$-fold rotation
	(resp.\,rotoinversion) about a given axis. It is easy to see, if
	$W$ represents such a rotation or rotoinversion, that $\kernel\brackets{\phi(W)W-I}$
	is nothing but the symmetry axis, unless $n=1$, in which case $\kernel\brackets{\phi(W)W-I}=\RRR^{3}$.
	Now we simply take the intersection for all symmetry generators listed
	in the international (Hermann-Mauguin) symbol. 
\end{proof}


For short, let $V$ denote $\bigcap_{W\in \wp G} \kernel\brackets{\phi(W)W-I}$. Due to the constraint that $\nu^{1}$ is integer-valued, $\mu$ in fact
lives in a discrete subset of $V$. Below, let us express this subset explicitly and show that it is isomorphic to $\mathbb{Z}^{k}$ with $k\coloneqq \dim V$. 
To label the translations in $G$ along $V$, 
let $\mathbb{L}(G)  \coloneqq\left\{ v\in \mathbb{R}^3|t_{v}\in G\right\}$ and $\mathbb{L}(G, V)\coloneqq \mathbb{L}(G)\cap V$. We have $\mathbb{L}(G,V)\cong \mathbb{Z}^{k}$ with generators denoted $a_{j}$ for $j=1,2,\cdots,k$. 
There is a group homomorphism $\beta: G_0\rightarrow V$ mapping $g=(W,w' + w'')$ to $w' \in V$, where $w'' \perp V$ and $G_0$ denotes the orientation-preserving subgroup of $G$.
Then we have
$\beta\left(G_{0}\right)\subseteq\frac{1}{\left|\wp G_0\right|}\mathbb{L}\left(G,V\right)$ based on a case-by-case discussion: (1) This is obvious if $\wp G_0$ is trivial; (2) If $\wp G_0$ is nontrivial and $W=1$, then $w'+w''\in \mathbb{L}(G)$ and $\left|\wp G_0\right| w' =
\sum_{i=1}^{\left|\wp G_0\right|}W_{*}^{i}(w'+w'')\in \mathbb{L}(G,V)$ for any nontrivial $W_{*}\in \wp G_0$; (3) If $W\neq 1$, then $g^{\left|\wp G_0\right|}$ is the translation by $\left|\wp G_0\right| w'\in \mathbb{L}(G,V)$.
Thus, $\mathbb{L}\left(G,V\right) \subseteq
\beta\left(G_{0}\right)\subseteq\frac{1}{\left|\wp G_0\right|}\mathbb{L}\left(G,V\right)$ and hence $\beta(G_0)\cong \mathbb{Z}^k$. Pick $\{a_j\}_{j=1,2,\cdots,k}$ that generates $\beta(G_0)$ and let $\{\alpha_i\}_{i=1,2,\cdots,k}$ be the dual basis satisfying $\alpha_{i}\cdot a_{j}=\delta_{ij}$. Pick $g_{j}$ such that $\beta(g_j)=a_j$ for $j=1,2,\cdots,k$. It is clear that $\mu$ corresponding to $\nu^{1}\in Z_\phi^{1}(G;\mathbb{Z})$ has the form $\sum_{i=1}^{k} \nu^{1}(g_i)\alpha_{i}$.

Conversely, each $\mu \in \langle \left\{ \alpha_{i}\right\} _{i=1,2,\cdots,k} \rangle $
determines a cocycle $\nu^{1}\in Z_{\phi}^{1}(G_{0};\mathbb{Z})$ mapping $(W,w)\in G_{0}$ to $\mu \cdot w\in\mathbb{Z}$, 
where $\langle \left\{ \alpha_{i}\right\} _{i=1,2,\cdots,k} \rangle \cong \mathbb{Z}^{k} $ is the lattice generated by $\left\{ \alpha_{i}\right\} _{i=1,2,\cdots,k}$.
If $G_0=G$, this completes a one-to-one correspondence between $Z_{\phi}^{1}(G;\mathbb{Z})$ and $\langle \left\{ \alpha_{i}\right\} _{i=1,2,\cdots,k} \rangle$. 
If $G_0 \neq G$, then it is straightforward to check that $\nu^{1}\left(ghg^{-1}\right)=\phi\left(g\right)\nu^{1}\left(h\right), \forall g\in G, \forall h\in G_{0}$ from construction and hence that  Eq.~(\ref{extending_by_r})
always extends to a well-defined cocycle $\nu^{1}\in Z_{\phi}^{1}(G;\mathbb{Z})$ regardless of the value of $\nu^{1}(r)\in \mathbb{Z}$ with $r\in G-G_{0}$. 
Thus, the cocycles in $Z_{\phi}^{1}(G;\mathbb{Z})$ can be labeled by $\mu \in \langle \left\{ \alpha_{i}\right\} _{i=1,2,\cdots,k} \rangle $ and $\nu^{1}(r)\in \mathbb{Z}$.

From definitions (\ref{Z1/B1}-\ref{B1}) and Eq.~(\ref{coboundaries}),
we further see that $\nu^{1}(r)$ is defined modulo 2 when the cohomology
class $[\nu^{1}]\in H_{\phi}^{1}(G;\ZZZ)$ is concerned. 
Thus, the proof of Theorem \ref{thm:H1_formula} is completed.

Now we can use Theorem~\ref{thm:H1_formula} to obtained the missed factor $H_{\phi}^{1}(G;\ZZZ)$ (due to nontrivial $E_{8}$ state configurations)
in the classification of 3D
bosonic cSPT phases for all 230 space groups. The results
have been tabulated in Table~\ref{table:H1}. We have juxtaposed these
results with the classification of non-$E_{8}$-based cSPT phases
previously obtained in Ref.~\citep{Huang_dimensional_reduction}.
The product of the two gives the complete classification of 3D bosonic
cSPT phases.

\subsection{Point groups and magnetic groups (3D)\label{subsec:H1_mg}}

The above computation will becomes much simpler when we only considers the 32 point groups (3D). 
Due to the absence of translation symmetry, the factor $\mathbb{Z}^{k}$ is not needed. 
The final result is 
$ H_{\phi}^{1}(G;\mathbb{Z})$ is trivial if the point group $G$ preserves orientation and equals $\mathbb{Z}_{2}$ otherwise; it is also gives the extra factor describing the $E_{8}$ state configuration missed in the classification of 3D bosonic SPT phases protected point group symmetry in Ref.~\cite{cSPT2017}. 
Actually, this result of $ H_{\phi}^{1}(G;\mathbb{Z})$ holds even for a magnetic point group $G$.

Furthermore, $H_{\phi}^{1}(G;\mathbb{Z})$ can also be computed as in Appendix~\ref{subsec:H1_sg} for the 1651 magnetic space groups; only  $k=\dim\bigcap_{W\in \wp G}\left(\phi\left(W\right)W-I\right)$ is computed in a slight different way,
where $\wp G$ is the magnetic point group associated with $G$. To proceed, let us first review the basic structure of magnetic space groups.

Let $\mathfrak{\mathbb{E}}^{d}$ be the $d$-dimensional Euclidean
space and $\mathcal{I}\left(\mathbb{\mathbb{E}}^{d}\right)$ be the
group of its isometries. A subgroup $G$ of $\mathcal{I}\left(\mathbb{E}^{d}\right)\cup\mathcal{I}\left(\mathbb{E}^{d}\right)\mathcal{T}$
is called a magnetic space group if its translation subgroup is generated
by translations along $d$ linearly independent vectors $v_{j}$,
$j=1,2,\cdots,d$, where $\mathcal{T}$ is the time reversion and
$\mathcal{I}\left(\mathbb{E}^{d}\right)\mathcal{T}$ collects
the operations of an space isometry combined with $\mathcal{T}$.
Elements of $\mathcal{I}\left(\mathbb{E}^{d}\right)$ (resp. $\mathcal{I}\left(\mathbb{E}^{d}\right)\mathcal{T}$)
are called \emph{unitary }(resp.\emph{ anti-unitary}); they can be
presented by $\left(W,w\right)$ with $W=Q$ (resp. $W=Q'$) in a
cartesian coordinate system of $\mathbb{E}^{d}$, where $Q\in O\left(d\right)$
(\emph{i.e.}, is a $d\times d$ orthogonal matrix), $w\in\mathbb{R}^{d}$,
and $Q'\coloneqq Q\mathcal{T}$ denotes the combined operation of
rotation $Q$ and time reversion $\mathcal{T}$. Clearly, $\left(Q',w\right)=\left(Q,w\right)\mathcal{T}$
and a point labeled by $u\in\mathbb{R}^{3}$ is moved to $Wu+w$ by
$\left(W,w\right)$ with $Wu$ denoting $Qu$ for both $W=Q,Q'$.
Let $G^{+}$ (resp. $G^{-}$) collect all the unitary (resp. anti-unitary)
operations in $G$; it is clear that $G^{+}$ is a normal subgroup
of $G$. There is a well-defined group homomorphism $\wp:G\rightarrow O\left(d\right)\cup O\left(d\right)\mathcal{T}$
sending $\left(W,w\right)$ to $W$. The image of $\wp$ (\emph{i.e.},
$\wp G=\wp G^{+}\cup\wp G^{-}$) is the magnetic point group associated
with $G$. In addition, $\beta^{-1}\left(I\right)\subseteq G^{+}$
is the translation subgroup of $G$ and $\beta^{-1}\left(I'\right)\subseteq G^{-}$
contains operations of the form $\left(I',w\right)$, where $I$ is
the identity matrix. 

A generic magnetic space group $G$ must be one of four basic types. 
Type I:
all elements of $G$ are unitary, \emph{i.e.}, $G^{-}=\emptyset$;
in other words, $G$ is just a space group. Type II: $G^{-}=G^{+}\mathcal{T}$;
equivalently, $G$ contains the time reversion $\mathcal{T}$. Type
III: $G^{-}\neq\emptyset$ but $I'\notin\wp G$. Type IV: $I'\in\wp G$
but $\mathcal{T}\notin G$. In three ($d=3$) spatial dimensions,
there are 230 type I, 230 type II, 674 type III, and 517 type IV magnetic
space groups respectively; thus, we have $230+230+674+517=1651$ in total \cite{magneticG}.

The calculation process for $H_{\phi}^{1}\left(G;\mathbb{Z}\right)$ in Appendix~\ref{subsec:H1_sg} can be repeated for magnetic space groups and we get the following result.
\begin{thm}
	\label{thm:H1_magnetic}Let $G$ be a 3D magnetic space group and
	$\phi:G\rightarrow\left\{ \pm1\right\} $ be the homomorphism sending
	elements that preserve the orientation of spacetime to $1$ and the
	rest to $-1$. Then 
	\begin{equation}
	H_{\phi}^{1}\left(G;\mathbb{Z}\right)=\mathbb{Z}^{k}\times\mathbb{Z}_{2}^{\ell},
	\end{equation}
	where $k=\dim\bigcap_{W\in\wp G}\ker\left(\phi\left(W\right)W-I\right)$
	with $\wp G$ denoting the magnetic point group associated with $G$
	and $\ell=0$ (resp. $1$) if there exists no (resp. some) element
	of $G$ that reverses spacetime orientation.
	
	For $G$ of type I, $k=0,1,3$ respectively if, in the international
	(Hermann-Mauguin) symbol, there are more than one symmetry direction
	listed, exactly one symmetry direction listed and it is not $1$ or
	$\bar{1}$, and exactly one symmetry direction listed and it is $1$
	or $\bar{1}$.
	
	For $G$ of type II or type IV, we always have $k=0$ and $\ell=1$,
	i.e., $H_{\phi}^{1}\left(G;\mathbb{Z}\right)=\mathbb{Z}_{2}$.
	
	For $G$ of type III, $k$ and $\ell$ only depend on the magnetic
	point group $\wp G$ as given in Table~\ref{tab:mg}.
\end{thm}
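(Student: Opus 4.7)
The plan is to adapt the argument used for Theorem~\ref{thm:H1_formula}, the crucial modifications being that the spatial point group $\wp G\subseteq O(3)$ is replaced by the magnetic point group $\wp G\subseteq O(3)\cup O(3)\mathcal{T}$ and that $\phi$ now tracks spacetime orientation (so $\mathcal{T}$ itself is assigned $\phi(\mathcal{T})=-1$). First I would verify that Lemma~\ref{lemma1} carries over verbatim: any $\nu^{1}\in Z^{1}_{\phi}(G;\mathbb{Z})$ is determined by its values on three linearly independent translation generators $t_{v_{1}},t_{v_{2}},t_{v_{3}}$ together with $\nu^{1}(r)$ on a single element $r\in G$ with $\phi(r)=-1$ (if any exists). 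The proof only uses the abstract cocycle condition and the value of $\phi$, so the potential anti-unitarity of $r$ is irrelevant.

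Next I would package $(\nu^{1}(t_{v_{1}}),\nu^{1}(t_{v_{2}}),\nu^{1}(t_{v_{3}}))$ as a vector $\mu\in\mathbb{R}^{3}$ with $\nu^{1}(t_{v})=\mu\cdot v$ and extract its constraints. For any $g=(W,w)\in G$ the identity $gt_{v}g^{-1}=t_{Wv}$ still holds, because time reversal commutes with spatial translations and $W\in\wp G$ acts on $\mathbb{R}^{3}$ through its spatial part. Applying $\nu^{1}$ to both sides and using the cocycle condition reproduces, exactly as in Appendix~\ref{subsec:H1_sg}, the constraint $[\phi(W)W-I]\mu=0$ for every $W\in\wp G$, so $\mu$ lies in the subspace $V\coloneqq\bigcap_{W\in\wp G}\ker[\phi(W)W-I]$. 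The same lattice-theoretic argument as in the non-magnetic case (using the translation subgroup and the dual basis construction) then shows that admissible $\mu$'s form a subgroup of $V$ isomorphic to $\mathbb{Z}^{k}$ with $k=\dim V$, while coboundaries reduce $\nu^{1}(r)$ modulo $2$ for any orientation-reversing $r$, contributing a factor $\mathbb{Z}_{2}^{\ell}$ with $\ell=1$ if some $g\in G$ has $\phi(g)=-1$ and $\ell=0$ otherwise. This establishes the general formula $H^{1}_{\phi}(G;\mathbb{Z})=\mathbb{Z}^{k}\times\mathbb{Z}_{2}^{\ell}$.

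The type-by-type identification of $k$ and $\ell$ is then bookkeeping. Type I is precisely Theorem~\ref{thm:H1_formula}. For Types II and IV, $\wp G$ contains $I'=I\mathcal{T}$, with $\phi(I')=-1$ and trivial spatial action, so $[\phi(I')I-I]\mu=-2\mu=0$ forces $\mu=0$ and $k=0$; both types trivially contain spacetime-orientation-reversing elements, so $\ell=1$ and $H^{1}_{\phi}(G;\mathbb{Z})=\mathbb{Z}_{2}$. For Type III the anti-unitary part $\wp G^{-}$ consists of operations $Q\mathcal{T}$ with $Q\neq I$, so the forced vanishing of $\mu$ no longer occurs and $k$ depends delicately on the generator structure of $\wp G$; one must traverse the $58$ Type III magnetic point groups and evaluate $V$ in each case, recording the outcome in Table~\ref{tab:mg}. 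Here $\ell=1$ always, since anti-unitary elements are present by definition.

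The main obstacle I anticipate is this last combinatorial step: although each individual computation reduces to intersecting kernels of a handful of integer matrices of the form $\phi(W)W-I$ and thus involves no conceptual difficulty beyond the proof of Theorem~\ref{thm:H1_formula}, one must systematically enumerate generators for all $58$ Type III magnetic point groups and cross-check the resulting $k$'s against a consistent labeling convention (Opechowski--Guccione or Belov--Neronova--Smirnova). A useful organizational device is to note that $k$ is the codimension in $\mathbb{R}^{3}$ of the subspace spanned by $\{\mu-\phi(W)W\mu\}_{W\in\wp G}$, which often allows one to read $k$ off directly from the decomposition of the representation of $\wp G$ on $\mathbb{R}^{3}$ twisted by $\phi$.
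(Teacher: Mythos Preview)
Your overall strategy---adapting the proof of Theorem~\ref{thm:H1_formula} by replacing the spatial point group with the magnetic point group and letting $\phi$ track spacetime orientation---is exactly the paper's approach, and your treatment of the general formula and of Types~I, II, and IV is correct and matches the paper.

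However, there is a genuine error in your Type~III discussion: the claim that ``$\ell=1$ always, since anti-unitary elements are present by definition'' is false. Anti-unitarity is not the same as spacetime-orientation reversal. An anti-unitary element $Q\mathcal{T}$ reverses time orientation, so it reverses \emph{spacetime} orientation if and only if $Q$ \emph{preserves} spatial orientation, i.e.\ $\det Q=+1$; when $\det Q=-1$ (as for $m'=m\mathcal{T}$ or $\overline{1}'=\overline{1}\mathcal{T}$), the two sign flips cancel and $\phi(Q\mathcal{T})=+1$. Thus a Type~III magnetic point group can perfectly well have $\phi\equiv+1$ on all elements, giving $\ell=0$. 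Concretely, for $\wp G=m'=\{I,m\mathcal{T}\}$ one finds $\ell=0$ (and $k=2$, since $\ker(\phi(m')m-I)=\ker(m-I)$ is the mirror plane); for $\wp G=2'=\{I,C_{2}\mathcal{T}\}$ one finds $\ell=1$ since $\det C_{2}=+1$. This is why Table~\ref{tab:mg} records $\ell$ as well as $k$ for each of the 58 Type~III magnetic point groups: both quantities genuinely vary and must be determined case by case from whether the group contains any element---unitary or anti-unitary---with $\phi=-1$.
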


Theorem~\ref{thm:H1_formula} already gives $H_{\phi}^{1}\left(G;\mathbb{Z}\right)$ for type I magnetic space group (\emph{i.e.}, space group) $G$ and the results are tabulated in Table~\ref{table:H1}. 
For type II and type IV magnetic space group $G$, we have $I'\in \wp G$, $\ker(\phi(I')I'-I)=0$ and hence $H_{\phi}^{1}\left(G;\mathbb{Z}\right)=\mathbb{Z}_{2}$, where both $I'$ and $I$ act as the identity map on $\mathbb{R}^3$ but $\phi(I')=-1$ because of the time reversion.

For any magnetic group $G$ of type III, $\wp G$ must be one of the
58 magnetic point groups listed in Table~\ref{tab:mg}, which are
called type III as well. For $W=Q\in O\left(d\right)$, $\ker\left(\phi\left(W\right)W-I\right)=\ker\left(\det\left(Q\right)Q-I\right)$
is the symmetry direction if $Q$ is $n$ (\emph{i.e.}, an $n$-fold
rotation) or $\overline{n}$ (\emph{i.e.}, an $n$-fold rotoinvertion)
with $n>1$. For $W=Q'\in O\left(d\right)\mathcal{T}$, $\ker\left(\phi\left(W\right)W-I\right)=\ker\left(\det\left(Q\right)Q+I\right)$
is the plane perpendicular to the symmetry direction if $Q$ is $2$
or $\overline{2}=m$ (\emph{i.e.}, a mirror reflection); otherwise,
$\ker\left(\phi\left(W\right)W-I\right)=0$. Thus, it is easy to obtain
$k$ as well as $\ell$ tabulated in Table~\ref{tab:mg} from the
international symbols, where primed symmetries are those combined with time reversion.

\section{Stacking of crystalline states}
\label{sec:stacking}


In this appendix, we discuss the notation of stacking operation and establish the Abelian group structure of $\text{SPT}^{d}(G)$ (\emph{i.e.}, the collection of SPT orders protected by symmetry $G$ in $d$ spatial dimensions) with an emphasis on space group symmetries.

Given a symmetry group (probably a space group) $G$, 
let $a$ be a system made of spins (or more generic bosonic degrees of freedom) $\sigma_{u}$ located at discrete positions $u\in A \subseteq \mathbb{E}^d$ in $d$ spatial dimensions. Let 
$\mathcal{H}_{a}$ denote its Hilbert space with $G$ represented by $\rho_{a}: G \rightarrow \Aut(\mathcal{H}_{a})$, where $\Aut(\mathcal{H}_{a})$ is the group of unitary and antiunitary\footnote{Antiunitary operators are needed when time reversion is involved.} operators on $\mathcal{H}_{a}$. Clearly, we need $\rho_{a}(g)\sigma_{u}\rho_{a}^{-1}(g)=\sigma_{g.u}$, \emph{i.e.}, $g$ mapping the spin at $u$ to the one at $g.u$. With spins symmetrically arranged in space,
such a system has a $G$-symmetric tensor product state $\left|T_{a}\right\rangle $.

In addition, let $\left|\Psi_{a}\right\rangle $ be a short range correlated state; due to its short range correlation nature, $\left|\Psi_{a}\right\rangle $
can be described as the ground state of some gapped Hamiltonian $\hat{H}_{a}$
with finite range interactions. If $\hat{H}_{a}$ allows no nontrivial lower dimensional excitation (e.g. anyons in a 2D system and loops in a 3D system) in bulk, it is called \emph{short range entangled} (SRE). Clearly, $\left|T_{a}\right\rangle $ is SRE. 
Theory of SPT phases presented in this paper classifies $G$-symmetric SRE states. 

\begin{figure}
	\centering \includegraphics[width=3.3in]{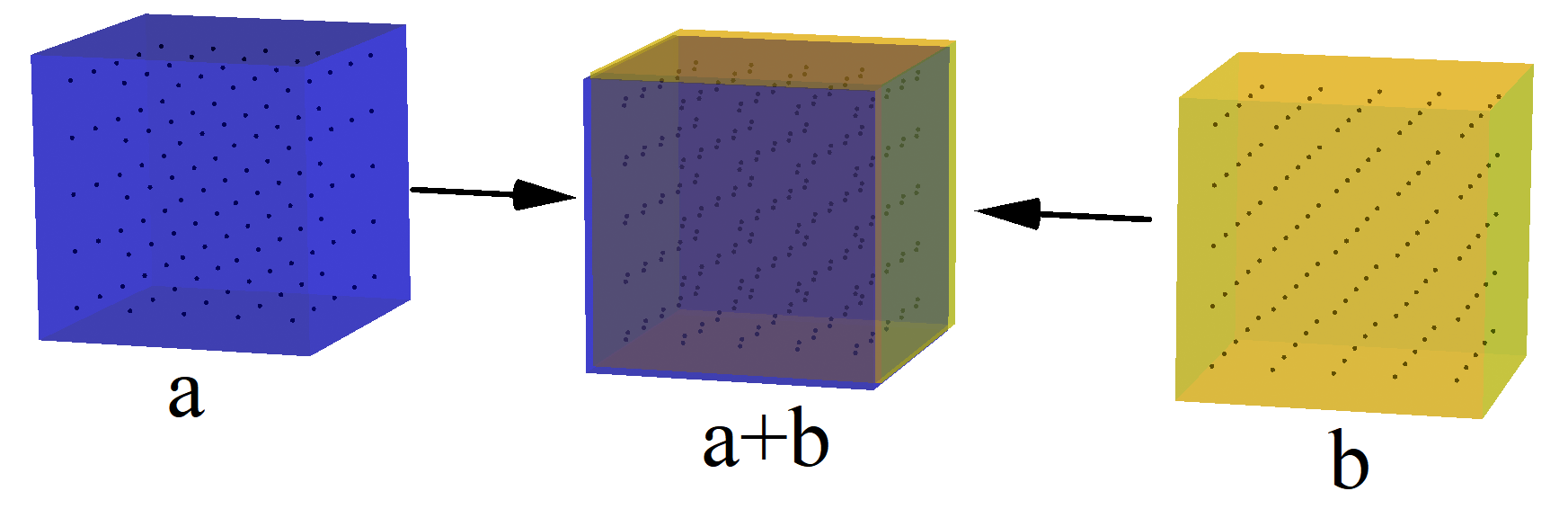} \caption{The stacking operation combines two crystalline states with symmetry
		$G$ in dimension $d$ into a new crystalline state with symmetry
		$G$ in dimension $d$.}
	\label{fig:stacking} 
\end{figure}

Given another system $b$ of spins $\tau_{u}$ at $u\in B \subseteq \mathbb{E}^{d}$ with symmetry $G$ represented by $\rho_{b}$ on its Hilbert space $\mathcal{H}_{b}$, we can stack $a$ and $b$ together as illustrated in Fig.~\ref{fig:stacking} to obtain a new system, whose local degrees of freedom are given by $\sigma_{u}\otimes\tau_{u}$ for $u\in A\cup B$. If $u\notin A$ (resp. $u\notin B$), then $\sigma_{u}$ (resp. $\tau_{u}$) is trivial. The resulting Hilbert space is $\mathcal{H}_{a}\otimes\mathcal{H}_{b}$ with $G$ represented as $\rho_{a}\otimes \rho_{b}$. Clearly, if both $\left|\Psi_{a}\right\rangle$ and $\left|\Psi_{b}\right\rangle$ are SRE (resp. $G$-symmetric), then
$\left|\Psi_{a}\right\rangle \otimes\left|\Psi_{b}\right\rangle $ is also SRE (resp. $G$-symmetric). 

To see that the stacking operation defines an Abelian group structure of $\text{SPT}^{d}(G)$, let us clarify the notion of SPT (including cSPT) orders first. For a specific system like $a$,  two $G$-symmetric SRE states $\left|\Psi_{a}\right\rangle, \left|\Psi_{a}^{\prime}\right\rangle \in \mathcal{H}_{a}$ are considered to lie in the same $G$-SPT phase and we write $\left|\Psi_{a}\right\rangle \simeq_{G} \left|\Psi_{a}^{\prime}\right\rangle$ if they are connected by a continuous path of $G$-symmetric SRE states. Further, to establish a universal notion of SPT orders regardless of specific systems, we would like to compare $G$-symmetric SRE states 
$\left|\Psi_{a}\right\rangle \in \mathcal{H}_{a}$ and $\left|\Psi_{b}\right\rangle \in \mathcal{H}_{b}$
of different systems. For this, we need to choose\footnote{The choice may not be unique; as we have seen in Fig.~\ref{fig:Pm_SPT}, distinct cSPT phases may be constructed with only 0D symmetry charges and hence can be presented by a tensor product state.} a $G$-symmetric tensor product state $\left|T_{a}\right\rangle $ (resp.  $\left|T_{b}\right\rangle $) as the reference for system $a$ (resp. $b$) and call it the trivial state. We say that $\left|\Psi_{a}\right\rangle$ and $\left|\Psi_{b}\right\rangle$ have the same $G$-SPT order (compared to $\left|T_{a}\right\rangle $ and $\left|T_{a}\right\rangle $ respectively) and write $\left(\Psi_{a},T_{a}\right)\simeq_{G}\left(\Psi_{b},T_{b}\right)$ if $\left|\Psi_{a}\right\rangle \otimes\left|T_{b}\right\rangle ,\left|T_{a}\right\rangle \otimes\left|\Psi_{b}\right\rangle \in\mathcal{H}_{a}\otimes\mathcal{H}_{b}$
lie in the same $G$-SPT phase. 
This is an equivalence relation: it is clearly reflective ($\left(\Psi,T\right)\simeq_{G}\left(\Psi,T\right)$) and symmetric ($\left(\Psi_{a},T_{a}\right)\simeq_{G}\left(\Psi_{b},T_{b}\right)\Longleftrightarrow\left(\Psi_{b},T_{b}\right)\simeq_{G}\left(\Psi_{a},T_{a}\right)$) and its transitive property ($\left(\Psi_{a},T_{a}\right)\simeq_{G}\left(\Psi_{b},T_{b}\right)$ together with $\left(\Psi_{b},T_{b}\right)\simeq_{G}\left(\Psi_{c},T_{c}\right)$
implies $\left(\Psi_{a},T_{a}\right)\simeq_{G}\left(\Psi_{c},T_{c}\right)$) is checked by noticing that $G$-SPT phases are stable against extra trivial degrees of freedom (\emph{i.e.}, for any $G$-symmetric state $\left|T\right\rangle$,
$\left|\Psi\right\rangle \simeq_{G}\left|\Psi'\right\rangle \Longleftrightarrow\left|\Psi\right\rangle \otimes\left|T\right\rangle \simeq_{G}\left|\Psi'\right\rangle \otimes\left|T\right\rangle $).
As an equivalence class,
an $G$-SPT order represented by $G$-symmetric SRE states $\left(\Psi,T\right)$ of some system is denoted $\left[\Psi,T\right]_{G}$ (or simply $\left[\Psi\right]_{G}$ if there is a unique $G$-symmetric tensor product state). We denote
the collection of all possible $G$-SPT orders in $d$ spatial dimensions by  $\text{SPT}^{d}(G)$. For $d=0$, $\text{SPT}^{d}(G)$ is actually the set of symmetries charges: $\left[\Psi_{a},T_{a}\right]_{G}$ corresponds to the symmetry charge of $\left|\Psi_{a}\right\rangle $ relative to $\left|T_{a}\right\rangle $.

Any two  $\left[\Psi_{a},T_{a}\right]_{G},\left[\Psi_{b},T_{b}\right]_{G}\in\text{SPT}^{d}\left(G\right)$ can be added by stacking: their sum $\left[\Psi_{a},T_{a}\right]_{G}+\left[\Psi_{b},T_{b}\right]_{G}$ is defined
as $\left[\Psi_{a}\otimes\Psi_{b},T_{a}\otimes T_{b}\right]_{G}$, where $a$ and $b$ may or may not be the same system. 
It is straightforward to check that this binary operation is associative and commutative. 
Clearly, $\left[T_{a},T_{a}\right]_{G}=\left[T_{b},T_{b}\right]_{G}$ is the identity and denoted 0 with respect to this addition.
Moreover, as in the plenty of examples we have seen in this paper, there is an inverse $\left|\Psi_{b}\right\rangle $ for each $G$-symmetric SRE state $\left|\Psi_{a}\right\rangle $ satisfying $\left|\Psi_{a}\right\rangle \otimes \left|\Psi_{b}\right\rangle \simeq_{G}\left|T_{a}\right\rangle \otimes\left|T_{b}\right\rangle$, which implies $\left[\Psi_{a},T_{a}\right]_{G}+\left[\Psi_{b},T_{b}\right]_{G}=0$. Thus, $G$-SPT is an Abelian group.

In practice, we could always pick a tensor product state with all spins transforming trivially under site symmetry as the preferred trivial state. Explicitly, a $G$-SPT order is simply represented by a $G$-symmetry model $\mathcal{M}_{a}$, which can be encoded by the distribution of spins (\emph{i.e.}, bosonic local degrees of freedom) $\sigma_{a}(u)$ in space, a representation $\rho_{a}$ of $G$ on its Hilbert space $\mathcal{H}_{a}$, and a Hamiltonian $\hat{H}_{a}$ fixing a $G$-symmetric SRE ground state $\left|\Psi_{a}\right\rangle $. The trivial state $\left|T_{a}\right\rangle $ is chosen by default to be one with all spins transform trivially the action $\rho_{a}$ of their site symmetry. 
For any two $G$-SPT models $\mathcal{M}_a = (\sigma_{a}(u), \mathcal{H}_{a}, \rho_{a}, \hat{H}_{a}, \left|\Psi_{a}\right\rangle)$ and $\mathcal{M}_b = (\sigma_{b}(u), \mathcal{H}_{b}, \rho_{b}, \hat{H}_{b}, \left|\Psi_{b}\right\rangle)$, we simply write $\mathcal{M}_{a}+\mathcal{M}_{b}$ for the $G$-SPT order represented by the model $(\sigma_{a}(u)\otimes\sigma_{b}(u), \mathcal{H}_{a}\otimes\mathcal{H}_{b}, \rho_{a}\otimes\rho_{b}, \hat{H}_{a}\otimes\hat{\mathbb{I}}_{b}+\hat{\mathbb{I}}_{a}\otimes\hat{H}_{b}, \left|\Psi_{a}\right\rangle)\otimes\left|\Psi_{b}\right\rangle)$ (with respect to $\left|T_{a}\right\rangle \otimes \left|T_{b}\right\rangle$).

\section{The generalized cohomology hypothesis\label{sec:GCH}}

A generalized cohomology theory \citep{Hatcher,DavisKirk,Adams1,Adams2} $h$ can be represented by an $\Omega$-spectrum
$F_{\bullet}=\{F_{n}\}$, which by definition is a sequence 
\begin{equation}
\ldots,F_{-2},F_{-1},F_{0},F_{1},F_{2},\ldots
\end{equation}
of pointed topological spaces together with pointed homotopy equivalences
\begin{equation}
F_{n}\homotopic\Omega F_{n+1},\label{FOmegaF}
\end{equation}
where $\Omega$ is the loop space functor. In the non-twisted case,
the generalized cohomology theory $h$ outputs an abelian group $h^{n}(X)$
for each given topological space $X$ and integer $n$, according
to 
\begin{equation}
h^{n}(X)\coloneq\brackets{X,\Omega F_{n+1}}.
\end{equation}
Here $[X,Y]$ denotes the set of homotopy classes of maps from $X$
to $Y$. One can define abelian group structure on $h^n(X)$. Namely, given representatives $a, b: X \fromto \Omega F_{n+1}$ of two homotopy classes $[a]$ and $[b]$, one defines $[a]+[b]$ to be the homotopy class represented by the map $a+b: X \fromto \Omega F_{n+1}$, where $(a+b)(x)$ is the concatenation of the loops $a(x)$ and $b(x)$ for any $x$.

In the twisted case, one is given an integer $n$, a pointed topological
space $X$, and an action $\phi_{X}$ of the fundamental group $\pi_{1}(X)$
on the $\Omega$-spectrum $F$. The generalized cohomology theory
$h$ then outputs an abelian group according to 
\begin{equation}
h^{n}\paren{X,\phi_{X}}\coloneq\brackets{\widetilde{X},\Omega F_{n+1}}_{\pi_{1}(X)},\label{widetilde_X_Omega_F_n+1}
\end{equation}
where $\widetilde{X}$ denotes the universal cover of $X$, and $[X,Y]_{G}$
denotes the set of homotopy classes of $G$-equivariant maps from
$X$ to $Y$. As in the non-twisted case, $h^{n}\paren{X,\phi_{X}}$ can be given an abelian group structure by concatenating loops. Here, recall that a
$G$-equivariant map $f:X\fromto Y$ is a map that commutes with the
action of $G$, i.e., $g.(f(x))=f(g.x)~\forall g\in G$ and $x\in X$.
It is a simple exercise to show that if $\pi_{1}(X)$ acts trivially
on the $\Omega$-spectrum, then $h^{n}(X,\phi_{X})=h^{n}(X)$.

In Appendix\,\ref{sec:stacking}, we noted that for given dimension $d$, symmetry $G$, and homomorphism $\phi: G \fromto \braces{\pm 1}$ indicating whether a group element reverses the spacetime orientation, the set of bosonic SPT phases
\begin{equation}
\text{SPT}^d(G)
\end{equation}
has a natural abelian group structure defined by stacking. Based on Kitaev's argument \cite{Kitaev_Stony_Brook_2011_SRE_1, Kitaev_Stony_Brook_2013_SRE, Kitaev_IPAM} that the classification of SPT phases should carry the structure of a generalized cohomology theory, Refs.\,\cite{Xiong, Xiong_Alexandradinata} formulated the following ``generalized cohomology hypothesis.''

\begin{framednameddef}[Generalized cohomology hypothesis]
There exists a generalized cohomology theory $h$
such that for any $d\in\NNN$, group $G$, and continuous homomorphism $\phi: G \fromto \braces{\pm 1}$, we have an abelian group isomorphism
\begin{equation}
\text{SPT}^d\paren{G}\isomorphic h^d\paren{BG,\phi}\label{GCH_iso}
\end{equation}
between the classification of bosonic SPT phases and the generalized cohomology group.
\end{framednameddef}
 
\noindent (The formulation of the generalized cohomology hypothesis in Refs.\,\cite{Xiong, Xiong_Alexandradinata} was actually stronger than as stated here and incorporated an additional structure called functoriality. There was also a fermionic counterpart to the hypothesis \cite{Xiong_Alexandradinata}.)

In the right-hand side of isomorphism (\ref{GCH_iso}), the action of $\pi_1\paren{BG} \isomorphic \pi_0\paren{G}$ on the $\Omega$-spectrum $F_{\bullet}$ that represents $h$ is as follows. First, note that by continuity $\phi: G \fromto \braces{\pm 1}$ can be viewed as a homomorphism from $\pi_0\paren{G}$ to $\braces{\pm 1}$ instead. Then, an element $x$ of $\pi_1\paren{BG} \isomorphic \pi_0\paren{G}$ acts on the $\Omega F_{n+1}$ in Eq.\,(\ref{widetilde_X_Omega_F_n+1}) by reversing the loop if $\phi(x) = -1$ and by the identity if $\phi(x) = +1$. In other words, we have the following

\begin{framednameddef}[Addendum to hypothesis]
A symmetry element behaves antiunitarily in the generalized cohomology hypothesis if it reverses the orientation of spacetime, and unitarily otherwise.
\end{framednameddef}

In the main text, $h^d\paren{BG,\phi}$ is  explicitly defined by Eq.~\eqref{hypothesis_expression} and denoted as $h_{\phi}^d\paren{G;F_{\bullet}}$ in a way analogous to $H_{\phi}^{n}(G;\mathbb{Z})$.

\renewcommand*\arraystretch{1}

\begin{table}
	\caption{The classification of 3D bosonic crystalline SPT (cSPT) phases for all 230 space groups. The first and second columns list the numbers and international (Hermann-Mauguin) symbols of space groups. The third through fifth columns list the classification of non-$E_8$-based phases (given by $H^5_\phi(G;\ZZZ)$ and cited from Ref.~\cite{Huang_dimensional_reduction}), $E_8$ state configurations (given by $H^1_\phi(G;\ZZZ)$ and computed using Theorem \ref{thm:H1_formula}), and all 3D bosonic cSPT phases $\text{SPT}^{3}(G) \cong H^5_\phi(G;\ZZZ)\oplus H^1_\phi(G;\ZZZ)
		$ respectively. Entries left blank have trivial classifications.}
	\label{table:H1}
	\begin{tabular}{ccccc}
		\hline
		\hline
		\multirow{2}{*}{No.} & \multirow{2}{*}{~Symbol~} & \multicolumn{3}{c}{Classification of 3D bosonic cSPT phases} \\
		\cline{3-5}
		&                         & ~$H_{\phi}^{5}\left(G;\mathbb{Z}\right)$~ & ~$H_{\phi}^{1}\left(G;\mathbb{Z}\right)$~ & ~$\text{SPT}^{3}(G)$~ \\
		\hline
		1     & $P1$  &       & {$\mathbb{Z}^{3}$} & $\mathbb{Z}^{3}$ \\
		2     & $P\overline{1}$ & $\mathbb{Z}_2^{8}$ & {$\mathbb{Z}_2\times\mathbb{Z}^{3}$} & $\mathbb{Z}_2^{9}\times\mathbb{Z}^{3}$ \\
		3     & $P2$  & $\mathbb{Z}_2^{4}$ & {$\mathbb{Z}$} & $\mathbb{Z}_2^{4}\times\mathbb{Z}$ \\
		4     & $P2_{1}$ &       & {$\mathbb{Z}$} & $\mathbb{Z}$ \\
		5     & $C2$  & $\mathbb{Z}_2^{2}$ & {$\mathbb{Z}$} & $\mathbb{Z}_2^{2}\times\mathbb{Z}$ \\
		6     & $Pm$  & $\mathbb{Z}_2^{4}$ & {$\mathbb{Z}_2\times\mathbb{Z}$} & $\mathbb{Z}_2^{5}\times\mathbb{Z}$ \\
		7     & $Pc$  &       & {$\mathbb{Z}_2\times\mathbb{Z}$} & $\mathbb{Z}_2\times\mathbb{Z}$ \\
		8     & $Cm$  & $\mathbb{Z}_2^2$ & {$\mathbb{Z}_2\times\mathbb{Z}$} & $\mathbb{Z}_2^3\times\mathbb{Z}$ \\
		9     & $Cc$  &       & {$\mathbb{Z}_2\times\mathbb{Z}$} & $\mathbb{Z}_2\times\mathbb{Z}$ \\
		10    & $P2/m$ & $\mathbb{Z}_2^{18}$ & {$\mathbb{Z}_2\times\mathbb{Z}$} & $\mathbb{Z}_2^{19}\times\mathbb{Z}$ \\
		11    & $P2_{1}/m$ & $\mathbb{Z}_2^6$ & {$\mathbb{Z}_2\times\mathbb{Z}$} & $\mathbb{Z}_2^{7}\times\mathbb{Z}$ \\
		12    & $C2/m$ & $\mathbb{Z}_2^{11}$ & {$\mathbb{Z}_2\times\mathbb{Z}$} & $\mathbb{Z}_2^{12}\times\mathbb{Z}$ \\
		13    & $P2/c$ & $\mathbb{Z}_2^{6}$ & {$\mathbb{Z}_2\times\mathbb{Z}$} & $\mathbb{Z}_2^{7}\times\mathbb{Z}$ \\
		14    & $P2_{1}/c$ & $\mathbb{Z}_2^{4}$ & {$\mathbb{Z}_2\times\mathbb{Z}$} & $\mathbb{Z}_2^{5}\times\mathbb{Z}$ \\
		15    & $C2/c$ & $\mathbb{Z}_2^{5}$ & {$\mathbb{Z}_2\times\mathbb{Z}$} & $\mathbb{Z}_2^{6}\times\mathbb{Z}$ \\
		16    & $P222$ & $\mathbb{Z}_2^{16}$ &       & $\mathbb{Z}_2^{16}$ \\
		17    & $P222_{1}$ & $\mathbb{Z}_2^{4}$ &       & $\mathbb{Z}_2^{4}$ \\
		18    & $P2_{1}2_{1}2$ & $\mathbb{Z}_2^{2}$ &       & $\mathbb{Z}_2^{2}$ \\
		19    & $P2_{1}2_{1}2_{1}$ &       &       &  \\
		20    & $C222_{1}$ & $\mathbb{Z}_2^{2}$ &       & $\mathbb{Z}_2^{2}$ \\
		21    & $C222$ & $\mathbb{Z}_2^{9}$ &       & $\mathbb{Z}_2^{9}$ \\
		22    & $F222$ & $\mathbb{Z}_2^{8}$ &       & $\mathbb{Z}_2^{8}$ \\
		23    & $I222$ & $\mathbb{Z}_2^{8}$ &       & $\mathbb{Z}_2^{8}$ \\
		24    & $I2_{1}2_{1}2_{1}$ & $\mathbb{Z}_2^{3}$ &       & $\mathbb{Z}_2^{3}$ \\
		25    & $Pmm2$ & $\mathbb{Z}_2^{16}$ & $\mathbb{Z}_2$      & $\mathbb{Z}_2^{17}$ \\
		26    & $Pmc2_{1}$ & $\mathbb{Z}_2^{4}$ & $\mathbb{Z}_2$      & $\mathbb{Z}_2^{5}$ \\
		27    & $Pcc2$ & $\mathbb{Z}_2^{4}$ & $\mathbb{Z}_2$      & $\mathbb{Z}_2^{5}$ \\
		28    & $Pma2$ & $\mathbb{Z}_2^{4}$ & $\mathbb{Z}_2$      & $\mathbb{Z}_2^{5}$ \\
		29    & $Pca2_{1}$ &       & $\mathbb{Z}_2$      & $\mathbb{Z}_2$ \\
		30    & $Pnc2$ & $\mathbb{Z}_2^{2}$ & $\mathbb{Z}_2$      & $\mathbb{Z}_2^{3}$ \\
		31    & $Pmn2_{1}$ & $\mathbb{Z}_2^2$ & $\mathbb{Z}_2$      & $\mathbb{Z}_2^3$ \\
		32    & $Pba2$ & $\mathbb{Z}_2^{2}$ & $\mathbb{Z}_2$      & $\mathbb{Z}_2^{3}$ \\
		33    & $Pna2_{1}$ &       & $\mathbb{Z}_2$      & $\mathbb{Z}_2$ \\
		34    & $Pnn2$ & $\mathbb{Z}_2^{2}$ & $\mathbb{Z}_2$      & $\mathbb{Z}_2^{3}$ \\
		35    & $Cmm2$ & $\mathbb{Z}_2^{9}$ & $\mathbb{Z}_2$      & $\mathbb{Z}_2^{10}$ \\
		36    & $Cmc2_{1}$ & $\mathbb{Z}_2^2$ & $\mathbb{Z}_2$      & $\mathbb{Z}_2^3$ \\
		37    & $Ccc2$ & $\mathbb{Z}_2^{3}$ & $\mathbb{Z}_2$      & $\mathbb{Z}_2^{4}$ \\
		38    & $Amm2$ & $\mathbb{Z}_2^{9}$ & $\mathbb{Z}_2$      & $\mathbb{Z}_2^{10}$ \\
		39    & $Aem2$ & $\mathbb{Z}_2^{4}$ & $\mathbb{Z}_2$      & $\mathbb{Z}_2^{5}$ \\
		40    & $Ama2$ & $\mathbb{Z}_2^{3}$ & $\mathbb{Z}_2$      & $\mathbb{Z}_2^{4}$ \\
		41    & $Aea2$ & $\mathbb{Z}_2$ & $\mathbb{Z}_2$      & $\mathbb{Z}_2^2$ \\
		42    & $Fmm2$ & $\mathbb{Z}_2^{6}$ & $\mathbb{Z}_2$      & $\mathbb{Z}_2^{7}$ \\
		\hline
		\hline
	\end{tabular}
\end{table}
\begin{table}
	\begin{tabular}{ccccc}
		\hline
		\hline
		\multirow{2}{*}{No.} & \multirow{2}{*}{~Symbol~} & \multicolumn{3}{c}{Classification of 3D bosonic cSPT phases} \\
		\cline{3-5}
		&                         & ~$H_{\phi}^{5}\left(G;\mathbb{Z}\right)$~ & ~$H_{\phi}^{1}\left(G;\mathbb{Z}\right)$~ & ~$\text{SPT}^{3}(G)$~ \\
		\hline
		43    & $Fdd2$ & $\mathbb{Z}_2$ & $\mathbb{Z}_2$      & $\mathbb{Z}_2^2$ \\
		44    & $Imm2$ & $\mathbb{Z}_2^{8}$ & $\mathbb{Z}_2$      & $\mathbb{Z}_2^{9}$ \\
		45    & $Iba2$ & $\mathbb{Z}_2^{2}$ &  $\mathbb{Z}_2$     & $\mathbb{Z}_2^{3}$ \\
		46    & $Ima2$ & $\mathbb{Z}_2^{3}$ &  $\mathbb{Z}_2$     & $\mathbb{Z}_2^{4}$ \\
		47    & $Pmmm$ & $\mathbb{Z}_2^{42}$ & $\mathbb{Z}_2$      & $\mathbb{Z}_2^{43}$ \\
		48    & $Pnnn$ & $\mathbb{Z}_2^{10}$ & $\mathbb{Z}_2$      & $\mathbb{Z}_2^{11}$ \\
		49    & $Pccm$ & $\mathbb{Z}_2^{17}$ & $\mathbb{Z}_2$      & $\mathbb{Z}_2^{18}$ \\
		50    & $Pban$ & $\mathbb{Z}_2^{10}$ & $\mathbb{Z}_2$      & $\mathbb{Z}_2^{11}$ \\
		51    & $Pmma$ & $\mathbb{Z}_2^{17}$ & $\mathbb{Z}_2$      & $\mathbb{Z}_2^{18}$ \\
		52    & $Pnna$ & $\mathbb{Z}_2^{4}$ & $\mathbb{Z}_2$      & $\mathbb{Z}_2^{5}$ \\
		53    & $Pmna$ & $\mathbb{Z}_2^{10}$ & $\mathbb{Z}_2$      & $\mathbb{Z}_2^{11}$ \\
		54    & $Pcca$ & $\mathbb{Z}_2^{5}$ & $\mathbb{Z}_2$      & $\mathbb{Z}_2^{6}$ \\
		55    & $Pbam$ & $\mathbb{Z}_2^{10}$ &  $\mathbb{Z}_2$     & $\mathbb{Z}_2^{11}$ \\
		56    & $Pccn$ & $\mathbb{Z}_2^{4}$ &$\mathbb{Z}_2$       & $\mathbb{Z}_2^{5}$ \\
		57    & $Pbcm$ & $\mathbb{Z}_2^{5}$ &  $\mathbb{Z}_2$     & $\mathbb{Z}_2^{6}$ \\
		58    & $Pnnm$ & $\mathbb{Z}_2^{9}$ &  $\mathbb{Z}_2$     & $\mathbb{Z}_2^{10}$ \\
		59    & $Pmmn$ & $\mathbb{Z}_2^{10}$ & $\mathbb{Z}_2$      & $\mathbb{Z}_2^{11}$ \\
		60    & $Pbcn$ & $\mathbb{Z}_2^{3}$ &  $\mathbb{Z}_2$     & $\mathbb{Z}_2^{4}$ \\
		61    & $Pbca$ & $\mathbb{Z}_2^{2}$ &   $\mathbb{Z}_2$    & $\mathbb{Z}_2^{3}$ \\
		62    & $Pnma$ & $\mathbb{Z}_2^{4}$ &   $\mathbb{Z}_2$    & $\mathbb{Z}_2^{5}$ \\
		63    & $Cmcm$ & $\mathbb{Z}_2^{10}$ &  $\mathbb{Z}_2$     & $\mathbb{Z}_2^{11}$ \\
		64    & $Cmca$ & $\mathbb{Z}_2^{7}$ &   $\mathbb{Z}_2$    & $\mathbb{Z}_2^{8}$ \\
		65    & $Cmmm$ & $\mathbb{Z}_2^{26}$ & $\mathbb{Z}_2$      & $\mathbb{Z}_2^{27}$ \\
		66    & $Cccm$ & $\mathbb{Z}_2^{13}$ &  $\mathbb{Z}_2$     & $\mathbb{Z}_2^{14}$ \\
		67    & $Cmme$ & $\mathbb{Z}_2^{17}$ &  $\mathbb{Z}_2$     & $\mathbb{Z}_2^{18}$ \\
		68    & $Ccce$ & $\mathbb{Z}_2^{7}$ &   $\mathbb{Z}_2$    & $\mathbb{Z}_2^{8}$ \\
		69    & $Fmmm$ & $\mathbb{Z}_2^{20}$ &  $\mathbb{Z}_2$     & $\mathbb{Z}_2^{21}$ \\
		70    & $Fddd$ & $\mathbb{Z}_2^{6}$ &   $\mathbb{Z}_2$    & $\mathbb{Z}_2^{7}$ \\
		71    & $Immm$ & $\mathbb{Z}_2^{22}$ & $\mathbb{Z}_2$      & $\mathbb{Z}_2^{23}$ \\
		72    & $Ibam$ & $\mathbb{Z}_2^{10}$ &  $\mathbb{Z}_2$     & $\mathbb{Z}_2^{11}$ \\
		73    & $Ibca$ & $\mathbb{Z}_2^{5}$ & $\mathbb{Z}_2$      & $\mathbb{Z}_2^{6}$ \\
		74    & $Imma$ & $\mathbb{Z}_2^{13}$ & $\mathbb{Z}_2$      & $\mathbb{Z}_2^{14}$ \\
		75    & $P4$  & $\mathbb{Z}_{4}^{2}\times\mathbb{Z}_2$ & {$\mathbb{Z}$} & $\mathbb{Z}_{4}^{2}\times\mathbb{Z}_2\times\mathbb{Z}$ \\
		76    & $P4_{1}$ &       & {$\mathbb{Z}$} & $\mathbb{Z}$ \\
		77    & $P4_{2}$ & $\mathbb{Z}_2^{3}$ & {$\mathbb{Z}$} & $\mathbb{Z}_2^{3}\times{\mathbb{Z}}$ \\
		78    & $P4_{3}$ &       & {$\mathbb{Z}$} & $\mathbb{Z}$ \\
		79    & $I4$  & $\mathbb{Z}_{4}\times\mathbb{Z}_2$ & {$\mathbb{Z}$} & $\mathbb{Z}_{4}\times\mathbb{Z}_2\times\mathbb{Z}$ \\
		80    & $I4_{1}$ & $\mathbb{Z}_2$ & {$\mathbb{Z}$} & $\mathbb{Z}_2\times\mathbb{Z}$ \\
		81    & $P\overline{4}$ & $\mathbb{Z}_{4}^{2}\times\mathbb{Z}_2^{3}$ & {$\mathbb{Z}_2\times\mathbb{Z}$} & $\mathbb{Z}_{4}^{2}\times\mathbb{Z}_2^{4}\times\mathbb{Z}$ \\
		82    & $I\overline{4}$ & $\mathbb{Z}_{4}^{2}\times\mathbb{Z}_2^{2}$ & {$\mathbb{Z}_2\times\mathbb{Z}$} & $\mathbb{Z}_{4}^{2}\times\mathbb{Z}_2^{3}\times\mathbb{Z}$ \\
		83    & $P4/m$ & $\mathbb{\mathbb{Z}}_{4}^{2}\times\mathbb{Z}_2^{12}$ & {$\mathbb{Z}_2\times\mathbb{Z}$} & $\mathbb{\mathbb{Z}}_{4}^{2}\times\mathbb{Z}_2^{13}\times\mathbb{Z}$ \\
		84    & $P4_{2}/m$ & $\mathbb{Z}_2^{11}$ & {$\mathbb{Z}_2\times\mathbb{Z}$} & $\mathbb{Z}_2^{12}\times\mathbb{Z}$ \\
		85    & $P4/n$ & $\mathbb{Z}_{4}^{2}\times\mathbb{Z}_2^{3}$ & {$\mathbb{Z}_2\times\mathbb{Z}$} & $\mathbb{Z}_{4}^{2}\times\mathbb{Z}_2^{4}\times\mathbb{Z}$ \\
		86    & $P4_{2}/n$ & $\mathbb{Z}_{4}\times\mathbb{Z}_2^{4}$ & {$\mathbb{Z}_2\times\mathbb{Z}$} & $\mathbb{Z}_{4}\times\mathbb{Z}_2^{5}\times\mathbb{Z}$ \\
		87    & $I4/m$ & $\mathbb{Z}_{4}\times\mathbb{Z}_2^{8}$ & {$\mathbb{Z}_2\times\mathbb{Z}$} & $\mathbb{Z}_{4}\times\mathbb{Z}_2^{9}\times\mathbb{Z}$ \\
		88    & $I4_{1}/a$ & $\mathbb{Z}_{4}\times\mathbb{Z}_2^{3}$ & {$\mathbb{Z}_2\times\mathbb{Z}$} & $\mathbb{Z}_{4}\times\mathbb{Z}_2^{4}\times\mathbb{Z}$ \\
		89    & $P422$ & $\mathbb{Z}_2^{12}$ &       & $\mathbb{Z}_2^{12}$ \\
		90    & $P42_{1}2$ & $\mathbb{Z}_{4}\times\mathbb{Z}_2^{4}$ &       & $\mathbb{Z}_{4}\times\mathbb{Z}_2^{4}$ \\
		91    & $P4_{1}22$ & $\mathbb{Z}_2^{3}$ &       & $\mathbb{Z}_2^{3}$ \\
		92    & $P4_{1}2_{1}2$ & $\mathbb{Z}_2$ &       & $\mathbb{Z}_2$ \\
		\hline
		\hline
	\end{tabular}
\end{table}
\begin{table}
\begin{tabular}{ccccc}
\hline
\hline
\multirow{2}{*}{No.} & \multirow{2}{*}{~Symbol~} & \multicolumn{3}{c}{Classification of 3D bosonic cSPT phases} \\
\cline{3-5}
&                         & ~$H_{\phi}^{5}\left(G;\mathbb{Z}\right)$~ & ~$H_{\phi}^{1}\left(G;\mathbb{Z}\right)$~ & ~$\text{SPT}^{3}(G)$~ \\
\hline
93    & $P4_{2}22$ & $\mathbb{Z}_2^{12}$ &       & $\mathbb{Z}_2^{12}$ \\
94    & $P4_{2}2_{1}2$ & $\mathbb{Z}_2^{5}$ &       & $\mathbb{Z}_2^{5}$ \\
95    & $P4_{3}22$ & $\mathbb{Z}_2^{3}$ &       & $\mathbb{Z}_2^{3}$ \\
96    & $P4_{3}2_{1}2$ & $\mathbb{Z}_2$ &       & $\mathbb{Z}_2$ \\
97    & $I422$ & $\mathbb{Z}_2^{8}$ &       & $\mathbb{Z}_2^{8}$ \\
98    & $I4_{1}22$ & $\mathbb{Z}_2^{5}$ &       & $\mathbb{Z}_2^{5}$ \\
99    & $P4mm$ & $\mathbb{Z}_2^{12}$ & $\mathbb{Z}_2$      & $\mathbb{Z}_2^{13}$ \\
100   & $P4bm$ & $\mathbb{Z}_{4}\times\mathbb{Z}_2^{4}$ & $\mathbb{Z}_2$      & $\mathbb{Z}_{4}\times\mathbb{Z}_2^{5}$ \\
101   & $P4_{2}cm$ & $\mathbb{Z}_2^{6}$ & $\mathbb{Z}_2$      & $\mathbb{Z}_2^{7}$ \\
102   & $P4_{2}nm$ & $\mathbb{Z}_2^{5}$ &   $\mathbb{Z}_2$    & $\mathbb{Z}_2^{6}$ \\
103   & $P4cc$ & $\mathbb{Z}_2^{3}$ &   $\mathbb{Z}_2$    & $\mathbb{Z}_2^{4}$ \\
104   & $P4nc$ & $\mathbb{Z}_{4}\times\mathbb{Z}_2$ &  $\mathbb{Z}_2$     & $\mathbb{Z}_{4}\times\mathbb{Z}_2^2$ \\
105   & $P4_{2}mc$ & $\mathbb{Z}_2^{9}$ &   $\mathbb{Z}_2$    & $\mathbb{Z}_2^{10}$ \\
106   & $P4_{2}bc$ & $\mathbb{Z}_2^{2}$ &  $\mathbb{Z}_2$     & $\mathbb{Z}_2^{3}$ \\
107   & $I4mm$ & $\mathbb{Z}_2^{7}$ &    $\mathbb{Z}_2$   & $\mathbb{Z}_2^{8}$ \\
108   & $I4cm$ & $\mathbb{Z}_2^{4}$ &   $\mathbb{Z}_2$    & $\mathbb{Z}_2^{5}$ \\
109   & $I4_{1}md$ & $\mathbb{Z}_2^{4}$ &  $\mathbb{Z}_2$     & $\mathbb{Z}_2^{5}$ \\
110   & $I4_{1}cd$ & $\mathbb{Z}_2$ &    $\mathbb{Z}_2$   & $\mathbb{Z}_2^2$ \\
111   & $P\overline{4}2m$ & $\mathbb{Z}_2^{13}$ &  $\mathbb{Z}_2$     & $\mathbb{Z}_2^{14}$ \\
112   & $P\overline{4}2c$ & $\mathbb{Z}_2^{10}$ &  $\mathbb{Z}_2$     & $\mathbb{Z}_2^{11}$ \\
113   & $P\overline{4}2_{1}m$ & $\mathbb{Z}_{4}\times\mathbb{Z}_2^{5}$ &  $\mathbb{Z}_2$     & $\mathbb{Z}_{4}\times\mathbb{Z}_2^{6}$ \\
114   & $P\overline{4}2_{1}c$ & $\mathbb{Z}_{4}\times\mathbb{Z}_2^{2}$ & $\mathbb{Z}_2$      & $\mathbb{Z}_{4}\times\mathbb{Z}_2^{3}$ \\
115   & $P\overline{4}m2$ & $\mathbb{Z}_2^{13}$ & $\mathbb{Z}_2$      & $\mathbb{Z}_2^{14}$ \\
116   & $P\overline{4}c2$ & $\mathbb{Z}_2^{7}$ &    $\mathbb{Z}_2$   & $\mathbb{Z}_2^{8}$ \\
117   & $P\overline{4}b2$ & $\mathbb{Z}_{4}\times\mathbb{Z}_2^{5}$ & $\mathbb{Z}_2$      & $\mathbb{Z}_{4}\times\mathbb{Z}_2^{6}$ \\
118   & $P\overline{4}n2$ & $\mathbb{Z}_{4}\times\mathbb{Z}_2^{5}$ & $\mathbb{Z}_2$      & $\mathbb{Z}_{4}\times\mathbb{Z}_2^{6}$ \\
119   & $I\overline{4}m2$ & $\mathbb{Z}_2^{9}$ & $\mathbb{Z}_2$      & $\mathbb{Z}_2^{10}$ \\
120   & $I\overline{4}c2$ & $\mathbb{Z}_2^{6}$ & $\mathbb{Z}_2$      & $\mathbb{Z}_2^{7}$ \\
121   & $I\overline{4}2m$ & $\mathbb{Z}_2^{8}$ & $\mathbb{Z}_2$      & $\mathbb{Z}_2^{9}$ \\
122   & $I\overline{4}2d$ & $\mathbb{Z}_{4}\times\mathbb{Z}_2^{2}$ & $\mathbb{Z}_2$      & $\mathbb{Z}_{4}\times\mathbb{Z}_2^{3}$ \\
123   & $P4/mmm$ & $\mathbb{Z}_2^{32}$ &  $\mathbb{Z}_2$     & $\mathbb{Z}_2^{33}$ \\
124   & $P4/mcc$ & $\mathbb{Z}_2^{13}$ & $\mathbb{Z}_2$      & $\mathbb{Z}_2^{14}$ \\
125   & $P4/nbm$ & $\mathbb{Z}_2^{13}$ &  $\mathbb{Z}_2$     & $\mathbb{Z}_2^{14}$ \\
126   & $P4/nnc$ & $\mathbb{Z}_2^{8}$ &  $\mathbb{Z}_2$     & $\mathbb{Z}_2^{9}$ \\
127   & $P4/mbm$ & $\mathbb{Z}_{4}\times\mathbb{Z}_2^{15}$ &   $\mathbb{Z}_2$    & $\mathbb{Z}_{4}\times\mathbb{Z}_2^{16}$ \\
128   & $P4/mnc$ & $\mathbb{Z}_{4}\times\mathbb{Z}_2^{8}$ &  $\mathbb{Z}_2$     & $\mathbb{Z}_{4}\times\mathbb{Z}_2^{9}$ \\
129   & $P4/nmm$ & $\mathbb{Z}_2^{13}$ &  $\mathbb{Z}_2$     & $\mathbb{Z}_2^{14}$ \\
130   & $P4/ncc$ & $\mathbb{Z}_2^{5}$ &   $\mathbb{Z}_2$    & $\mathbb{Z}_2^{6}$ \\
131   & $P4_{2}/mmc$ & $\mathbb{Z}_2^{24}$ &   $\mathbb{Z}_2$    & $\mathbb{Z}_2^{25}$ \\
132   & $P4_{2}/mcm$ & $\mathbb{Z}_2^{18}$ &   $\mathbb{Z}_2$    & $\mathbb{Z}_2^{19}$ \\
133   & $P4_{2}/nbc$ & $\mathbb{Z}_2^{8}$ &    $\mathbb{Z}_2$   & $\mathbb{Z}_2^{9}$ \\
134   & $P4_{2}/nnm$ & $\mathbb{Z}_2^{13}$ &  $\mathbb{Z}_2$     & $\mathbb{Z}_2^{14}$ \\
135   & $P4_{2}/mbc$ & $\mathbb{Z}_2^{8}$ &  $\mathbb{Z}_2$     & $\mathbb{Z}_2^{9}$ \\
136   & $P4_{2}/mnm$ & $\mathbb{Z}_2^{14}$ &  $\mathbb{Z}_2$     & $\mathbb{Z}_2^{15}$ \\
137   & $P4_{2}/nmc$ & $\mathbb{Z}_2^{8}$ &   $\mathbb{Z}_2$    & $\mathbb{Z}_2^{9}$ \\
138   & $P4_{2}/ncm$ & $\mathbb{Z}_2^{10}$ &  $\mathbb{Z}_2$     & $\mathbb{Z}_2^{11}$ \\
139   & $I4/mmm$ & $\mathbb{Z}_2^{20}$ &   $\mathbb{Z}_2$    & $\mathbb{Z}_2^{21}$ \\
140   & $I4/mcm$ & $\mathbb{Z}_2^{14}$ &  $\mathbb{Z}_2$     & $\mathbb{Z}_2^{15}$ \\
141   & $I4_{1}/amd$ & $\mathbb{Z}_2^{9}$ &  $\mathbb{Z}_2$     & $\mathbb{Z}_2^{10}$ \\
142   & $I4_{1}/acd$ & $\mathbb{Z}_2^{5}$ & $\mathbb{Z}_2$      & $\mathbb{Z}_2^{6}$ \\
\hline
\hline
\end{tabular}
\end{table}
\begin{table}
\begin{tabular}{ccccc}
\hline
\hline
\multirow{2}{*}{No.} & \multirow{2}{*}{~Symbol~} & \multicolumn{3}{c}{Classification of 3D bosonic cSPT phases} \\
\cline{3-5}
&                         & ~$H_{\phi}^{5}\left(G;\mathbb{Z}\right)$~ & ~$H_{\phi}^{1}\left(G;\mathbb{Z}\right)$~ & ~$\text{SPT}^{3}(G)$~ \\
\hline
143   & $P3$  & $\mathbb{Z}_{3}^{3}$ & {$\mathbb{Z}$} & $\mathbb{Z}_{3}^{3}\times\mathbb{Z}$ \\
144   & $P3_{1}$ &       & {$\mathbb{Z}$} & $\mathbb{Z}$ \\
145   & $P3_{2}$ &       & {$\mathbb{Z}$} & $\mathbb{Z}$ \\
146   & $R3$  & $\mathbb{Z}_{3}$ & {$\mathbb{Z}$} & $\mathbb{Z}_{3}\times\mathbb{Z}$ \\
147   & $P\overline{3}$ & $\mathbb{Z}_{3}^{2}\times\mathbb{Z}_2^{4}$ & {$\mathbb{Z}_2\times\mathbb{Z}$} & $\mathbb{Z}_{3}^{2}\times\mathbb{Z}_2^{5}\times\mathbb{Z}$ \\
148   & $R\overline{3}$ & $\mathbb{Z}_{3}\times\mathbb{Z}_2^{4}$ & {$\mathbb{Z}_2\times\mathbb{Z}$} & $\mathbb{Z}_{3}\times\mathbb{Z}_2^{5}\times\mathbb{Z}$ \\
149   & $P312$ & $\mathbb{Z}_2^{2}$ &       & $\mathbb{Z}_2^{2}$ \\
150   & $P321$ & $\mathbb{Z}_{3}\times\mathbb{Z}_2^{2}$ &       & $\mathbb{Z}_{3}\times\mathbb{Z}_2^{2}$ \\
151   & $P3_{1}12$ & $\mathbb{Z}_2^{2}$ &       & $\mathbb{Z}_2^{2}$ \\
152   & $P3_{1}21$ & $\mathbb{Z}_2^{2}$ &       & $\mathbb{Z}_2^{2}$ \\
153   & $P3_{2}12$ & $\mathbb{Z}_2^{2}$ &       & $\mathbb{Z}_2^{2}$ \\
154   & $P3_{2}21$ & $\mathbb{Z}_2^{2}$ &       & $\mathbb{Z}_2^{2}$ \\
155   & $R32$ & $\mathbb{Z}_2^{2}$ &       & $\mathbb{Z}_2^{2}$ \\
156   & $P3m1$ & $\mathbb{Z}_2^2$ & $\mathbb{Z}_2$      & $\mathbb{Z}_2^3$ \\
157   & $P31m$ & $\mathbb{Z}_{3}\times\mathbb{Z}_2^2$ & $\mathbb{Z}_2$      & $\mathbb{Z}_{3}\times\mathbb{Z}_2^3$ \\
158   & $P3c1$ &       & $\mathbb{Z}_2$      &  $\mathbb{Z}_2$\\
159   & $P31c$ & $\mathbb{Z}_{3}$ & $\mathbb{Z}_2$      & $\mathbb{Z}_{3}\times\mathbb{Z}_2$ \\
160   & $R3m$ & $\mathbb{Z}_2^2$ &  $\mathbb{Z}_2$     & $\mathbb{Z}_2^3$ \\
161   & $R3c$ &       & $\mathbb{Z}_2$      & $\mathbb{Z}_2$ \\
162   & $P\overline{3}1m$ & $\mathbb{Z}_2^{9}$ &  $\mathbb{Z}_2$     & $\mathbb{Z}_2^{10}$ \\
163   & $P\overline{3}1c$ & $\mathbb{Z}_2^{3}$ & $\mathbb{Z}_2$      & $\mathbb{Z}_2^{4}$ \\
164   & $P\overline{3}m1$ & $\mathbb{Z}_2^{9}$ & $\mathbb{Z}_2$      & $\mathbb{Z}_2^{10}$ \\
165   & $P\overline{3}c1$ & $\mathbb{Z}_2^{3}$ & $\mathbb{Z}_2$      & $\mathbb{Z}_2^{4}$ \\
166   & $R\overline{3}m$ & $\mathbb{Z}_2^{9}$ &  $\mathbb{Z}_2$     & $\mathbb{Z}_2^{10}$ \\
167   & $R\overline{3}c$ & $\mathbb{Z}_2^{3}$ &$\mathbb{Z}_2$       & $\mathbb{Z}_2^{4}$ \\
168   & $P6$  & $\mathbb{Z}_{3}^{2}\times\mathbb{Z}_2^{2}$ & {$\mathbb{Z}$} & $\mathbb{Z}_{3}^{2}\times\mathbb{Z}_2^{2}\times\mathbb{Z}$ \\
169   & $P6_{1}$ &       & {$\mathbb{Z}$} & $\mathbb{Z}$ \\
170   & $P6_{5}$ &       & {$\mathbb{Z}$} & $\mathbb{Z}$ \\
171   & $P6_{2}$ & $\mathbb{Z}_2^{2}$ & {$\mathbb{Z}$} & $\mathbb{Z}_2^{2}\times\mathbb{Z}$ \\
172   & $P6_{4}$ & $\mathbb{Z}_2^{2}$ & {$\mathbb{Z}$} & $\mathbb{Z}_2^{2}\times\mathbb{Z}$ \\
173   & $P6_{3}$ & $\mathbb{Z}_{3}^{2}$ & {$\mathbb{Z}$} & $\mathbb{Z}_{3}^{2}\times\mathbb{Z}$ \\
174   & $P\overline{6}$ & $\mathbb{Z}_{3}^{3}\times\mathbb{Z}_2^{4}$ & {$\mathbb{Z}_2\times\mathbb{Z}$} & $\mathbb{Z}_{3}^{3}\times\mathbb{Z}_2^{5}\times\mathbb{Z}$ \\
175   & $P6/m$ & $\mathbb{Z}_{3}^{2}\times\mathbb{Z}_2^{10}$ & {$\mathbb{Z}_2\times\mathbb{Z}$} & $\mathbb{Z}_{3}^{2}\times\mathbb{Z}_2^{11}\times\mathbb{Z}$ \\
176   & $P6_{3}/m$ & $\mathbb{Z}_{3}^{2}\times\mathbb{Z}_2^{4}$ & {$\mathbb{Z}_2\times\mathbb{Z}$} & $\mathbb{Z}_{3}^{2}\times\mathbb{Z}_2^{5}\times\mathbb{Z}$ \\
177   & $P622$ & $\mathbb{Z}_2^{8}$ &       & $\mathbb{Z}_2^{8}$ \\
178   & $P6_{1}22$ & $\mathbb{Z}_2^{2}$ &       & $\mathbb{Z}_2^{2}$ \\
179   & $P6_{5}22$ & $\mathbb{Z}_2^{2}$ &       & $\mathbb{Z}_2^{2}$ \\
180   & $P6_{2}22$ & $\mathbb{Z}_2^{8}$ &       & $\mathbb{Z}_2^{8}$ \\
181   & $P6_{4}22$ & $\mathbb{Z}_2^{8}$ &       & $\mathbb{Z}_2^{8}$ \\
182   & $P6_{3}22$ & $\mathbb{Z}_2^{2}$ &       & $\mathbb{Z}_2^{2}$ \\
183   & $P6mm$ & $\mathbb{Z}_2^{8}$ & $\mathbb{Z}_2$      & $\mathbb{Z}_2^{9}$ \\
184   & $P6cc$ & $\mathbb{Z}_2^{2}$ & $\mathbb{Z}_2$      & $\mathbb{Z}_2^{3}$ \\
185   & $P6_{3}cm$ & $\mathbb{Z}_2^2$ & $\mathbb{Z}_2$      & $\mathbb{Z}_2^3$ \\
186   & $P6_{3}mc$ & $\mathbb{Z}_2^2$ & $\mathbb{Z}_2$      & $\mathbb{Z}_2^3$ \\
187   & $P\overline{6}m2$ & $\mathbb{Z}_2^{9}$ &  $\mathbb{Z}_2$     & $\mathbb{Z}_2^{10}$ \\
188   & $P\overline{6}c2$ & $\mathbb{Z}_2^{3}$ & $\mathbb{Z}_2$      & $\mathbb{Z}_2^{4}$ \\
189   & $P\overline{6}2m$ & $\mathbb{Z}_{3}\times\mathbb{Z}_2^{9}$ & $\mathbb{Z}_2$      & $\mathbb{Z}_{3}\times\mathbb{Z}_2^{10}$ \\
190   & $P\overline{6}2c$ & $\mathbb{Z}_{3}\times\mathbb{Z}_2^{3}$ & $\mathbb{Z}_2$      & $\mathbb{Z}_{3}\times\mathbb{Z}_2^{4}$ \\
191   & $P6/mmm$ & $\mathbb{Z}_2^{22}$ & $\mathbb{Z}_2$      & $\mathbb{Z}_2^{23}$ \\
192   & $P6/mcc$ & $\mathbb{Z}_2^{9}$ & $\mathbb{Z}_2$      & $\mathbb{Z}_2^{10}$ \\
\hline
\hline
\end{tabular}
\end{table}
\begin{table}

\begin{tabular}{ccccc}
\hline
\hline
\multirow{2}{*}{No.} & \multirow{2}{*}{~Symbol~} & \multicolumn{3}{c}{Classification of 3D bosonic cSPT phases} \\
\cline{3-5}
&                         & ~$H_{\phi}^{5}\left(G;\mathbb{Z}\right)$~ & ~$H_{\phi}^{1}\left(G;\mathbb{Z}\right)$~ & ~$\text{SPT}^{3}(G)$~ \\
\hline
193   & $P6_{3}/mcm$ & $\mathbb{Z}_2^{9}$ &  $\mathbb{Z}_2$     & $\mathbb{Z}_2^{10}$ \\
194   & $P6_{3}/mmc$ & $\mathbb{Z}_2^{9}$ &  $\mathbb{Z}_2$     & $\mathbb{Z}_2^{10}$ \\
195   & $P23$ & $\mathbb{Z}_{3}\times\mathbb{Z}_2^{4}$ &       & $\mathbb{Z}_{3}\times\mathbb{Z}_2^{4}$ \\
196   & $F23$ & $\mathbb{Z}_{3}$ &       & $\mathbb{Z}_{3}$ \\
197   & $I23$ & $\mathbb{Z}_{3}\times\mathbb{Z}_2^{2}$ &       & $\mathbb{Z}_{3}\times\mathbb{Z}_2^{2}$ \\
198   & $P2_{1}3$ & $\mathbb{Z}_{3}$ &       & $\mathbb{Z}_{3}$ \\
199   & $I2_{1}3$ & $\mathbb{Z}_{3}\times\mathbb{Z}_2$ &       & $\mathbb{Z}_{3}\times\mathbb{Z}_2$ \\
200   & $Pm\overline{3}$ & $\mathbb{Z}_{3}\times\mathbb{Z}_2^{14}$ & $\mathbb{Z}_2$      & $\mathbb{Z}_{3}\times\mathbb{Z}_2^{15}$ \\
201   & $Pn\overline{3}$ & $\mathbb{Z}_{3}\times\mathbb{Z}_2^{4}$ &  $\mathbb{Z}_2$     & $\mathbb{Z}_{3}\times\mathbb{Z}_2^{5}$ \\
202   & $Fm\overline{3}$ & $\mathbb{Z}_{3}\times\mathbb{Z}_2^{6}$ & $\mathbb{Z}_2$      & $\mathbb{Z}_{3}\times\mathbb{Z}_2^{7}$ \\
203   & $Fd\overline{3}$ & $\mathbb{Z}_{3}\times\mathbb{Z}_2^{2}$ & $\mathbb{Z}_2$      & $\mathbb{Z}_{3}\times\mathbb{Z}_2^{3}$ \\
204   & $Im\overline{3}$ & $\mathbb{Z}_{3}\times\mathbb{Z}_2^8$ &  $\mathbb{Z}_2$     & $\mathbb{Z}_{3}\times\mathbb{Z}_2^9$ \\
205   & $Pa\overline{3}$ & $\mathbb{Z}_{3}\times\mathbb{Z}_2^{2}$ &  $\mathbb{Z}_2$     & $\mathbb{Z}_{3}\times\mathbb{Z}_2^{3}$ \\
206   & $Ia\overline{3}$ & $\mathbb{Z}_{3}\times\mathbb{Z}_2^{3}$ &  $\mathbb{Z}_2$     & $\mathbb{Z}_{3}\times\mathbb{Z}_2^{4}$ \\
207   & $P432$ & $\mathbb{Z}_2^{6}$ &       & $\mathbb{Z}_2^{6}$ \\
208   & $P4_{2}32$ & $\mathbb{Z}_2^{6}$ &       & $\mathbb{Z}_2^{6}$ \\
209   & $F432$ & $\mathbb{Z}_2^{4}$ &       & $\mathbb{Z}_2^{4}$ \\
210   & $F4_{1}32$ & $\mathbb{Z}_2$ &       & $\mathbb{Z}_2$ \\
211   & $I432$ & $\mathbb{Z}_2^{5}$ &       & $\mathbb{Z}_2^{5}$ \\
212   & $P4_{3}32$ & $\mathbb{Z}_2$ &       & $\mathbb{Z}_2$ \\
213   & $P4_{1}32$ & $\mathbb{Z}_2$ &       & $\mathbb{Z}_2$ \\
214   & $I4_{1}32$ & $\mathbb{Z}_2^{4}$ &       & $\mathbb{Z}_2^{4}$ \\
215   & $P\overline{4}3m$ & $\mathbb{Z}_2^{7}$ &  $\mathbb{Z}_2$     & $\mathbb{Z}_2^8$ \\
216   & $F\overline{4}3m$ & $\mathbb{Z}_2^5$ & $\mathbb{Z}_2$      & $\mathbb{Z}_2^6$ \\
217   & $I\overline{4}3m$ & $\mathbb{Z}_2^5$ & $\mathbb{Z}_2$      & $\mathbb{Z}_2^6$ \\
218   & $P\overline{4}3n$ & $\mathbb{Z}_2^{4}$ & $\mathbb{Z}_2$      & $\mathbb{Z}_2^{5}$ \\
219   & $F\overline{4}3c$ & $\mathbb{Z}_2^{2}$ &   $\mathbb{Z}_2$    & $\mathbb{Z}_2^{3}$ \\
220   & $I\overline{4}3d$ & $\mathbb{Z}_{4}\times\mathbb{Z}_2$ & $\mathbb{Z}_2$      & $\mathbb{Z}_{4}\times\mathbb{Z}_2^2$ \\
221   & $Pm\overline{3}m$ & $\mathbb{Z}_2^{18}$ & $\mathbb{Z}_2$      & $\mathbb{Z}_2^{19}$ \\
222   & $Pn\overline{3}n$ & $\mathbb{Z}_2^{5}$ &   $\mathbb{Z}_2$    & $\mathbb{Z}_2^{6}$ \\
223   & $Pm\overline{3}n$ & $\mathbb{Z}_2^{10}$ &   $\mathbb{Z}_2$    & $\mathbb{Z}_2^{11}$ \\
224   & $Pn\overline{3}m$ & $\mathbb{Z}_2^{10}$ & $\mathbb{Z}_2$      & $\mathbb{Z}_2^{11}$ \\
225   & $Fm\overline{3}m$ & $\mathbb{Z}_2^{13}$ &  $\mathbb{Z}_2$     & $\mathbb{Z}_2^{14}$ \\
226   & $Fm\overline{3}c$ & $\mathbb{Z}_2^{7}$ & $\mathbb{Z}_2$      & $\mathbb{Z}_2^8$ \\
227   & $Fd\overline{3}m$ & $\mathbb{Z}_2^7$ &   $\mathbb{Z}_2$    & $\mathbb{Z}_2^8$ \\
228   & $Fd\overline{3}c$ & $\mathbb{Z}_2^{3}$ &   $\mathbb{Z}_2$    & $\mathbb{Z}_2^{4}$ \\
229   & $Im\overline{3}m$ & $\mathbb{Z}_2^{13}$ &  $\mathbb{Z}_2$     & $\mathbb{Z}_2^{14}$ \\
230   & $Ia\overline{3}d$ & $\mathbb{Z}_2^{4}$ &    $\mathbb{Z}_2$   & $\mathbb{Z}_2^{5}$ \\
\hline
\hline
\end{tabular}
\end{table}

\renewcommand*\arraystretch{1.5}
\begin{table}
\caption{List of 58 magnetic point groups of type III with its symbol in the
second column. For each magnetic group $G$, the third column gives
$k=\dim\bigcap_{W\in G}\left(\phi\left(W\right)W-I\right)$. The forth
column ($\ell$) shows 0 if it preserves the orientation of spacetime
and 1 otherwise.\label{tab:mg}}
\begin{tabular}{|c|c|c|c|c|c|c|c|c|}
\cline{1-4} \cline{6-9} 
No. & Symbol & \enskip{}$k$\enskip{} & \enskip{}$\ell$\enskip{} & \quad{} & No. & Symbol & \enskip{}$k$\enspace{} & \enskip{}$\ell$\enspace{}\tabularnewline
\cline{1-4} \cline{6-9} 
1 & $\overline{1}'$ & 0 & 0 &  & 30 & $\overline{3}'$ & 0 & 0\tabularnewline
\cline{1-4} \cline{6-9} 
2 & $2'$ & 2 & 1 &  & 31 & $32'$ & 1 & 1\tabularnewline
\cline{1-4} \cline{6-9} 
3 & $m'$ & 2 & 0 &  & 32 & $3m'$ & 1 & 0\tabularnewline
\cline{1-4} \cline{6-9} 
4 & $\frac{2'}{m}$ & 0 & 1 &  & 33 & $\overline{3}'m$ & 0 & 1\tabularnewline
\cline{1-4} \cline{6-9} 
5 & $\frac{2}{m'}$ & 0 & 0 &  & 34 & $\overline{3}'m'$ & 0 & 0\tabularnewline
\cline{1-4} \cline{6-9} 
6 & $\frac{2'}{m'}$ & 2 & 1 &  & 35 & $\overline{3}m'$ & 1 & 1\tabularnewline
\cline{1-4} \cline{6-9} 
7 & $2'2'2$ & 1 & 1 &  & 36 & $6'$ & 0 & 1\tabularnewline
\cline{1-4} \cline{6-9} 
8 & $m'm2'$ & 1 & 1 &  & 37 & $\overline{6}'$ & 0 & 0\tabularnewline
\cline{1-4} \cline{6-9} 
9 & $m'm'2$ & 1 & 0 &  & 38 & $\frac{6'}{m}$ & 0 & 1\tabularnewline
\cline{1-4} \cline{6-9} 
10 & $m'mm$ & 0 & 1 &  & 39 & $\frac{6}{m'}$ & 0 & 0\tabularnewline
\cline{1-4} \cline{6-9} 
11 & $m'm'm$ & 1 & 1 &  & 40 & $\frac{6'}{m'}$ & 0 & 1\tabularnewline
\cline{1-4} \cline{6-9} 
12 & $m'm'm'$ & 0 & 0 &  & 41 & $6'22'$ & 0 & 1\tabularnewline
\cline{1-4} \cline{6-9} 
13 & $4'$ & 0 & 1 &  & 42 & $62'2'$ & 1 & 1\tabularnewline
\cline{1-4} \cline{6-9} 
14 & $\overline{4}'$ & 0 & 0 &  & 43 & $6'mm'$ & 0 & 1\tabularnewline
\cline{1-4} \cline{6-9} 
15 & $\frac{4'}{m}$ & 0 & 1 &  & 44 & $6m'm'$ & 1 & 0\tabularnewline
\cline{1-4} \cline{6-9} 
16 & $\frac{4}{m'}$ & 0 & 0 &  & 45 & $\overline{6}'m'2$ & 0 & 0\tabularnewline
\cline{1-4} \cline{6-9} 
17 & $\frac{4'}{m'}$ & 0 & 1 &  & 46 & $\overline{6}'m2'$ & 0 & 1\tabularnewline
\cline{1-4} \cline{6-9} 
18 & $4'22'$ & 0 & 1 &  & 47 & $\overline{6}m'2'$ & 1 & 1\tabularnewline
\cline{1-4} \cline{6-9} 
19 & $42'2'$ & 1 & 1 &  & 48 & $\frac{6}{m'}mm$ & 0 & 1\tabularnewline
\cline{1-4} \cline{6-9} 
20 & $4'm'm$ & 0 & 1 &  & 49 & $\frac{6'}{m}mm'$ & 0 & 1\tabularnewline
\cline{1-4} \cline{6-9} 
21 & $4m'm'$ & 1 & 0 &  & 50 & $\frac{6'}{m'}mm'$ & 0 & 1\tabularnewline
\cline{1-4} \cline{6-9} 
22 & $\overline{4}'2'm$ & 0 & 1 &  & 51 & $\frac{6}{m}m'm'$ & 1 & 1\tabularnewline
\cline{1-4} \cline{6-9} 
23 & $\overline{4}'2m'$ & 0 & 0 &  & 52 & $\frac{6}{m'}m'm'$ & 0 & 0\tabularnewline
\cline{1-4} \cline{6-9} 
24 & $\overline{4}2'm'$ & 1 & 1 &  & 53 & $m'\overline{3}'$ & 0 & 0\tabularnewline
\cline{1-4} \cline{6-9} 
25 & $\frac{4}{m'}mm$ & 0 & 1 &  & 54 & $\overline{4}'32'$ & 0 & 1\tabularnewline
\cline{1-4} \cline{6-9} 
26 & $\frac{4'}{m}m'm$ & 0 & 1 &  & 55 & $\overline{4}'3m'$ & 0 & 0\tabularnewline
\cline{1-4} \cline{6-9} 
27 & $\frac{4'}{m'}m'm$ & 0 & 1 &  & 56 & $m'\overline{3}'m$ & 0 & 1\tabularnewline
\cline{1-4} \cline{6-9} 
28 & $\frac{4}{m}m'm'$ & 1 & 1 &  & 57 & $m\overline{3}m'$ & 0 & 1\tabularnewline
\cline{1-4} \cline{6-9} 
29 & $\frac{4}{m'}m'm'$ & 0 & 0 &  & 58 & $m'\overline{3}'m'$ & 0 & 0\tabularnewline
\cline{1-4} \cline{6-9} 
\end{tabular}
\end{table}

\bibliographystyle{apsrev4-1}
\bibliography{glide_spt_bib_01,bib_June2017,space_group_bib_01,tensor_network_bib_01,cSPT_E8}

\end{document}